\newcommand\mpow[2][m]{\Gamma_{#1}{#2}}
\newcommand\mlow[2][m]{\Lambda_{#1}{#2}}
\newcommand\emptystring{}
\newcommand\rel[1]{\ensuremath{\mathbb{#1}}}
\newcommand\cate[1]{\textbf{\textup{#1}}}
\newcommand\struct[1][\emptystring]{\cate{Rel}\ifthenelse{\equal{}{#1}}{}{[#1]}}
\newcommand\Systems[1]{\cate{Strat}_{#1}}
\newcommand\MSystems[1]{\minion\operatorname{-}\Systems{#1}}
\newcommand\minion{\ensuremath{\mathscr M}}
\DeclareMathOperator{\aut}{Aut}
\DeclareMathOperator{\End}{End}
\DeclareMathOperator{\pol}{Pol}
\DeclareMathOperator{\CSP}{CSP}
\DeclareMathOperator{\PCSP}{PCSP}
\DeclareMathOperator{\BLP}{BLP}
\DeclareMathOperator{\AIP}{AIP}
\DeclareMathOperator{\sAIP}{sAIP}
\DeclareMathOperator{\dom}{dom}
\DeclareMathOperator{\im}{im}
\DeclareMathOperator{\WNU}{WNU}
\newcommand\pr[1]{\ensuremath{\textnormal{pr}_{#1}}}
\newcommand\trivgrp{*}
\newcommand\group{\mathscr G}
\newcommand\ignore[1]{}
\declaretheorem{theorem}
\declaretheorem[numberlike=theorem]{lemma}
\declaretheorem{claim}
\declaretheorem[numberlike=theorem]{corollary}
\declaretheorem[numberlike=theorem]{proposition}
\newenvironment{claimproof}{\begin{proof}}{\end{proof}}
\theoremstyle{definition}
\declaretheorem[numberlike=theorem]{definition}
\author{Antoine Mottet}
\address{Hamburg University of Technology, Research Group for Theoretical Computer Science, Germany}
\email{antoine.mottet@tuhh.de}
\date{} 
\begin{document}
\title{Algebraic and algorithmic synergies between promise and infinite-domain CSPs}
\maketitle

\begin{abstract}
We establish a framework that allows us to transfer results between some constraint satisfaction problems with infinite templates and promise constraint satisfaction problems. On the one hand, we obtain new algebraic results for infinite-domain CSPs giving new criteria for NP-hardness. On the other hand, we show the existence of promise CSPs with finite templates that reduce naturally to tractable infinite-domain CSPs in the scope of the Bodirsky-Pinsker conjecture, but that are not finitely tractable, thereby showing a non-trivial connection between those two fields of research. In an important part of our proof, we also obtain uniform polynomial-time algorithms solving temporal constraint satisfaction problems.
\end{abstract}

\section{Introduction}

A constraint satisfaction problem (CSP) is a decision problem where the input consists of a set of variables taking values in a certain domain and constraints on these variables, and the problem is to decide the existence of an assignment that satisfies all the constraints.
Such problems naturally occur in many areas of computer science both in theory and in practice.
In this setting, two specific types of problems have attracted attention.

On the one hand, \emph{promise} CSPs (PCSPs), where one must decide between the existence of an assignment that satisfies a strong form of the constraints and the non-existence of an assignment that satisfies a weak form of the constraints.
A standard example to keep in mind here is the problem of deciding whether an input graph has chromatic number $\leq k$ or $>\ell$, for some fixed integers $2\leq k\leq \ell$. The complexity of this problem is open, also it is known conditionally on the 2-to-1 conjecture that the problem is NP-hard for $k\geq 4$~\cite{DinurGraphColoring}.
Typically, promise CSPs with \emph{finite} templates are considered.

On the other hand, CSPs with so-called \emph{$\omega$-categorical} constraint languages, which are infinite templates satisfying a certain ``finiteness'' condition.
This class properly contains all finite domain CSPs, for which a complexity dichotomy was proven recently~\cite{Bulatov, Zhuk}, and also contains a number of natural computational problems stemming from applications in artificial intelligence and knowledge reasoning.
A similar dichotomy is conjectured for a large subclass of these $\omega$-categorical constraint languages~\cite{BodPinPon}.

In both settings, it is possible to approach the complexity of the problems by studying the behaviour of certain algebraic objects called \emph{polymorphisms},
although the algebraic theories have essential differences. For promise CSPs, polymorphisms are functions $f\colon A^n\to B$ (for some possibly distinct sets $A,B$), and the complexity of a promise CSP is captured entirely by the so-called \emph{minor conditions} that these polymorphisms satisfy~\cite{BBKO}.
For $\omega$-categorical CSPs, polymorphisms are functions $f\colon A^n\to A$ that can be composed.
Their algebraic theory is richer due to this composition; minor conditions do \emph{not} capture the complexity of the CSPs, and more complex algebraic and topological notions need to be considered.
This is due to~\cite{TopoBirk}, and we refer the interested reader to~\cite{Willard,SymmetriesNotEnough} for further investigations on the exact structure of polymorphisms that can serve as a complexity invariant.

Despite those essential differences, we establish in our first result that CSPs within the scope of the aforementioned conjecture are computationally equivalent to a class of PCSPs with infinite left templates and finite right templates.

\begin{restatable}{theorem}{bpconj}\label{intro-pcsp-csp}
	Every problem $\CSP(\rel C)$ in the scope of the Bodirsky-Pinsker conjecture is polynomial-time equivalent to a problem of the form $\PCSP(\rel A,\rel B)$, where $\rel A$ is in the scope of the Bodirsky-Pinsker conjecture, and $\rel B$ is a finite homomorphic image thereof.
\end{restatable}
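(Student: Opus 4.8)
The plan is to establish the two directions of the polynomial-time equivalence separately, after choosing $\rel A$ to be a structure in the scope of the Bodirsky--Pinsker conjecture for which $\CSP(\rel A)$ is polynomial-time equivalent to $\CSP(\rel C)$ and which admits a well-behaved finite homomorphic image $\rel B$. Concretely, I expect $\rel A$ to be built from $\rel C$ --- equivalently, from a finitely bounded homogeneous structure $\mathfrak B$ of which $\rel C$ is a first-order reduct --- by adjoining a fresh ``coordinate'' relation whose sole purpose is to create a non-degenerate finite quotient, the surjection $h\colon\rel A\to\rel B$ then collapsing the infinite part while retaining enough of the local $\mathfrak B$-types. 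Such an augmentation seems unavoidable: a finite homomorphic image of $\rel C$ itself is forced to be degenerate on the relations carrying the complexity (arbitrarily long orbits must collapse, producing loops), so that $\PCSP(\rel C,\rel C/{\sim})$ has no negative instances and cannot recover $\CSP(\rel C)$ unless the latter is trivial.

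Granted such an $\rel A\to\rel B$, the direction $\PCSP(\rel A,\rel B)\le_p\CSP(\rel C)$ is essentially free, via the identity reduction to $\CSP(\rel A)$: since $\rel A\to\rel B$, every $X$ with $X\not\to\rel B$ also satisfies $X\not\to\rel A$, so on an instance satisfying the promise one has $X\to\rel A$ exactly when $X$ is positive and $X\not\to\rel A$ exactly when $X$ is negative; composing with $\CSP(\rel A)\equiv_p\CSP(\rel C)$ closes this direction. The entire content therefore lies in the converse $\CSP(\rel C)\le_p\PCSP(\rel A,\rel B)$: I must produce a polynomial-time map $X\mapsto Y$ sending every $X$ with $X\to\rel C$ to a $Y$ with $Y\to\rel A$, and every $X$ with $X\not\to\rel C$ to a $Y$ with $Y\not\to\rel B$ (which automatically places $Y$ in the promise, so the reduction is well-defined). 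Here I would exploit finite boundedness of $\mathfrak B$: failure of $X\to\rel C$ is witnessed by obstructions of bounded size, and $Y$ is obtained from $X$ by a gadget replacement that (i) faithfully preserves satisfiability over $\rel C$ and (ii) amplifies every such bounded obstruction into a configuration that is not even satisfiable over $\rel B$, using the new coordinate of $\rel A$ and the fact that $\rel B$ still encodes the relevant local $\mathfrak B$-types.

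The main obstacle I anticipate is precisely (ii): closing the promise gap, i.e.\ guaranteeing $Y\not\to\rel B$ --- and not merely $Y\not\to\rel A$ --- whenever $X\not\to\rel C$. This forces $\rel B$ to be fine enough to detect every bounded obstruction to $\rel C$-satisfiability, in tension with $\rel B$ being finite and the gadget being of polynomial size, and it is where I expect the bulk of the combinatorial argument to lie. I also expect the uniform polynomial-time algorithms for temporal constraint satisfaction problems announced in the abstract to enter here, both to treat the base case in which $\rel C$ is first-order interdefinable with a temporal constraint language and to certify that the construction does not inflate the complexity in the tractable regime; a final, more routine step then checks that the augmented structure $\rel A$ is itself a first-order reduct of a finitely bounded homogeneous structure, hence in the scope of the conjecture, and that $\rel B$ is indeed a finite homomorphic image of it.
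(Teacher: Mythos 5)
Your skeleton is right — the equivalence is indeed split into a trivial direction $\PCSP(\rel A,\rel B)\le_p\CSP(\rel C)$ and a substantive direction $\CSP(\rel C)\le_p\PCSP(\rel A,\rel B)$, and you correctly locate the crux: the reduction must send unsatisfiable instances to instances with no homomorphism to the \emph{finite} structure $\rel B$, and a naive finite quotient of $\rel C$ itself is too degenerate for this. But the proposal stops exactly where the content begins: you never identify what $\rel A$ and $\rel B$ actually are, and the "fresh coordinate relation plus gadget replacement" sketch does not by itself produce them. The paper's construction is concrete: writing $\rel C$ as a reduct of a finitely bounded homogeneous $\rel C^+$ with bounds of size at most $m$, one takes $\rel A=\Gamma_m\rel C$, the $m$-th pp-power whose domain is $C^m$ with unary relations recording which projections of a tuple lie in which relations and binary $eq$-relations recording coincidences between coordinates, and $\rel B=\Gamma_m\rel C/{\aut(\rel C^+)}$, the quotient by the orbit equivalence on $m$-tuples. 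Finiteness of $\rel B$ is exactly oligomorphicity (finitely many orbits of $m$-tuples), and $\rel A$ stays in the scope of the conjecture because it is a pp-power of $\rel C$. The hard direction is then simply $\rel X\mapsto\Gamma_m\rel X$, and its correctness is the fullness statement of Lemma~\ref{lem:reduction-csp-pcsp}: from any homomorphism $h\colon\Gamma_m\rel X\to\Gamma_m\rel C/{\aut(\rel C^+)}$ one reads off an equivalence relation on $X$ and a $\rel C^+$-structure on the quotient, uses finite boundedness to check that this structure omits all obstructions (every $m$-element substructure embeds into $\rel C^+$ because $h$ assigns it an orbit of $m$-tuples), hence embeds into $\rel C^+$, and composes to get $\rel X\to\rel C$. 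This lifting argument — not an obstruction-amplifying gadget in the instance — is where finite boundedness and homogeneity are consumed.

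Two further corrections. First, the uniform temporal algorithms play no role whatsoever in this theorem; it is a pure reduction statement, independent of tractability, and invoking them here is a misdirection. Second, your easy direction should be routed as in the paper: $\PCSP(\Gamma_m\rel C,\Gamma_m\rel C/{\aut(\rel C^+)})$ reduces by the do-nothing reduction to $\CSP(\Gamma_m\rel C)$ (since $\Gamma_m\rel C\to\Gamma_m\rel C\to\Gamma_m\rel C/{\aut(\rel C^+)}$), which is polynomial-time equivalent to $\CSP(\rel C)$ because the two templates pp-construct each other; your version of this step is fine in spirit but presupposes the unconstructed $\rel A$.
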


\subsection{Tractability by infinite CSPs}

A problem $\PCSP(\rel A,\rel B)$ is said to be \emph{finitely tractable} if there exists a finite template $\rel C$ with a polynomial-time tractable CSP and such that there exist homomorphisms $\rel A\to\rel C$ and $\rel C\to\rel B$.
In such a situation, $\PCSP(\rel A,\rel B)$ reduces trivially (by a ``do-nothing'' reduction) to $\CSP(\rel C)$, and therefore $\PCSP(\rel A,\rel B)$ is solvable in polynomial time, as the name suggests.
It is known that there exist finite PCSP templates with a polynomial-time tractable PCSP that are not finitely tractable~\cite{Barto}.
The relationship between finite PCSPs and finite CSPs is still not well understood. 
For example, although the condition of being finitely tractable is an algebraic invariant (i.e., something only depending on the polymorphisms of the template), an algebraic characterization of this property in terms of $\pol(\rel A,\rel B)$ is not known.
On the positive side, it is known that every finite PCSP solvable in first-order logic is finitely tractable~\cite{FOPCSP}.
On the negative side, since Barto's original example a number of other tractable templates that are not finitely tractable have been discovered~\cite{BartoAsimi}.

We are interested here in the related property of being ``tractable by an $\omega$-categorical template,'' in the sense that we are not looking for a finite template $\rel C$ as above,
but an $\omega$-categorical structure instead.
We obtain the existence of PCSP templates that are not finitely tractable, (not) solvable by particular algorithms, and are tractable by reducing to a CSP with an $\omega$-categorical template.

\begin{restatable}{theorem}{omegasandwich}\label{bw-not-finitely-tractable}
	There exist finite PCSP templates $(\rel A,\rel B)$ such that:
	\begin{itemize}
		\item $\PCSP(\rel A,\rel B)$ has width $4$,
		\item $(\rel A,\rel B)$ admits an $\omega$-categorical tractable sandwich,
		\item $(\rel A,\rel B)$ is not finitely tractable,
		\item $\PCSP(\rel A,\rel B)$ is not solvable by BLP+AIP.
	\end{itemize}
\end{restatable}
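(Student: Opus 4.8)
The plan is to build the templates from a \emph{temporal} constraint language --- a first-order reduct $\rel C$ of $(\mathbb Q;<)$ --- and to deduce the four items from the algorithmics of temporal CSPs together with an analysis of $\pol(\rel A,\rel B)$. First I would fix a temporal structure $\rel C$ whose CSP is solvable in polynomial time and is decided by local consistency of width at most $4$; such structures exist, for example among the ``Ord-Horn'' and max-closed temporal languages, and both properties follow from the classification of temporal CSPs, or --- for the subfamilies needing it --- from the uniform polynomial-time algorithm for temporal CSPs established earlier in this paper. (If $\rel C$ is not literally temporal it will at least be $\omega$-categorical with $\CSP(\rel C)$ reducing to a uniform family of temporal CSPs.) I then pick a finite structure $\rel A$ with a homomorphism $\rel A\to\rel C$ and let $\rel B$ be a suitably chosen finite homomorphic image of $\rel C$, so that $\rel A\to\rel C\to\rel B$ holds; the crux of the construction is that this factorisation through the infinite $\rel C$ is, in a precise sense established in the proof of the third item, the only cheap one. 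The second item is then immediate: $\rel C$ is $\omega$-categorical (being a first-order reduct of $(\mathbb Q;<)$) and $\CSP(\rel C)$ is tractable, so $(\rel A,\rel B)$ admits an $\omega$-categorical tractable sandwich.

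The first item follows by a transfer argument: run width-$4$ consistency on the input $\rel X$ --- implementable over the finitely many types of at most $4$ elements of $\rel C$, hence in polynomial time --- and accept iff no contradiction is derived. If $\rel X\to\rel A$ then $\rel X\to\rel C$, so consistency does not fail and we accept; if $\rel X\not\to\rel B$ then $\rel X\not\to\rel C$ (since $\rel C\to\rel B$), so by the width bound for $\CSP(\rel C)$ consistency fails and we reject. Hence $\PCSP(\rel A,\rel B)$ has width $4$ (alternatively one verifies this directly from sufficient polymorphism conditions for bounded-width PCSPs applied to $\pol(\rel A,\rel B)$); optimality of the constant, where insisted upon, is checked separately.

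The core of the theorem is the third item. I would show that no \emph{finite} sandwich can replace $\rel C$ because the homomorphism $\rel A\to\rel C$ genuinely exploits the density and unboundedness of $(\mathbb Q;<)$. Let $\rel D$ be any finite structure with $\rel A\to\rel D\to\rel B$. The relations of $\rel D$ are squeezed between the tuples forced by the image of $\rel A$ and the constraints imposed by $\rel B$, and the templates are engineered so that this squeezing forces $\rel D$ to contain a finite ``collapsing'' configuration --- a short directed cycle or a forced equality of a kind that cannot occur inside a linear order --- on which $\rel D$ pp-interprets, with parameters, a structure admitting a minion homomorphism to the projection minion $\proj$, equivalently with no Taylor polymorphism. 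By the finite-domain dichotomy~\cite{Bulatov,Zhuk}, $\CSP(\rel D)$ is then NP-hard; since $\rel D$ was arbitrary, $(\rel A,\rel B)$ is not finitely tractable. (Equivalently: composing with $\rel A\to\rel D$ and $\rel D\to\rel B$ gives a minion homomorphism $\pol(\rel D)\to\pol(\rel A,\rel B)$, and a cyclic-term and pigeonhole argument on its image --- using that $\rel A$ has no loops in the relevant relations while $|D|$ is bounded --- produces the homomorphism to $\proj$.) I expect this step, and within it the simultaneous design of $\rel A$ and $\rel B$ so that \emph{every} finite sandwich is uniformly hard while an infinite one has width $4$, to be the main obstacle, with the tractability of $\CSP(\rel C)$ --- i.e.\ the uniform temporal algorithm --- as the principal supporting ingredient.

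For the fourth item I would invoke the polymorphism characterisation of $\BLP+\AIP$-solvability of Brakensiek, Guruswami, Wrochna, and \v{Z}ivn\'y: $\PCSP(\rel A,\rel B)$ is solved by $\BLP+\AIP$ if and only if $\pol(\rel A,\rel B)$ satisfies, in all arities, the associated block-symmetric minor conditions. It then suffices to exhibit a minion homomorphism from $\pol(\rel A,\rel B)$ onto a fixed minion violating one of these conditions, which can be read off from the finitely many relations of $\rel A$ and $\rel B$ in parallel with the obstruction of the third item. Combined with the first item, this shows that, unlike for finite CSPs, bounded width does not entail $\BLP+\AIP$-solvability in the promise setting; the $\omega$-categorical sandwich of the second item is then the natural explanation of the tractability that the first and fourth items otherwise leave unexplained.
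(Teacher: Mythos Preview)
Your high-level shape is right, but the plan misfires at two places and misses the paper's actual mechanism.

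\textbf{Item 1 and the sandwich.} You require the temporal sandwich $\rel C$ to have $\CSP(\rel C)$ of width~$4$ and then transfer width to $(\rel A,\rel B)$ via ``$\rel X\not\to\rel B\Rightarrow\rel X\not\to\rel C\Rightarrow$ 4-consistency fails''. The paper's example does \emph{not} satisfy your hypothesis: it takes $\rel C'=(\mathbb Q;I,\neq)$ (with $I$ the relation ``if $a=b$ then $a\geq c$''), whose CSP is tractable but \emph{not} of bounded width. The point is rather that the half-infinite \emph{promise} problem $\PCSP(\rel A',\rel C')$, with $\rel A'$ the substructure on $\{0,1\}$, has width~$4$ by the uniform singleton-arc-consistency result for temporal templates with $pp$ and a semilattice in $\mathscr C$ (\Cref{pp-semilattice}); width~$4$ is then pushed along the $\Gamma_m$ adjunction (\Cref{relating-hierarchies}) to the finite template $(\Gamma_3\rel A',\,\Gamma_3\rel C'/\aut(\mathbb Q;<))$. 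So the concrete finite template is not ``finite $\rel A$, finite homomorphic image $\rel B$'' in any ad hoc sense, but precisely the $\Gamma_3$-power quotiented by $\aut(\mathbb Q;<)$; and the sandwich is $\Gamma_3\rel C'$, which is $\omega$-categorical, tractable, and of \emph{unbounded} width.

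\textbf{Items 3 and 4.} Your main argument for non-finite-tractability --- squeeze an arbitrary finite $\rel D$ between $\rel A$ and $\rel B$ and force a ``collapsing configuration'' that pp-interprets a hard structure --- is not what the paper does and, as stated, is not a proof. The paper instead uses the standard necessary conditions: finite tractability forces cyclic polymorphisms of all large prime arities in $\pol$ of the finite template (Barto--Kozik), and BLP+AIP forces $2$-block-symmetric polymorphisms of all odd arities. Two transfer lemmas then reduce this to $\pol(\rel A',\rel C')$: (i) any polymorphism of $(\Gamma_3\rel A',\Gamma_3\rel C'/\aut(\mathbb Q;<))$ lifts to a polymorphism of $(\rel A',\rel C')$ satisfying the same minor condition \emph{modulo} $\overline{\aut(\mathbb Q;<)}$ (\Cref{lem:lift-poly}); (ii) for identities given by permutations of variables, ``modulo $\overline{\aut(\mathbb Q;<)}$'' can be dropped over temporal structures (\Cref{lem:pseudo-no-pseudo}). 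The heart of the proof is then a direct combinatorial computation (\Cref{prop:I-no-nothing}) showing that $\pol(\rel A',(\mathbb Q;I,\neq))$ contains \emph{no} cyclic operation and \emph{no} $2$-block-symmetric operation; this single proposition kills both finite tractability and BLP+AIP. Your parenthetical ``equivalently: minion homomorphism $\pol(\rel D)\to\pol(\rel A,\rel B)$ plus cyclic terms'' is the right direction, but the work is entirely in showing that cyclic (resp.\ $2$-block-symmetric) operations are absent from $\pol(\rel A',\rel C')$, not in any structural analysis of $\rel D$.
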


There are two parts in the statement above: the first item is about an algorithmic upper bound, while the last two items are about complexity lower bounds.
We describe in the rest of this introduction how those two parts are obtained.

\subsection{Uniform Algorithms for Temporal CSPs}

The PCSP templates we construct in~\Cref{bw-not-finitely-tractable} are derived from \emph{temporal CSPs},
which are problems where the domain is $\mathbb Q$ (or any infinite linearly-ordered set),
and where all the constraints are expressible without quantifiers using the order relation and the equality relation on $\mathbb Q$.
A standard example of a temporal CSP is the betweenness problem, where an instance is given by ternary constraints $(x,y,z)$ where the allowed orderings are $x<y<z$ and $z<y<x$.
This framework is quite large and captures a number of practical problems from the literature in artificial intelligence.
In particular, reasoning problems over Allen's interval algebra or its higher-dimensional variants as studied in~\cite{DBLP:conf/aaai/MukerjeeJ90,TractabilityResultsBlockAlgebra,ReasoningAboutTemporalRelations} can be expressed by constraints in $(\mathbb Q;<)$ using so-called first-order interpretations, and such a connection can be used to obtain complexity classifications for such problems (see e.g.~\cite{ClassificationTransferOriginal,ClassificationTransfer} and the survey by Bodirsky and Jonsson~\cite{BodirskyJonsson}).
A P/NP-complete complexity dichotomy has been shown by Bodirsky and K\'ara~\cite{BodirskyKara}, which has later been refined using the  classical dividing lines of descriptive complexity involving fixed-point logics~\cite{TemporalDescriptive}.
Another proof of the complexity dichotomy for temporal CSPs was given in~\cite{SmoothApproximations}, although relying on the same algorithms as in~\cite{BodirskyKara}.

In order to obtain the first item in~\Cref{bw-not-finitely-tractable}, we are led to investigate once more the complexity of such CSPs
and resolve some of the shortcomings of the original proof, which we describe now.
The polynomial-time algorithms given by Bodirsky and K\'ara are based on a procedure consisting in simplifying iteratively the input by removing variables  while preserving the existence or non-existence of a solution, until a point where it becomes clear that the instance is satisfiable or unsatisfiable.
The set of variables removed at each iteration can be itself obtained by solving a Boolean CSP~\cite{SmoothApproximations}.

There are two aspects of these algorithms that are unsatisfactory.
First, they only work under the assumption that the problem is tractable, as they rely on some assumptions about the polymorphisms of the template under consideration.
For the same reason, despite the recent resolution of the Feder-Vardi conjecture about finite-domain CSPs~\cite{Bulatov,Zhuk,ZhukJournal}, there is still on-going effort into trying to find a ``universal'' polynomial-time algorithm that would be complete for \emph{all} finite-domain CSPs, and where only the soundness of the algorithm uses algebraic assumptions about the template.

Second, the Bodirsky-K\'ara algorithms are difficult to use in a proof setting as they do not give a structural insight into the instances that are accepted.
In particular, this raises an issue when one is trying to expand on the Bodirsky-K\'ara complexity classification by studying CSPs where not only the order between variables is constrained, but also additional constraints are imposed, or when only particular types of instances are considered as in~\cite{ClassificationTransfer}.
More concretely, supposing that one wants to prove that a given instance $\rel X$ of a temporal CSP has a solution, it would suffice to prove that the instance is accepted by one of the algorithms solving temporal CSPs and therefore one wants to have a certificate guaranteeing such an acceptance.
However, the only certificate that an instance is accepted by the current algorithms is essentially a solution to the instance itself.

We revisit this problem by giving new algorithms solving such temporal constraints.
The algorithms can be executed on \emph{any} instance and always run in polynomial time, regardless of the complexity of the CSP.
Moreover, they are always complete, in that any instance rejected by the algorithm is certified to be unsatisfiable.
Finally, whenever the CSP is not NP-hard, then the presented algorithms are also sound and only accept instances that are satisfiable.
\begin{restatable}{theorem}{algotemporal}\label{thm:algotemporal}
	Let $\rel B$ be a temporal structure and let $\rel A$ be a finite structure that admits a homomorphism to $\rel B$.
	One of the following holds:
	\begin{itemize}
		\item $\PCSP(\rel A,\rel B)$ is solvable by local consistency or singleton AIP,
		\item or $\CSP(\rel B)$ is NP-hard.
	\end{itemize}
\end{restatable}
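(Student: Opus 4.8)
The plan is to reduce to the tractable temporal templates via the Bodirsky--K\'ara dichotomy and to match each tractable case to one of the two algorithms using the algebraic theory of~\cite{BBKO}. The key transfer tool is the pull-back of polymorphisms along a fixed homomorphism $h\colon\rel A\to\rel B$: if $g\colon B^{n}\to B$ preserves $\rel B$, then $(a_{1},\dots,a_{n})\mapsto g(h(a_{1}),\dots,h(a_{n}))$ is a polymorphism of the pair $(\rel A,\rel B)$, and this assignment is a minion homomorphism from the clone $\pol(\rel B)$ to $\pol(\rel A,\rel B)$. Hence every minor condition satisfied in $\pol(\rel B)$ holds in $\pol(\rel A,\rel B)$; since solvability by local consistency and by singleton AIP are each guaranteed by suitable families of minor conditions on the polymorphism minion (bounded width by the existence of appropriate symmetric/weak near-unanimity-type operations of all arities, singleton AIP by the alternating-operation conditions of~\cite{BBKO}), it suffices to verify, for each tractable $\rel B$, that $\pol(\rel B)$ satisfies one of these families. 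Completeness of the two relaxations is automatic, as any homomorphism $\rel X\to\rel A$ certifies that neither relaxation rejects $\rel X$.

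By~\cite{BodirskyKara}, either $\CSP(\rel B)$ is NP-hard --- the second alternative --- or $\rel B$ is preserved by one of finitely many operations on $\mathbb Q$: a constant, $\min$, $\mathrm{mi}$, $\mathrm{mx}$, $\mathrm{ll}$, or the order-dual of one of the latter four. The constant case is immediate: discarding empty relations, every relation of $\rel B$ contains a constant tuple, so $\PCSP(\rel A,\rel B)$ has no negative instances and is solved by local consistency. In the $\min$- and $\max$-closed cases, the $n$-ary minimum (respectively maximum) is a polymorphism of $\rel B$ for every $n$ and these operations are totally symmetric, so $\pol(\rel A,\rel B)$ has totally symmetric polymorphisms of all arities and local consistency --- indeed arc consistency --- solves $\PCSP(\rel A,\rel B)$. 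The remaining cases, $\mathrm{mi}$, $\mathrm{mx}$, $\mathrm{ll}$ and their duals, are where the real work lies.

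For these, I would argue at the level of polymorphism minions: determine, for each such operation, a family of operations in $\pol(\rel B)$ witnessing either the system of minor conditions for solvability by local consistency or that for solvability by singleton AIP, obtained by combining the given operation with the automorphisms of $(\mathbb Q;<)$ and, where needed, with the other polymorphisms guaranteed by the case hypothesis. For $\mathrm{ll}$, whose induced clone behaves like a non-commutative, associative, idempotent-up-to-automorphism operation, the natural candidates are the alternating operations underlying singleton AIP, whose defining identities one checks coordinatewise. This can be cross-checked against the Bodirsky--K\'ara algorithm, which at each round deletes a set of variables computed by solving an auxiliary two-element CSP~\cite{SmoothApproximations}: that Boolean CSP is tractable, hence by Schaefer's theorem of affine type --- matching singleton AIP --- or of Horn, dual-Horn, or bijunctive type --- matching local consistency. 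Once $\pol(\rel B)$ is shown to satisfy the appropriate minor conditions, these pull back along $h$ to $\pol(\rel A,\rel B)$, whose corresponding relaxation --- a genuinely finite object, since $\rel A$ is finite --- then solves $\PCSP(\rel A,\rel B)$.

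The main obstacle is precisely this last group of cases: one must pin down the shape of the $\mathrm{mi}$-, $\mathrm{mx}$- and $\mathrm{ll}$-closed temporal relations and their polymorphism clones, and either exhibit the witnessing families of operations directly or re-derive from~\cite{BodirskyKara} a deletion procedure that is transparently a local-consistency or an affine-programming computation --- controlling the bookkeeping so that its polynomially many rounds collapse into a single run of the relevant relaxation. A secondary, foundational point is to fix what ``solvable by local consistency'' and ``solvable by singleton AIP'' should mean when the right-hand template is infinite, and to verify that the pulled-back minor conditions indeed imply soundness of the relaxations for $\PCSP(\rel A,\rel B)$ --- which is exactly the place where finiteness of $\rel A$ is used.
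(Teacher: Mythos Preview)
Your approach has a genuine gap at its central step: you assert that ``solvability by local consistency and by singleton AIP are each guaranteed by suitable families of minor conditions on the polymorphism minion,'' and then plan to verify those conditions for $\pol(\rel A,\rel B)$ via pull-back along $h$. But no such sufficient minor condition is known for bounded width of a PCSP template. The paper itself states this explicitly: for finite PCSP templates, bounded width implies the $\WNU(k)$ identities for all $k\geq 3$, but the two properties are \emph{not equivalent}~\cite{AtseriasDalmau}. So even if you succeed in exhibiting WNU-type operations in $\pol(\rel B)$ and pulling them back, you cannot conclude that $\PCSP(\rel A,\rel B)$ has bounded width. The alternating conditions from~\cite{BBKO} characterise plain AIP, not singleton AIP, so that direction is also unsupported. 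Your plan therefore stalls precisely at the ``main obstacle'' you identify, but for a more fundamental reason than bookkeeping: the implication from identities to algorithmic soundness simply is not available in the promise setting.

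The paper's proof takes a completely different, instance-level route. It does not argue via minor conditions on $\pol(\rel A,\rel B)$ at all. Instead, it shows directly that any instance $\rel X$ passing the relevant relaxation with respect to the finite $\rel A$ admits a homomorphism to $\rel B$, by an inductive decomposition of $\rel X$. The key device is the quotient $\rel B/\Theta$ onto a two-element structure and the associated Boolean clone $\mathscr C$: consistency (or sAIP) of $\rel X$ with respect to $\rel A$ is used to produce a non-empty \emph{free set} $F\subseteq X$ (the preimage of $0$ under a homomorphism $\rel X\to\rel B/\Theta$), one projects $\rel X$ onto $X\setminus F$, observes that the projection still passes the relaxation (\Cref{projection-strategy}), and then uses the polymorphism $pp$ or $ll$ of $\rel B$ to glue a solution on $X\setminus F$ back to a solution on $X$ (\Cref{induction-pp}, \Cref{ll-semilattice}). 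The finiteness of $\rel A$ enters only to guarantee that strategies are \emph{bounded} (every $H_x$ has a minimum), which is what makes the free-set argument go through. This is a structural/algorithmic argument specific to temporal relations, not a transfer of abstract identities.
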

\Cref{thm:algotemporal} indeed provides a solution to the problems mentioned above: any instance of $\CSP(\rel B)$ with $n$ variables can be equivalently seen as an instance of $\PCSP(\rel A,\rel B)$, where $\rel A$ is a substructure of $\rel B$ of size $n$, and both local consistency and singleton AIP run in time that is a polynomial in both the instance size and the template size.

Obtaining a complete classification of the complexity of problems of the form $\PCSP(\rel A,\rel B)$ where $\rel B$ is a temporal structure would be an interesting result that we leave for future research.

\subsection{Identities}

Finally, we obtain some results allowing to relate the algebraic properties of $\rel C$ and those of certain PCSP templates derived from $\rel C$.
Those identity transfers are used to prove that the templates exhibited in~\Cref{bw-not-finitely-tractable} are not solvable by some of the standard algorithmic methods in the area of promise constraint satisfaction and are not finitely tractable, see~\Cref{sect:hardness}.
Moreover, using those transfers we obtain some new unconditional hardness criteria for CSPs with an $\omega$-categorical template.
\begin{theorem}\label{thm:olsak-wnu}
	Let $\rel C$ be a structure and let $\group\leq\aut(\rel C)$ be an oligomorphic subgroup. The following hold:
	\begin{itemize}
	\item	If $\pol(\rel C)$ does not satisfy the Ol\v{s}\'ak identities modulo $\overline{\group}$, then $\CSP(\rel C)$ is NP-hard.
	\item If $\pol(\rel C)$ does not satisfy the WNU identities for some arity $m\geq 3$ modulo $\overline{\group}$, then $\CSP(\rel C)$ is not solvable by local consistency methods.
	\end{itemize}
\end{theorem}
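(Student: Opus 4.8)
The plan is to reduce both statements to facts about promise CSPs whose right‑hand template is finite, via the identity transfers relating $\pol(\rel C)$ to the polymorphisms of a derived PCSP template. Since $\aut(\rel C)$ contains the oligomorphic group $\group$, the structure $\rel C$ is $\omega$-categorical, and the construction behind~\Cref{intro-pcsp-csp}, carried out with $\group$ in place of $\aut(\rel C)$, applies even without assuming $\rel C$ to be in the scope of the Bodirsky--Pinsker conjecture: from $(\rel C,\group)$ I would obtain a PCSP template $(\rel A,\rel B)$ with $\rel A$ $\omega$-categorical and $\CSP(\rel A)$ reducing trivially to $\CSP(\rel C)$ (for instance $\rel A$ a reduct of $\rel C$, or $\rel C$ itself), $\rel B$ a finite structure whose domain is a set of $\group$-orbits, and such that $\pol(\rel A,\rel B)$ agrees, as far as satisfaction of minor conditions is concerned, with the orbit minion $\overline{\group}\backslash\pol(\rel C)$ obtained from $\pol(\rel C)$ by factoring out post-composition with $\overline{\group}$. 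The two properties I want are: (i) $\PCSP(\rel A,\rel B)$ reduces in polynomial time to $\CSP(\rel C)$, so that NP-hardness and failure of local consistency transfer upward to $\rel C$; and (ii) for every minor condition $\Sigma$, $\pol(\rel A,\rel B)$ satisfies $\Sigma$ if and only if $\pol(\rel C)$ satisfies $\Sigma$ modulo $\overline{\group}$. Here (i) is one direction of (the construction underlying)~\Cref{intro-pcsp-csp}, and (ii) has an easy direction — post-composing by an element of $\overline{\group}$ does not change $\group$-orbits and commutes with taking minors, so identities satisfied by $\pol(\rel C)$ modulo $\overline{\group}$ become genuine identities in $\overline{\group}\backslash\pol(\rel C)$ — while its converse amounts to lifting a polymorphism of $(\rel A,\rel B)$ to an operation of $\pol(\rel C)$ realizing the same minor identities up to $\overline{\group}$.

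Granting (i) and (ii), both bullets follow by specialization. For the first, let $\Sigma$ be the Ol\v{s}\'ak identities: if $\pol(\rel C)$ fails them modulo $\overline{\group}$, then by (ii) $\pol(\rel A,\rel B)$ fails the Ol\v{s}\'ak condition; since the minion $\overline{\group}\backslash\pol(\rel C)$ comes from an oligomorphic clone, this yields a minion homomorphism from $\pol(\rel A,\rel B)$ to the projection minion (the $\group$-relativized version of the Barto--Pinsker criterion, the crucial point being that for such minions the Ol\v{s}\'ak condition is the weakest nontrivial minor condition), whence $\PCSP(\rel A,\rel B)$ is NP-hard by~\cite{BBKO}, and so is $\CSP(\rel C)$ by (i). For the second bullet, fix $m\geq 3$ and let $\Sigma$ be the $m$-ary WNU identities: if $\pol(\rel C)$ fails them modulo $\overline{\group}$, then by (ii) $\pol(\rel A,\rel B)$ has no $m$-ary WNU polymorphism; since solvability by local consistency forces WNU polymorphisms of every arity $\geq 3$ to exist (the analogue for promise CSPs of the Barto--Kozik characterization of bounded width), $\PCSP(\rel A,\rel B)$ is not solvable by local consistency, and neither is $\CSP(\rel C)$, because the reduction in (i) preserves solvability by local consistency (it can be taken gadget-free).

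The step I expect to be the main obstacle is the converse direction of (ii), namely the lifting: an abstract polymorphism $f\colon A^n\to B$ of the promise template maps into a set of orbits and need not factor through any operation on $C$, so one has to reconstruct an operation in $\pol(\rel C)$ agreeing with $f$ modulo $\overline{\group}$ on all relevant inputs. The way to handle this is to design the construction so that $\pol(\rel A,\rel B)$ is captured \emph{exactly} by $\overline{\group}\backslash\pol(\rel C)$; this typically forces the domain of $\rel B$ to be a set of $\group$-orbits of $k$-tuples for $k$ large enough, and oligomorphicity of $\group$ then keeps the data finite on each finite set of variables, so that the lift can be produced by a compactness argument over finite substructures (in the spirit of K\"onig's lemma), possibly after expanding $\rel C$ by a Ramsey order to force canonical behaviour. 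A related point of care, already needed for the first bullet, is that — unlike for finite idempotent algebras — ``Ol\v{s}\'ak is the weakest nontrivial minor condition'' is not true for arbitrary minions; establishing it for the minions $\overline{\group}\backslash\pol(\rel C)$ again uses the oligomorphicity of the clone, via the same kind of compactness argument that produces the homomorphism to the projections.
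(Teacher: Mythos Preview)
Your overall strategy---pass to a derived PCSP template, transfer identities, and then invoke known PCSP results---is exactly the shape of the paper's argument. However, two technical choices in your proposal create genuine gaps.

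\textbf{First, the left template must be finite.} You take $\rel A$ to be $\omega$-categorical (a reduct of $\rel C$, or $\rel C$ itself). But the results you want to invoke---Theorem~6.2 of~\cite{BBKO} (no Ol\v{s}\'ak $\Rightarrow$ minion homomorphism to $\pol(\rel H_2,\rel H_r)$) and the Atserias--Dalmau direction (bounded width $\Rightarrow$ WNU of all arities)---are stated and proved for \emph{finite} left templates; the free-structure machinery behind them does not apply to infinite $\rel A$. The paper avoids this by inserting an extra compactness step \emph{before} building the PCSP template: using~\Cref{sca}, one first finds a finite substructure $\rel A\subseteq\rel C$ such that $\pol(\rel A,\rel C)$ already fails the condition modulo $\overline{\group}$, and only then passes to the genuinely finite template $(\mpow{\rel A},\mpow{\rel C}/\group)$ for a suitable $m$. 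Your proposal puts all the compactness at the lifting stage (ii), which is too late.

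\textbf{Second, you overshoot on the Ol\v{s}\'ak item.} You aim for a minion homomorphism to the projection minion, justified by ``Ol\v{s}\'ak is the weakest nontrivial minor condition for minions coming from oligomorphic clones.'' This is not known; in fact the paper explicitly leaves open whether such templates pp-construct every finite structure. What \emph{is} known (and what the paper uses) is the weaker BBKO statement: for a finite PCSP template without Ol\v{s}\'ak, one gets a minion homomorphism to $\pol(\rel H_2,\rel H_r)$ for some $r$. Composing with the local minion homomorphism $\pol(\rel C)\to\pol(\mpow{\rel A},\mpow{\rel C}/\group)$ and applying~\Cref{prop:pp-constructions-from-homo} shows that $\rel C$ pp-constructs $(\rel H_2,\rel H_r)$; NP-hardness then comes from Dinur--Regev--Smyth on approximate hypergraph colouring, \emph{not} from a reduction to projections. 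Your compactness sketch for ``Ol\v{s}\'ak is weakest'' would, if it worked, prove something stronger than the paper claims.

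On your point (ii): the paper does not establish a single template capturing all identities of $\pol(\rel C)$ modulo $\overline{\group}$. Rather, it shows (\Cref{prop:lift-poly-oligo}, \Cref{proj-limit}) that a condition satisfied in $\pol(\mpow{\rel A},\mpow{\rel C}/\group)$ for \emph{all} $m$ lifts to $\pol(\rel C)$ modulo $\overline{\group}$; contrapositively, failure modulo $\overline{\group}$ gives failure at \emph{some} finite $m$, which is all that is needed. No Ramsey assumption is used.
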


Conditionally on the hardness of $3$- vs. $\ell$-coloring, we obtain the following.
\begin{theorem}\label{no-6-siggers-hard}
	Suppose that $\PCSP(\mathbb K_3,\mathbb K_\ell)$ is NP-hard for all $\ell\geq 3$.
	Let $\rel C$ be a structure and let $\group\leq\aut(\rel C)$ be an oligomorphic subgroup.
	If\/ $\pol(\rel C)$ does not satisfy the 6-ary Siggers identity modulo $\overline{\group}$, then $\CSP(\rel C)$ is NP-hard.
\end{theorem}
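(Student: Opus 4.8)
The plan is to show that $\CSP(\rel C)$ is NP-hard by producing, for a suitable fixed $\ell\ge 3$, a polynomial-time reduction from $\PCSP(\mathbb K_3,\mathbb K_\ell)$ to $\CSP(\rel C)$; since $\PCSP(\mathbb K_3,\mathbb K_\ell)$ is NP-hard by hypothesis, the conclusion follows. Writing $\CSP(\rel C)=\PCSP(\rel C,\rel C)$, I would invoke the minion-homomorphism reduction --- in the form valid for oligomorphic templates, which is the same mechanism used to derive \Cref{thm:olsak-wnu} and is exactly of the type studied here (a finite promise template reducing to an infinite-domain CSP, cf.\ \Cref{intro-pcsp-csp}) --- so that it suffices to produce a uniformly continuous minion homomorphism $\xi\colon\pol(\rel C)\to\pol(\mathbb K_3,\mathbb K_\ell)$ that is invariant under post-composition by elements of $\overline{\group}$. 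The whole content then lies in building $\xi$ from the hypothesis that $\pol(\rel C)$ has no pseudo-Siggers operation of arity $6$ over $\overline{\group}$.

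To build $\xi$, I would first record the combinatorial meaning of the minor conditions satisfied by the coloring minions $\pol(\mathbb K_3,\mathbb K_\ell)$: a finite minor condition $\Sigma$ holds in $\pol(\mathbb K_3,\mathbb K_\ell)$ if and only if the free structure $\mathbf F_\Sigma(\mathbb K_3)$ of $\Sigma$ over $\mathbb K_3$ admits a homomorphism to $\mathbb K_\ell$, and $\mathbf F_\Sigma(\mathbb K_3)$ is loopless --- hence $\ell$-colorable for some finite $\ell$ --- precisely when $\Sigma$ does not entail the $6$-ary Siggers identity. The key point is that a loop of $\mathbf F_\Sigma(\mathbb K_3)$ is a pair of coordinate-wise distinct tuples over $\{1,2,3\}$ that $\Sigma$ identifies, and the six ordered pairs of distinct elements of a three-element set are exactly the six columns of $s(x,y,x,z,y,z)\approx s(y,x,z,x,z,y)$, so a loop yields a derived $6$-ary operation satisfying that identity, and, conversely, any minor condition entailing that identity produces such a loop. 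Since $\pol(\rel C)$ pseudo-satisfies over $\overline{\group}$ no minor condition entailing $6$-ary Siggers, every minor condition it does pseudo-satisfy over $\overline{\group}$ has a loopless free structure over $\mathbb K_3$; and here the oligomorphicity of $\group$ is used to bound the chromatic numbers of all the free structures relevant for assembling $\xi$ by a single $\ell=\ell(\rel C,\group)$, essentially because only finitely many orbit-types of tuples are involved. A compactness argument over this finite orbit data then glues the satisfying assignments (one in $\pol(\mathbb K_3,\mathbb K_\ell)$ per pseudo-satisfied minor condition) into the desired uniformly continuous minion homomorphism $\xi$, exactly as in the proof of \Cref{thm:olsak-wnu}.

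The step I expect to be the main obstacle is precisely this uniform bound on $\ell$: without the oligomorphicity of $\group$ one would obtain only, per instance size, \emph{some} working $\ell$, i.e.\ reductions from $\PCSP(\mathbb K_3,\mathbb K_{\ell_n})$ with $\ell_n\to\infty$, which does not use the hypothesis, each $\PCSP(\mathbb K_3,\mathbb K_\ell)$ being assumed hard for a fixed $\ell$. Pinning down a single $\ell$ requires an orbit-counting analysis of the pseudo-Siggers condition and of the free structures over $\mathbb K_3$ that it controls; the remaining ingredients are routine, the coloring characterization being the standard free-structure correspondence and the compactness packaging together with the final reduction being those already used for \Cref{thm:olsak-wnu}. (For finite $\rel C$ the hypothesis is not needed: there $\pol(\rel C)$ satisfies the $6$-ary Siggers identity whenever $\CSP(\rel C)$ is tractable, since the idempotent reduct of the core has a Siggers term; the genuine content of the statement is for infinite $\rel C$.)
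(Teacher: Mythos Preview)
Your overall shape is right and matches the paper: one shows that $\rel C$ pp-constructs $(\mathbb K_3,\mathbb K_r)$ for some fixed $r\geq 3$, whence $\PCSP(\mathbb K_3,\mathbb K_r)$ reduces to $\CSP(\rel C)$ and NP-hardness follows from the hypothesis. The free-structure reading of the Siggers condition that you give (a loop in $\mathbf F_\Sigma(\mathbb K_3)$ corresponds exactly to a derived $6$-ary Siggers term) is also correct, and is essentially the content of the BBKO result the paper invokes as a black box.

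Where your argument has a genuine gap is precisely the step you flag as the main obstacle: obtaining a \emph{single} $\ell$. Your proposed mechanism --- ``oligomorphicity bounds the chromatic numbers of all the relevant free structures because only finitely many orbit-types of tuples are involved'' --- does not do the job as stated. The free structure $\mathbf F_{\pol(\rel C)}(\mathbb K_3)$ has as vertices the $3$-ary polymorphisms of $\rel C$, an infinite set; quotienting by post-composition with $\overline{\group}$ still leaves infinitely many classes, and a loopless infinite graph can perfectly well have infinite chromatic number. Oligomorphicity controls orbit counts on $C^k$, not on the set of polymorphisms, so an ``orbit-counting analysis'' of the kind you sketch does not by itself produce a finite colouring. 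Similarly, the ``compactness gluing'' you describe (one assignment in $\pol(\mathbb K_3,\mathbb K_\ell)$ per pseudo-satisfied minor condition) does not assemble into a minion homomorphism without first passing through a finite object.

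The paper's resolution is different in a concrete way, and it is the missing idea in your sketch. Rather than bounding chromatic numbers directly, the paper factors through a \emph{finite} minion. First, a standard compactness argument (\Cref{sca}) gives a finite $\rel A\subseteq\rel C$ such that already $\pol(\rel A,\rel C)$ fails Siggers modulo $\overline{\group}$. Then the projective-limit statement (\Cref{proj-limit}, or the remark following it) yields a single $m$ for which the finite minion $\pol(\mpow{\rel A},\mpow{\rel C}/{\group})$ fails Siggers outright. At this point one is in the finite-template world, and BBKO's theorem produces a minion homomorphism to $\pol(\mathbb K_3,\mathbb K_r)$ for a fixed $r$ depending only on the sizes of $\mpow{\rel A}$ and $\mpow{\rel C}/{\group}$. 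Composing with the local minion homomorphism $\xi^\infty_m\colon\pol(\rel C)\to\pol(\mpow{\rel A},\mpow{\rel C}/{\group})$ from \Cref{directedsystem} and applying \Cref{prop:pp-constructions-from-homo} gives the pp-construction. In short: the ``orbit-counting'' you gesture at must be implemented as passage to the finite quotient $\mpow{\rel C}/{\group}$ via the $\Gamma_m$ machinery; once you do that, the uniform $\ell$ comes for free from the finiteness, and no separate chromatic-number bound is needed.
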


Due to space restrictions, some proofs are omitted. All results mentioned here are original unless stated otherwise.
 
\section{Definitions}\label{sect:defs}

For $n\geq 1$ we denote by $[n]$ the set $\{1,\dots,n\}$.
For $i_1,\dots,i_k\in\{1,\dots,n\}$ and an $n$-tuple $a$, we write $\pr{i_1,\dots,i_k}(a)$ for the tuple $(a_{i_1},\dots,a_{i_k})$.

All relational structures considered here are at most countable and have finite signature.
Blackboard bold letters are used to denote structures, while the corresponding standard font letters are used to denote their domains.
A \emph{homomorphism} from a structure $\rel A$ to a structure $\rel B$ is a map $h\colon A\to B$ such that for every relation $R$ in the signature of $\rel A$ and $\rel B$, and all $(a_1,\dots,a_r)\in R^{\rel A}$, one has $(h(a_1),\dots,h(a_r))\in R^{\rel B}$.
We write $\rel A\to\rel B$ if there exists a homomorphism from $\rel A$ to $\rel B$.
A \emph{partial} homomorphism is a homomorphism from an induced substructure of $\rel A$ to $\rel B$.

For a fixed pair $(\rel A,\rel B)$ of structures such that $\rel A$ admits a homomorphism to $\rel B$, the problem $\PCSP(\rel A,\rel B)$ is the promise problem of deciding whether an input structure $\rel X$ admits a homomorphism to $\rel A$ or no homomorphism to $\rel B$.
The pair $(\rel A,\rel B)$ is called the \emph{template} of the problem.
When $\rel A=\rel B$, we denote this problem by $\CSP(\rel A)$.

A \emph{primitive positive (pp) formula} is a first-order formula that is built by using atomic formulas, conjunctions, and existential quantifications only.
Given a $d\geq 1$ and $R\subseteq (A^d)^r$, we say that $R$ is pp-definable in $\rel A$ if there exists a pp-formula $\varphi$ with $d\times r$ free variables such that $(a^1,\dots,a^r)\in R$ holds if, and only if, $\rel A\models \varphi(a^1_1,\dots,a^1_d,\dots,a^r_1,\dots,a^r_d)$.
We say that $(\rel C,\rel D)$ is a \emph{pp-power} of $(\rel A,\rel B)$ if $\rel C=(A^d;R_1,\dots,R_k), \rel D=(B^d;S_1,\dots,S_k)$ for some $d\geq 1$ and for all $i$ there exists a pp-formula $\varphi_i$ such that $R_i$ is the relation defined by $\varphi_i$ in $\rel A$ and $S_i$ is the relation defined by $\varphi_i$ in $\rel B$.
We say that $(\rel C,\rel D)$ is a \emph{homomorphic relaxation} of $(\rel A,\rel B)$ if there exist homomorphisms $\rel C\to\rel A$ and $\rel B\to\rel D$.
Finally, we say that $(\rel C,\rel D)$ is \emph{pp-constructible} in $(\rel A,\rel B)$ if it is a homomorphic relaxation of a pp-power of $(\rel A,\rel B)$.
An easy observation is that if $(\rel C,\rel D)$ is pp-constructible in $(\rel A,\rel B)$ then  $\PCSP(\rel C,\rel D)$ reduces to $\PCSP(\rel A,\rel B)$ by a polynomial-time reduction.

If $\theta$ is an equivalence relation on $\rel B$, $\rel B/\theta$ denotes the structure %
 whose domain consists of the equivalence classes of $\theta$ and whose relations are the images of the relations of $\rel B$ under the canonical projection.
If $\theta$ is an equivalence relation on an induced substructure $\rel C$ of $\rel B$, we abuse notation and write $\rel B/{\theta}$ for the structure $\rel C/{\theta}$.
If $\group\subseteq\aut(\rel B)$, we denote by $\rel B/{\group}$ the quotient of $\rel B$ by the relation containing pairs $(a,b)$ where $a,b$ are in the same orbit under $\group$.

\subsection{Countably categorical structures}

If $\group$ is a group of permutations on a set $B$, we denote the set of orbits of $B^k$ under the natural action of $\group$ on $k$-tuples by $B^k/{\group}$.
We say that $\rel B$ is \emph{$\omega$-categorical} if the set $B^k/{\aut(\rel B)}$ is finite for all $k$, in which case we also say that $\aut(\rel B)$ is \emph{oligomorphic}.
A \emph{reduct} of $\rel B$ is a structure $\rel B'$ with the same domain and such that every relation of $\rel B'$ admits a first-order definition in the structure $\rel B$.
Note that this implies that $\aut(\rel B)\subseteq\aut(\rel B')$ and therefore if $\rel B$ is $\omega$-categorical, so is $\rel B'$.
We say that $\rel B$ is \emph{finitely bounded homogeneous} if: every two isomorphic finite substructures of $\rel B$ are in the same orbit under $\aut(\rel B)$, and there is a bound $k\in\mathbb N$ such that every structure that does \emph{not} embed into $\rel B$ contains a $k$-element structure that also does not.
Every finitely bounded homogeneous structure is $\omega$-categorical,
and Bodirsky and Pinsker conjectured that the CSP of a reduct of a finitely bounded homogeneous structure is in P or NP-complete.

\subsection{Polymorphisms}

A \emph{polymorphism} of the template $(\rel A,\rel B)$ is a homomorphism $\rel A^n\to\rel B$.
The set $\pol(\rel A,\rel B)$ of all polymorphisms has the following algebraic structure: for every $f\in\pol(\rel A,\rel B)$ of arity $n$, every $\sigma\colon\{1,\dots,n\}\to\{1,\dots,m\}$, and every $u\in\End(\rel B)$, the operations defined by $(a_1,\dots,a_m)\mapsto f(a_{\sigma(1)},\dots,a_{\sigma(n)})$ and $u\circ f$ are also elements of $\pol(\rel A,\rel B)$. We denote the former by $f^{\sigma}$, and this notation generalizes naturally to functions $A^X\to B$ and $\sigma\colon X\to Y$.
Let $\minion{}$ be a set of operations $A^X\to B$ for possibly different finite sets $X$.
We denote by $\minion{X}$ the set of all $f\in\minion{}$ with domain $A^X$. 
We say that $\minion{}$ is a minion if it is closed under the operations $f\mapsto f^\sigma$ for all $f\in\minion{[n]}$ and all $\sigma\colon [n]\to [m]$.

\subsection{Algorithmic notions}

In the following, $\rel A$ is a finite structure and $\rel X$ is an instance of $\CSP(\rel A)$.

\subsubsection{Local consistency}
Let $\rel X$ be an instance of $\CSP(\rel A)$, and let $k\geq 1$.
A \emph{potato system} from $\rel X$ to $\rel A$ is a collection $(D_x)_{x\in X}$ of non-empty subsets of $A$ such that for every $(x_1,\dots,x_k)\in R^{\rel X}$, and every $i\in\{1,\dots,k\}$ and $a\in D_{x_i}$, there exists $t\in R^{\rel A}$ such that $t_i=a$ and $t_j\in D_{x_j}$ for all $j\in\{1,\dots,k\}$.
We say that $\rel X$ is \emph{singleton arc consistent} with respect to $\rel A$ if there exists a potato system $(D_x)_{x\in X}$ from $\rel X$ to $\rel A$ such that for every $x\in X$ and $a\in D_x$, there exists a potato system $(D'_y)_{y\in X}$ from $X$ to $A$ such that  $D'_y\subseteq D_y$ for every $y\in X$ and $D'_x=\{a\}$.

A \emph{non-trivial k-strategy} from $\rel X$ to $\rel A$ is a non-empty collection $H$ of partial homomorphisms from $\rel X$ to $\rel A$ satisfying the following conditions:
\begin{itemize}
	\item if $f\in H$ and $K\subseteq\dom(f)$, then $f|_K \in H$,
	\item if $f\in H$, $|\dom(f)| < k$, and $a\in X$, there exists $g\in H$ such that $\dom(g)=\dom(f)\cup\{a\}$ and $g$ extends $f$.
\end{itemize}
We write $H_{x,y}$ for $\{ (f(x),f(y)) \mid f\in H, \dom(f)=\{x,y\}\}$, and similarly  $H_x=\{f(x) \mid f\in H, x\in\dom(f)\}$.

We say that $\rel X$ is $k$-consistent with respect to $\rel A$ if there exists a non-trivial $k$-strategy from $\rel X$ to $\rel A$.
The existence of a non-trivial $k$-strategy from $\rel X$ to $\rel A$ can be decided by the standard $k$-consistency algorithm in time polynomial in both $|\rel X|$ and $|\rel A|$.
The union of two strategies is a strategy, and therefore there always exists an inclusion-maximal strategy $H$ from $\rel X$ to $\rel A$.
The maximality implies that if $h\colon\rel X\to\rel A$ is a homomorphism, then $h|_K\in H$ for every $K\subseteq X$ of size at most $k$.
In particular, if there exists a homomorphism $h\colon\rel X\to\rel A$, then there exists a non-trivial strategy.

A semilattice operation is an operation $f\colon A^2\to A$ that is commutative, associative, and such that $f(a,a)=a$ for all $a\in A$.
It is known that if $\rel A$ admits a semilattice polymorphism, then every instance $\rel X$ of $\CSP(\rel A)$ with a potato system admits a homomorphism $h\colon\rel X\to\rel A$ (see, e.g.,~\cite{FederVardi}).

Suppose that $k$ is larger than the arity of every relation in the signature of $\rel X$ and $\rel A$.
If there exists a non-trivial $k$-strategy $H$ from $\rel X$ to $\rel A$, then it is folklore and easy to prove that $\rel X$ is singleton arc consistent with respect to $\rel A$, where one can take $D_x=H_x$ for all $x\in X$ and $D'_y=\{b \mid (a,b)\in H_{x,y}\}$ for all $a\in D_x$ and $y\in X$.

Finally, $(\rel A,\rel B)$ is said to have width $k$ if every instance $\rel X$ admitting a non-trivial $k$-strategy to $\rel A$ admits a homomorphism to $\rel B$; we say $(\rel A,\rel B)$ is solvable by local consistency if it has width $k$ for some $k$;
similarly, we say that $(\rel A,\rel B)$ is solvable by singleton arc consistency if every instance singleton arc consistent with respect to $\rel A$ admits a homomorphism to $\rel B$.

\subsubsection{Relaxations}
The existence of a homomorphism $\rel X\to\rel A$ can be expressed as the existence of a $\{0,1\}$ solution to the following finite system of equations.
The variables are as follows: for each $x\in \rel X$ and $a\in A$, one has a variable $\lambda_x(a)$, and for each relation symbol $R$ and $y\in R^{\rel X}$ and $b\in R^{\rel A}$ one has a variable $\lambda_y(b)$. The equations are:
\begin{align*}
	\sum_{a\in A} \lambda_x(a) &=1 & \parbox{5cm}{for all $x\in \rel X$}\\
	\sum_{b\in R^{\rel A}} \lambda_y(b) &= 1 & \parbox{5cm}{for all $y\in R^{\rel X}$} \\
	\sum_{\substack{b\in R^{\rel A} \\ b_i = a}} \lambda_y(b) &= \lambda_{x_i}(a)  & \parbox{5cm}{for all $y=(x_1,\dots,x_r)\in R^{\rel X}$,\\ $i\in\{1,\dots,r\}$, and $a\in A$}\\
\end{align*}
The affine integer programming relaxation of $\rel X$, denoted by $\AIP(\rel X,\rel A)$, is the same system where the variables are now considered as integer variables, while the linear programming relaxation $\BLP(\rel X,\rel A)$ denotes this system where the variables are considered as non-negative rationals.
It is possible to check in polynomial time in $|\rel X|$ and $|\rel A|$ whether $\BLP(\rel X,\rel A)$ and $\AIP(\rel X,\rel A)$ admit a solution~\cite{LP,HermiteSmith}.

Let $\sAIP(\rel X,\rel A)$ be a shorthand for the property that there is a collection $(D_x)_{x\in X}$ of non-empty subsets of $A$ such that for every $x\in X$ and every $a\in D_x$, there exists a solution to $\AIP(\rel X,\rel A)$ such that $\lambda_x(a)=1$, $\lambda_x(a')=0$ for all $a'\neq a$, and for all $x'\in X$ and $a'\not\in D_{x'}$, one has $\lambda_{x'}(a')=0$.
If there exists a homomorphism $h\colon\rel X\to\rel A$, then $\sAIP(\rel X,\rel A)$ holds by taking $D_x$ to be the set $\{h(x)\}$ for all $x\in X$.
We say that $\PCSP(\rel A,\rel B)$ is solved by singleton AIP if whenever $\sAIP(\rel X,\rel A)$ holds, then there exists a homomorphism $\rel X\to\rel B$. The condition $\sAIP(\rel X,\rel A)$ can be checked in polynomial time, since it can be checked by solving a polynomial amount of affine integer systems of equations.
It is known that if $\rel A$ has two elements and admits the ternary minority operation as a polymorphism, then $\CSP(\rel A)$ is solvable by AIP~\cite{BBKO}.

The algorithm called ``BLP+AIP'' works by first finding an interior point for $\BLP(\rel X,\rel A)$, setting variables to $0$ when they are not included in the support of an interior point, and then solving the modified $\AIP(\rel X,\rel A)$~\cite{BLPAIP}.
We say that $\PCSP(\rel A,\rel B)$ is solvable by BLP+AIP if every instance $\rel X$ for which solutions as above exist admits a homomorphism to $\rel B$. 
\section{Uniform Algorithms for Temporal Constraints}\label{sect:algorithms}
Let $\rel B$ be a temporal structure, i.e., a structure over the set $\mathbb Q$ and whose relations are all definable by a boolean combinations of atomic formulas of the form $x<y$.
We assume here that $\rel B$ has $<$ as part of its relations, and that if we let $m$ be the maximal arity of the relations of $\rel B$, then all relations of arity at most $m$ that are pp-definable in $\rel B$ are themselves relations of $\rel B$.
Since $\rel B$ is $\omega$-categorical, there are only finitely many such relations.

We let $\Theta$ be the equivalence relation on $\mathbb Q_{\geq 0}$ with the two equivalence classes $\{0\}$ and $\mathbb Q_{>0}$,
and we consider the expansion $\rel B^*=(\rel B,\{0\},\Theta)$ of $\rel B$ by the unary relation $\{0\}$ and the binary relation $\Theta$.
We denote by $\mathscr C$ the set of all functions $f\in\pol(\rel B^*)$ seen as functions on set of equivalence classes of $\Theta$ (i.e., $\mathscr C$ is the image of $\pol(\rel B^*)$ under the function that maps $f$ to the function that $f$ induces on the 2-element set $\mathbb Q_{\geq 0}/{\Theta}$).

It is proved in~\cite{SmoothApproximations} that if $\CSP(\rel B)$ is not NP-hard, then $\mathscr C$ is a non-trivial clone and, by the classical result of Post~\cite{post1941two}, it contains one of 3 types of operations: a binary semilattice operation, the ternary majority operation, or the ternary minority operation.
As it turns out, the majority case does not need to be handled separately in this setting:
\begin{restatable}{lemma}{majsemilattice}\label{maj-implies-semilattice}
	If\/ $\mathscr C$ contains the majority operation, then it contains a semilattice operation.
\end{restatable}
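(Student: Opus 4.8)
The plan is to take a ternary $f\in\pol(\rel B,0,\Theta)$ whose image in $\mathscr C$ is the majority operation and to feed an automorphism of $(\mathbb Q;<)$ into one of its coordinates, producing a binary polymorphism whose image in $\mathscr C$ is the join $\max$ of $\mathbb Q_{\geq 0}/\Theta$ with respect to the order $\{0\}<\mathbb Q_{>0}$, which is a semilattice operation. The one idea needed is that $0$ is \emph{not} the least element of $\mathbb Q$: there is an automorphism $\alpha$ of $(\mathbb Q;<)$ with $\alpha(0)>0$, for instance $x\mapsto x+1$, and since every temporal structure is a reduct of $(\mathbb Q;<)$ this $\alpha$ also lies in $\aut(\rel B)$. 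I would then put $g(x,y):=f(x,y,\alpha(x))$.

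First I would check that $g\in\pol(\rel B,0,\Theta)$. As $\pol(\rel B)$ is a clone containing $f$, the projections, and $\alpha$, the operation $g$ is a polymorphism of $\rel B$. For every $a\in\mathbb Q_{\geq 0}$ we have $\alpha(a)\geq\alpha(0)>0$, so $\alpha$ maps $\mathbb Q_{\geq 0}$ into the class $\mathbb Q_{>0}$; hence for all $a,b\in\mathbb Q_{\geq 0}$ the triple $(a,b,\alpha(a))$ belongs to $\mathbb Q_{\geq 0}^{3}$ and has third coordinate in $\mathbb Q_{>0}$, and since $f$ preserves $\Theta$ the value $g(a,b)$ lies in $\mathbb Q_{\geq 0}$ and its $\Theta$-class depends only on the $\Theta$-classes of $a$ and $b$. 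In particular $g(0,0)=f(0,0,\alpha(0))\in\{0\}$, because two of the three arguments equal $0$ and $f$ induces the majority operation, so $g$ fixes the constant $0$ as well; thus $g\in\pol(\rel B,0,\Theta)$. The same computation shows that in the definition of $g$ the third argument of $f$ is always evaluated in the class $\mathbb Q_{>0}$, so the operation that $g$ induces on $\mathbb Q_{\geq 0}/\Theta$ is $(u,v)\mapsto\operatorname{maj}(u,v,\mathbb Q_{>0})$, which on a two-element set equals $u\vee v=\max(u,v)$. Hence $\mathscr C$ contains the semilattice operation $\max$.

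I do not anticipate a real obstacle: the only insight is that $0$ lies in the interior of the order and can be pushed into the positive class by an automorphism, and the rest is an unwinding of the definitions. I note that the argument is unconditional --- it uses neither that $\CSP(\rel B)$ is not NP-hard nor that $\mathscr C$ is non-trivial --- and that one cannot produce $\min$ in the same way, since no automorphism of $(\mathbb Q;<)$ maps $\mathbb Q_{\geq 0}$ into $\{0\}$; but $\max$ is already all that is needed.
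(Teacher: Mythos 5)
Your proof is correct and is essentially identical to the paper's: the paper also picks an automorphism $\alpha$ of $(\mathbb Q;<)$ with $\alpha(0)>0$ and sets $g(x,y):=f(x,y,\alpha(y))$ (you use $\alpha(x)$ in the third slot, which makes no difference), concluding that $g$ induces the join on the two-element quotient. No gaps.
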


\begin{definition}[Similarly as~\cite{TemporalDescriptive}]
Let $\rel X$ be an instance of $\CSP(\rel B)$ with domain $X$ and let $F\subseteq X$.
We define the \emph{projection of $\rel X$ onto $X\setminus F$}, denoted by $\rel X\setminus F$, and the \emph{contraction} of $F$ in $\rel X$, denoted by $\rel X/{F}$, as structures in the same signature as $\rel B$ and defined as follows: %
\begin{itemize}
\item The domain of $\rel X\setminus F$ is $X\setminus F$. For every relation symbol $R$ of arity $m$, and every $(x_1,\dots,x_m)\in R^{\rel X}$, let $x_{i_1},\dots,x_{i_r}$ be an enumeration of the elements of $\{x_1,\dots,x_m\}\setminus F$. Let $\tilde R$ be the symbol corresponding to the relation defined by $\exists_{y\in F} y\, (R(x_1,\dots,x_m))$ in $\rel B$.
Such a symbol exists by our assumption that all relations that have small arity and a pp-definition in $\rel B$ are part of the relations of $\rel B$.
Then $\tilde R^{\rel X\setminus F}$ contains $(x_{i_1},\dots,x_{i_r})$.

\item The domain of $\rel X/F$ is $X$. For every relation symbol $R$ of arity $m$, and every $(x_1,\dots,x_m)\in R^{\rel X}$, let $(x_1,\dots,x_m)\in\tilde R^{\rel X/{F}}$, where $\tilde R$ is the symbol corresponding to the relation defined by $R(x_1,\dots,x_m)\land\bigwedge_{z,z'\in F}z=z'$ in $\rel B$.
\end{itemize}
\end{definition}

We say that a $k$-strategy $H$ from $\rel X$ to a temporal structure is \emph{well-ordered} if every non-empty subset of $H_x$ admits a minimal element; $\rel X$ is \emph{well-ordered} $k$-consistent if it has a non-trivial well-ordered $k$-strategy.
Similarly, we say that $\rel X$ is \emph{well-ordered} (singleton) arc consistent if the potato system witnessing the (singleton) arc consistency  consists of sets whose non-empty subsets have a minimal element.

\begin{restatable}{lemma}{projectionstrategy}\label{projection-strategy}
	Let $\rel B$ be a temporal structure with relations of arity at most $k$.
	Let $\rel X$ be an instance, let $H$ be a $k$-strategy from $\rel X$ to $\rel B$, and let $F\subseteq X$.
	The following hold:
	\begin{enumerate}
	\item $G=\{ h|_{K\cap (X\setminus F)} \mid h\in H, \dom(h)=K\}$
		is a $k$-strategy from $\rel X\setminus F$ to $\rel B$. If $H$ is well-ordered, so is $G$.
	\item If $H_{x,y}$ is a subset of the diagonal for all $x,y\in F$, then $H$ is a $k$-strategy from $\rel X/{F}$ to $\rel B$.
	\item If $\rel X$ is (well-ordered) singleton arc consistent with respect to $\rel B$, so is $\rel X\setminus F$.
	\item Let $\rel A$ be a finite substructure of $\rel B$. If $\sAIP(\rel X,\rel A)$ holds, then $\sAIP(\rel X\setminus F,\rel A)$ holds.
	\end{enumerate}
\end{restatable}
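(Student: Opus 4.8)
The plan is to handle the four items separately, each by a direct verification against the definitions of $k$-strategy, $\rel X\setminus F$, $\rel X/F$, and $\sAIP$. The common thread is that $\rel X\setminus F$ and $\rel X/F$ are defined so that their relations are exactly the pp-definable relations obtained from those of $\rel X$ by existentially quantifying (resp.\ identifying) the variables in $F$, and polymorphisms / strategies / AIP-solutions all interact well with pp-definitions.

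\emph{Item (1).} First I would check that $G$ is non-empty (it contains the restrictions of members of $H$, and $H$ is non-empty) and closed under restriction (immediate, since a restriction of $h|_{K\cap(X\setminus F)}$ is $h|_{K'\cap(X\setminus F)}$ for $K'\subseteq K$, and $h|_{K'}\in H$). The substantive point is the extension property: given $g=h|_{K\cap(X\setminus F)}\in G$ with $|\dom(g)|<k$ and a new variable $a\in X\setminus F$, I need $g$ to extend inside $G$ to a partial map defined on $\dom(g)\cup\{a\}$. Since $|\dom(h)\cap(X\setminus F)|<k$ but $\dom(h)$ itself may already have size $k$, I cannot just extend $h$; instead I replace $h$ by its restriction $h|_{K\cap(X\setminus F)}=g$ (which lies in $H$), note $|\dom(g)|<k$, and apply the extension property of $H$ twice if $a\notin\dom(h)$ — actually once suffices to add $a$ — giving $g'\in H$ with $\dom(g')=\dom(g)\cup\{a\}$; then $g'=g'|_{\dom(g')\cap(X\setminus F)}\in G$. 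Finally I must verify that every $g\in G$ is a partial homomorphism $\rel X\setminus F\to\rel B$: given a tuple $(x_{i_1},\dots,x_{i_r})\in\tilde R^{\rel X\setminus F}$ coming from $(x_1,\dots,x_m)\in R^{\rel X}$, the map $h\in H$ witnessing membership in $G$ (chosen with domain containing all of $x_1,\dots,x_m$, using the extension property to enlarge its domain if necessary while staying below size $k$ — here is where the hypothesis $k\geq$ arity is used) satisfies $(h(x_1),\dots,h(x_m))\in R^{\rel B}$, and projecting onto the non-$F$ coordinates and using that $\tilde R$ is defined by existentially quantifying the $F$-coordinates gives $(h(x_{i_1}),\dots,h(x_{i_r}))\in\tilde R^{\rel B}$. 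Boundedness is preserved because $G_a\subseteq H_a$ for $a\in X\setminus F$ and a subset of a set with a minimal element need not have one — so here I must be slightly careful and instead observe $G_a = H_a$ (every $h\in H$ defined at $a$ restricts to an element of $G$ defined at $a$ with the same value), hence $G_a$ inherits the minimal element of $H_a$.

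\emph{Item (2).} Here $\rel X/F$ has the same domain $X$ as $\rel X$, and its relations $\tilde R^{\rel X/F}$ are the $R$-tuples conjoined with the statement that all $F$-coordinates are equal. The claim is that the same $H$ works. Non-emptiness, closure under restriction, and the extension property are literally the same as for $\rel X$ (they only refer to $X$ and $k$). The only thing to check is that each $f\in H$ is still a partial homomorphism for the new relations: given $(x_1,\dots,x_m)\in\tilde R^{\rel X/F}$ with all $x_i\in F$-coordinates equal — wait, rather: given such a tuple, which in particular lies in $R^{\rel X}$ — I enlarge $f$ (staying under size $k$ via the extension property and the arity bound) to be defined on all $x_j$ and on all elements of $F$ that appear; then $(f(x_1),\dots,f(x_m))\in R^{\rel B}$ since $f\in H$ is a strategy for $\rel X$, and the extra conjuncts $z=z'$ for $z,z'\in F$ hold in $\rel B$ because by hypothesis $H_{z,z'}$ lies in the diagonal, so $f(z)=f(z')$ whenever both are in $\dom(f)$. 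Hence $(f(x_1),\dots,f(x_m))\in\tilde R^{\rel B}$.

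\emph{Items (3) and (4).} For (3), take the potato system $(D_x)_{x\in X}$ witnessing (bounded) singleton arc consistency of $\rel X$ w.r.t.\ $\rel B$ and restrict it to $X\setminus F$; I must check it is still a potato system for $\rel X\setminus F$, which is the arc-consistency analogue of the computation in item (1): a tuple in $\tilde R^{\rel X\setminus F}$ comes from $t\in R^{\rel X}$, and extending a local witness for one non-$F$ coordinate to a full $R^{\rel B}$-tuple with all coordinates in the respective $D$'s (possible since $(D_x)$ is a potato system for $\rel X$) and then projecting away the $F$-coordinates produces the required witness in $\tilde R^{\rel B}$; the singleton-refinement systems $(D'_y)$ are restricted the same way. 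Boundedness is inherited since we only shrink the index set, not the sets $D_x$ themselves. For (4), given $(D_x)$ witnessing $\sAIP(\rel X,\rel A)$, I claim $(D_x)_{x\in X\setminus F}$ witnesses $\sAIP(\rel X\setminus F,\rel A)$: for $x\in X\setminus F$ and $a\in D_x$, take the AIP-solution $\lambda$ for $\rel X$ with $\lambda_x(a)=1$ and supported on the $D$'s. For each relation $\tilde R$ of $\rel X\setminus F$ built from $R$ of $\rel X$ by quantifying $F$-variables, and for each tuple $\tilde y$ of $\rel X\setminus F$ coming from $y\in R^{\rel X}$, define the new relational variables by pushing forward the distribution $\lambda_y$ along the coordinate-deletion map on $R^{\rel A}\to\tilde R^{\rel A}$ — i.e.\ $\lambda_{\tilde y}(\tilde b) = \sum_{b\in R^{\rel A},\,\pr{}(b)=\tilde b}\lambda_y(b)$, summing over the extensions $b$ of $\tilde b$ to the $F$-coordinates that land in $\tilde R^{\rel A}$'s preimage. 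One then checks that the linear equations of $\AIP(\rel X\setminus F,\rel A)$ (normalization and consistency with the vertex variables $\lambda_{x_i}$ for $x_i\notin F$) follow from those of $\AIP(\rel X,\rel A)$ by summing, and that integrality and the support conditions are preserved; this is routine.

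\emph{Main obstacle.} I expect the only genuinely delicate point to be the bookkeeping in items (1) and (2) around domain sizes: the members of $H$ have domains of size up to $k$, and when I want one that is defined on \emph{all} coordinates of a tuple of the (possibly larger-signature-shaped but still arity-$\leq k$) relations $\tilde R$ — together with the relevant elements of $F$ for item (2) — I must argue I can realize such a partial homomorphism without exceeding size $k$. This works precisely because $k$ bounds the arity of every relation in play (both in $\rel B$ and in the derived structures, whose relations have arity at most that of $\rel B$) and because the $F$-coordinates of a single tuple $R(x_1,\dots,x_m)$ are among those $m\leq k$ coordinates — so the total domain needed is still at most $k$. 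Getting this arity/size accounting exactly right, and correctly invoking the extension property (restricting down to a sub-$k$ domain before extending, as in item (1)), is the one place where a careless argument would break; everything else is a transcription of the definitions through the pp-definitions of $\tilde R$.
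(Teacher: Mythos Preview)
Your proposal is correct and follows essentially the same approach as the paper's proof: verify each item directly from the definitions by tracing the pp-definitions of $\tilde R$. You are in fact more careful than the paper in several places---the restrict-then-extend manoeuvre for the domain-size bookkeeping in item~(1), the observation that $G_a=H_a$ for boundedness, and the explicit pushforward of the AIP solution in item~(4)---where the paper simply declares these steps ``clear'' or ``immediate.''
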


A \emph{free set} for a structure $\rel X$ is the preimage of $\{0\}$ under a homomorphism $\rel X\to \rel B/{\Theta}$. %
Whenever $\rel X$ is an instance that is singleton arc consistent, and $\mathscr C$ contains a semilattice operation, then $\rel X$ admits a non-empty free set.
\begin{lemma}\label{semilattice-sac-free-set}
	Let $\rel B$ be a temporal structure.
	Let $\rel X$ be an instance of\/ $\CSP(\rel B)$ with $X\neq\emptyset$ and suppose that $\mathscr C$ contains a semilattice operation.
	Let $(D_x)_{x\in X}$ be a well-ordered potato system witnessing singleton arc consistency of $\rel X$.
	Let $a$ be the minimal element of $\bigcup_{x\in X} D_x$.
	Then $\rel X$ admits a non-empty free set $F$ where $a\in D_x$ for every $x\in F$.
\end{lemma}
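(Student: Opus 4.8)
The plan is to push the singleton arc consistency data down to the two-element quotient $\rel B/{\Theta}$ and then carry out the classical argument that a semilattice polymorphism makes arc consistency solve a CSP, taking care to pin one distinguished variable to the class of $0$.

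First I would normalise. Since all relations of $\rel B$ are quantifier-free definable from $<$, the translation $q\mapsto q-a$ is an automorphism of $\rel B$, and applying it to $(D_x)_{x\in X}$ and to the sub-potato-systems witnessing singleton arc consistency (which all remain potato systems, since automorphisms transport potato systems to potato systems) lets me assume $a=0$; then $D_x\subseteq\mathbb Q_{\geq 0}$ for every $x$, and $0\in D_{x_0}$ for some $x_0\in X$, and it now suffices to produce a non-empty free set $F$ with $0\in D_x$ for every $x\in F$. I would also record that $\mathscr C\subseteq\pol(\rel B/{\Theta})$: every $f\in\pol(\rel B,0,\Theta)$ preserves $\Theta$ and maps $\mathbb Q_{\geq 0}$ into itself, so the operation it induces on $\mathbb Q_{\geq 0}/{\Theta}$ preserves the relations of $\rel B/{\Theta}$; in particular $\rel B/{\Theta}$ has a binary semilattice polymorphism, which I call $s$.

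Then I would transport the potato system. Let $(D'_y)_{y\in X}$ be the sub-potato-system given by singleton arc consistency at the pair $(x_0,0)$, so that $D'_{x_0}=\{0\}$ and $D'_y\subseteq D_y\subseteq\mathbb Q_{\geq 0}$ for every $y$, and put $E_y:=\{[d]\mid d\in D'_y\}$, where $[d]$ denotes the $\Theta$-class of $d$; this is a non-empty subset of the (two-element) domain of $\rel B/{\Theta}$. The collection $(E_y)_{y\in X}$ is a potato system from $\rel X$ to $\rel B/{\Theta}$: given $(x_1,\dots,x_m)\in R^{\rel X}$, an index $i$, and $[d]\in E_{x_i}$ with $d\in D'_{x_i}$, the potato property of $(D'_y)$ relative to $\rel B$ yields $t\in R^{\rel B}$ with $t_i=d$ and $t_j\in D'_{x_j}\subseteq\mathbb Q_{\geq 0}$, so that $([t_1],\dots,[t_m])\in R^{\rel B/{\Theta}}$ witnesses the required condition; it is precisely here that the normalisation is used, to ensure the $t_j$ lie in $\mathbb Q_{\geq 0}$. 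Next I would let $h$ be the map sending $y\in X$ to the $s$-sum of $E_y$ (well defined, since $E_y$ is non-empty with at most two elements and $s$ is commutative and associative), and carry out the standard verification: for a constraint $(x_1,\dots,x_m)\in R^{\rel X}$, combining by repeated application of $s$ all the tuples that witness the potato property of $(E_y)$ at that constraint gives a tuple of $R^{\rel B/{\Theta}}$ whose $j$-th coordinate, by idempotence of $s$, is the $s$-sum of $E_{x_j}$, that is $h(x_j)$; hence $h$ is a homomorphism $\rel X\to\rel B/{\Theta}$. Since $E_{x_0}=\{[0]\}$ we get $h(x_0)=[0]$, so $F:=h^{-1}(\{[0]\})$ is a non-empty free set for $\rel X$; and because $\rel B/{\Theta}$ has only two elements, the $s$-sum of a non-empty subset of its domain always lies in that subset, so $h(y)\in E_y$ for every $y$, and $y\in F$ forces $[0]\in E_y$, i.e.\ $0\in D'_y\subseteq D_y$ (equivalently $a\in D_y$ for the original data), which is what we want.

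The hard part, and the reason singleton arc consistency rather than plain arc consistency is required in the hypothesis, is controlling $F=h^{-1}(\{[0]\})$ from both sides at once. On a two-element domain $s$ could be the ``join'' with the class of $\mathbb Q_{>0}$ absorbing, and then $s$-summing over the full sets $\{[d]\mid d\in D_x\}$ would send every value to that class and leave only the empty free set; pinning $x_0$ to $\{0\}$ by singleton arc consistency is exactly what rescues non-emptiness. Dually, the inclusion $F\subseteq\{x\mid a\in D_x\}$ is what makes the explicit $s$-sum description of $h$ (and the observation $h(y)\in E_y$) necessary, rather than merely invoking the black-box fact that a semilattice polymorphism lets $\rel X$ map homomorphically to $\rel B/{\Theta}$.
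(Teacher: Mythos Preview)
Your proposal is correct and follows essentially the same approach as the paper: both normalise so that the minimum $a$ becomes $0$, use singleton arc consistency to pin one variable to $\{0\}$, push the resulting sub-potato-system down to the two-element quotient $\rel B/{\Theta}$, and then use the semilattice polymorphism there to extract a homomorphism lying inside the potato; the preimage of $[0]$ is the desired free set. The only difference is that the paper black-boxes the semilattice step by citing the standard Feder--Vardi fact (and implicitly its refinement that the homomorphism can be chosen inside the potato), while you unpack it explicitly via the $s$-sum construction; your added commentary on why singleton rather than plain arc consistency is needed is accurate and a nice clarification.
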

\begin{proof}
	By assumption and~\Cref{easylemma}, $\rel B/{\Theta}$ has a semilattice polymorphism.
	Consider an arbitrary $x\in X$ such that $a \in D_x$.
	Since $\rel X$ is singleton arc consistent, it has a non-trivial potato system $(D'_{y})_{y\in X}$ such that $D'_x=\{a\}$.
	Note that for every injective monotone map $\alpha\colon \mathbb Q\to \mathbb Q$, we get that $(\alpha(D'_{y}))_{y\in X}$ is a well-ordered potato system from $\rel X$ to $\rel B$.
	Let us pick such an $\alpha$ such that $\alpha(a)=0$.
	
	The system $(\alpha(D'_y))_{y\in X}$ can be composed with the projection $\rel Q_{\geq 0}\to\rel Q_{\geq 0}/{\Theta}$ to obtain a non-trivial potato system for $\rel X$ with respect to $\rel B/{\Theta}$, from which it follows that there exists a homomorphism $h\colon\rel X\to\rel B/{\Theta}$ such that $h(y)\in \alpha(D'_y)/{\Theta}$ for all $y\in X$.
	In particular, $h(x)= \{0\}$, and $h^{-1}(\{0\})$ is a non-empty free set for $\rel X$.
	Moreover, if $x\in F$ then $0\in \alpha(D'_x)$ and thus $a\in D'_x$.
\end{proof}

It is known~\cite{BodirskyKara,SmoothApp-journal} that if $\CSP(\rel B)$ is not NP-hard, then $\pol(\rel B^*)$ contains one of the two binary operations denoted by $pp$ or $ll$. The latter is an injective operation satisfying $0=ll(0,0)<ll(0,y)<ll(x,y')$ for all $x,y\geq 0$ and $y'\in\mathbb Q$.
In particular, $ll$ induces a semilattice operation in $\mathscr C$.
On $(\mathbb Q_{\geq 0})^2$, $ll(x,y)<ll(x',y')$ iff $y<y'$ or ($y=y'$ and $x<x'$).
Apart from the properties above, no additional knowledge about these functions is necessary.
\vspace{-1em}

\subsection{Templates with $pp$}

\begin{lemma}[\cite{BodirskyKara}]\label{induction-pp}
Let $\rel B$ be a temporal structure such that $pp\in\pol(\rel B)$.
Let $\rel X$ be an instance of $\CSP(\rel B)$ that has a free set $F$. If $\rel X\setminus F\to \rel B$, then $\rel X\to\rel B$.%
\end{lemma}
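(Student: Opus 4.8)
The plan is to glue a homomorphism $s\colon\rel X\setminus F\to\rel B$ together with a homomorphism $e\colon\rel X\to\rel B/{\Theta}$ witnessing that $F$ is a free set (so that $F=e^{-1}(\{0\})$) into a single homomorphism $t\colon\rel X\to\rel B$, using the polymorphism $pp$ to perform the gluing. First I would normalise $s$: since every relation of $\rel B$ is a Boolean combination of atomic formulas $x<y$, it is preserved by every order-preserving injection $\mathbb Q\to\mathbb Q$, so after composing $s$ with such a map I may assume $\im(s)\subseteq\mathbb Q_{>0}$. Then set $t(x)=s(x)$ for $x\in X\setminus F$ and $t(x)=0$ for $x\in F$, and claim that $t$ is a homomorphism $\rel X\to\rel B$. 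Since membership of a tuple in a relation of $\rel B$ depends only on the weak linear order its coordinates induce, it suffices to exhibit, for every constraint $R(x_1,\dots,x_m)$ of $\rel X$, a tuple of $R^{\rel B}$ having the same order type as $(t(x_1),\dots,t(x_m))$.

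Fix such a constraint and let $I=\{i:x_i\in F\}$. If $I=\{1,\dots,m\}$, then applying $e$ produces a tuple of $R^{\rel B}$ whose coordinates are all non-negative and all $\Theta$-equivalent to $0$, i.e.\ the tuple $(0,\dots,0)$, which is $(t(x_1),\dots,t(x_m))$ itself --- and here $F$ being free is exactly what guarantees $(0,\dots,0)\in R^{\rel B}$. If $I=\emptyset$, the projection leaves the constraint on $(x_1,\dots,x_m)$ essentially unchanged (its relation in $\rel B$ is again $R^{\rel B}$), so $(t(x_1),\dots,t(x_m))=(s(x_1),\dots,s(x_m))\in R^{\rel B}$ because $s$ is a homomorphism. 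In the remaining case $\emptyset\neq I\neq\{1,\dots,m\}$ I would take two witnesses in $R^{\rel B}$: from $e$, a tuple $a$ with $a\geq 0$, $a_i=0$ for $i\in I$, and $a_i>0$ otherwise; and from $s$ applied to the projected constraint $\tilde R$ of $\rel X\setminus F$, a tuple $b$ with $b_i=s(x_i)$ for $i\notin I$ (the entries with $i\in I$ being arbitrary). Applying $pp$ coordinatewise to $a$ and $b$ gives a tuple $c=(pp(a_i,b_i))_i\in R^{\rel B}$ since $pp\in\pol(\rel B)$. Now, by the defining properties of $pp$ (see~\cite{BodirskyKara}), and possibly after first adjusting $a$ by an automorphism of $\rel B$, one has $pp(a_i,b_i)=pp(0,b_i)=0$ for every $i\in I$, while for $i\notin I$ the value $pp(a_i,b_i)$ is positive and these values are linearly ordered, over $i\notin I$, exactly as the $s(x_i)$. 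Thus the $I$-coordinates of $c$ are all $0$ and strictly below its remaining coordinates, which are positive and ordered as the $s(x_i)$; this is precisely the order type of $(t(x_1),\dots,t(x_m))$, so the latter belongs to $R^{\rel B}$, as needed.

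The main obstacle is this last case, and within it the claim that the merged tuple $c$ realises the intended order type --- the $F$-positions collapsing to $0$ and the remaining positions tracking $s$. This is the step that genuinely uses the definition of $pp$ from~\cite{BodirskyKara}: that $pp$ is a polymorphism of $\rel B$ fixing $0$, that its output is $0$ as soon as its first argument is $0$, and that on inputs with positive first argument it is, up to order, governed by its second argument. By contrast, the case analysis, the reduction to order types, the normalisation of $s$, and the trivial case $F=\emptyset$ are all routine.
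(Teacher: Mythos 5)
Your proof is correct and follows the standard Bodirsky--K\'ara argument that the paper itself cites without reproving: apply $pp$ coordinatewise to a free-set witness and a lift of the projected solution, then use the fact that membership in a temporal relation depends only on the induced weak order. The only presentational difference is that you verify each constraint via order types rather than writing the glued map directly as a composition with $pp$, which is immaterial.
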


This is enough to obtain that, if $pp\in\pol(\rel B)$ and $\mathscr C$ has a semilattice operation, then every instance of $\CSP(\rel B)$ that is boundedly  singleton arc consistent has a solution.

\begin{theorem}\label{pp-semilattice}
	Let $\rel B$ be a temporal structure. Suppose that $pp\in\pol(\rel B)$ and that $\mathscr C$ has a semilattice operation.
	Let $\rel X$ be a well-ordered singleton arc consistent instance of $\CSP(\rel B)$.
	Then there exists a homomorphism $\rel X\to\rel B$.
\end{theorem}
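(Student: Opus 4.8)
The plan is to prove this by induction on the size of the domain $X$ of $\rel X$, peeling off a free set at each step and recursing on the smaller instance.

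For the base case $X=\emptyset$ the empty map is a homomorphism $\rel X\to\rel B$. For the inductive step, assume $X\neq\emptyset$ and fix a bounded potato system $(D_x)_{x\in X}$ witnessing that $\rel X$ is singleton arc consistent with respect to $\rel B$. Since $\mathscr C$ contains a semilattice operation, I would invoke \Cref{semilattice-sac-free-set} to obtain a non-empty free set $F$ for $\rel X$, i.e.\ $F=h^{-1}(\{0\})$ for some homomorphism $h\colon\rel X\to\rel B/{\Theta}$. By \Cref{projection-strategy}(3) the projection $\rel X\setminus F$ is again boundedly singleton arc consistent with respect to $\rel B$, and since $F\neq\emptyset$ its domain $X\setminus F$ is strictly smaller than $X$; the induction hypothesis then yields a homomorphism $\rel X\setminus F\to\rel B$. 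Finally, because $pp\in\pol(\rel B)$, $F$ is a free set of $\rel X$, and $\rel X\setminus F\to\rel B$, \Cref{induction-pp} produces a homomorphism $\rel X\to\rel B$, which closes the induction.

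This is essentially an assembly of the three preceding lemmas, so I do not expect a genuine obstacle; the one point I would check carefully is that the \emph{bounded} variant of singleton arc consistency is exactly what is needed on both ends of the recursion. On the one hand, \Cref{semilattice-sac-free-set} consumes a \emph{bounded} potato system — boundedness is what allows rescaling by a monotone $\alpha$ sending the minimal element $a$ to $0$ — and on the other hand \Cref{projection-strategy}(3) is what guarantees this boundedness survives passing from $\rel X$ to $\rel X\setminus F$. As long as that condition is threaded consistently through each recursive call, the argument goes through: the hypotheses $pp\in\pol(\rel B)$ and ``$\mathscr C$ has a semilattice operation'' depend only on $\rel B$ and not on the instance, so they persist unchanged along the recursion.
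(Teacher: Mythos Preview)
Your proposal is correct and follows essentially the same approach as the paper: induction on $|X|$, invoking \Cref{semilattice-sac-free-set} to find a non-empty free set, \Cref{projection-strategy}(3) to pass bounded singleton arc consistency to $\rel X\setminus F$, and \Cref{induction-pp} to lift the homomorphism back. Your explicit tracking of the \emph{bounded} condition through the recursion is, if anything, slightly more careful than the paper's own write-up.
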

\begin{proof}
	We proceed by induction on the size of $\rel X$, the case where $\rel X$ has empty domain being trivial.
	Assume that $X\neq\emptyset$.
	By~\Cref{semilattice-sac-free-set},  $\rel X$ admits a non-empty free set $F$.
	The projection of $\rel X$ onto $X\setminus F$ is then itself singleton arc consistent by~\Cref{projection-strategy}, and must therefore have a homomorphism to $\rel B$ by induction hypothesis.
	Since $pp\in\pol(\rel B)$,~\Cref{induction-pp} gives that $\rel X\to\rel B$ as well.
\end{proof}

We now turn to the case where $\pol(\rel B)$ contains $pp$ and $\mathscr C$ contains the minority operation.

\begin{theorem}\label{pp-maltsev}
	Let $\rel B$ be a temporal structure. Suppose that $pp\in\pol(\rel B)$ and that $\mathscr C$ has a minority operation.
	Let $\rel X$ be an instance of $\CSP(\rel B)$.
	If $\sAIP(\rel X,\rel A)$ holds for some finite substructure $\rel A$ of $\rel B$,  there exists a homomorphism $\rel X\to\rel B$.
\end{theorem}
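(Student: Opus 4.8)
The plan is to follow the proof of \Cref{pp-semilattice} almost verbatim, with ``singleton arc consistent'' replaced by ``$\sAIP(\cdot,\rel A)$'' and the semilattice case replaced by the minority case. I would argue by induction on $|\rel X|$, the case $X=\emptyset$ being trivial. In the inductive step one first produces a non-empty free set $F\subseteq X$ for $\rel X$; then \Cref{projection-strategy}(4) gives $\sAIP(\rel X\setminus F,\rel A)$, the induction hypothesis (applicable since $F\neq\emptyset$) gives a homomorphism $\rel X\setminus F\to\rel B$, and \Cref{induction-pp} (applicable since $pp\in\pol(\rel B)$ and $F$ is a free set) lifts it to a homomorphism $\rel X\to\rel B$.

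So the whole content is a minority analogue of \Cref{semilattice-sac-free-set}: if $\mathscr C$ contains a minority operation, $X\neq\emptyset$, and $\sAIP(\rel X,\rel A)$ holds, then $\rel X$ has a non-empty free set. To prove this, let $(D_x)_{x\in X}$ witness $\sAIP(\rel X,\rel A)$. One first arranges (by shrinking $\rel A$ to its substructure on $\bigcup_x D_x$, or by arguing directly that $\sAIP$ may be re-witnessed this way) that $a_0:=\min A$ lies in some $D_{x_0}$; then by $\sAIP$ there is a solution $\lambda$ to $\AIP(\rel X,\rel A)$ with $\lambda_{x_0}(a_0)=1$ and $\lambda_{x_0}(a')=0$ for $a'\neq a_0$. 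Pick a monotone injection $\alpha\colon A\to\mathbb Q_{\geq 0}$ with $\alpha(a_0)=0$; then $\alpha$ is an embedding of $\rel A$ into the substructure of $\rel B$ on $\mathbb Q_{\geq 0}$, and composing with the quotient map onto $\mathbb Q_{\geq 0}/\Theta$ yields a homomorphism $\beta\colon\rel A\to\rel B/\Theta$ with $\beta(a_0)=\{0\}$. Pushing $\lambda$ forward along $\beta$ (summing weights over preimages under $\beta$, which routinely sends $\AIP$-solutions to $\AIP$-solutions) gives a solution to $\AIP(\rel X,\rel B/\Theta)$ in which $\lambda_{x_0}(\{0\})=1$. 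To promote this to an honest homomorphism with $x_0\mapsto\{0\}$, work with the expansion of $\rel B/\Theta$ by the constant $\{0\}$: by \Cref{easylemma}, $\rel B/\Theta$ has a minority polymorphism, which is idempotent and hence also a polymorphism of this expansion, so the (two-element) CSP of the expansion is solvable by $\AIP$~\cite{BBKO}. The instance obtained from $\rel X$ by adding the constraint that $x_0$ equal the constant has a satisfiable $\AIP$-relaxation (extend the solution above), hence it has a homomorphism $h$, which restricts to a homomorphism $\rel X\to\rel B/\Theta$ with $h(x_0)=\{0\}$; therefore $h^{-1}(\{0\})$ is a non-empty free set.

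The main obstacle is this free-set lemma, and within it two points. The first is the transfer of $\sAIP$ from the finite template $\rel A$ to the two-element quotient $\rel B/\Theta$, which needs $\mathscr C\subseteq\pol(\rel B/\Theta)$ (true because $\Theta$-preserving polymorphisms of $\rel B$ preserve $\mathbb Q_{\geq 0}$ and descend to the quotient) together with the routine fact that pushforwards of $\AIP$-solutions along template homomorphisms are again $\AIP$-solutions. The second and more delicate point is the need to realise $\min A$ by the sets $D_x$: preservation of $<$ forces the preimage of $\{0\}$ under any homomorphism $\rel A\to\rel B/\Theta$ to lie in $\{\min A\}$, so the minimal value realised by the $D_x$ must coincide with $\min A$, and one has to check that $\sAIP(\rel X,\rel A)$ can always be re-witnessed so that this holds. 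Everything else is bookkeeping parallel to the semilattice case already carried out in \Cref{semilattice-sac-free-set} and \Cref{pp-semilattice}, so once the lemma is established the induction above closes immediately.
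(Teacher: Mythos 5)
Your overall architecture is exactly the paper's: induction on $|X|$, extraction of a non-empty free set from $\sAIP(\rel X,\rel A)$, preservation of $\sAIP$ under projection via \Cref{projection-strategy}(4), and the lift via \Cref{induction-pp}. The paper's own proof is in fact terser than yours --- it simply asserts that a non-empty free set exists ``as in \Cref{semilattice-sac-free-set}'' --- and your elaboration (push the pinned $\AIP$-solution forward along a homomorphism $\rel A\to\rel B/\Theta$, then use idempotence of the minority operation to solve the constant-expanded two-element instance by $\AIP$ and read off $h^{-1}(\{0\})$) is the right way to make that step precise.

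However, the point you flag as delicate and leave unverified is a genuine gap, and the fix you propose does not work as stated. As you observe, preservation of $<$ forces any homomorphism $\beta\colon\rel A\to\rel B/\Theta$ to send only $\min A$ to $\{0\}$, so the pushforward pins $x_0$ to $\{0\}$ only when the pinned value is $\min A$; but the definition of $\sAIP$ gives no control over whether $\min A\in\bigcup_x D_x$. Your proposed remedy --- restrict $\rel A$ to the substructure $\rel A'$ induced on $\bigcup_x D_x$ (or on its up-set) --- is precisely where the argument breaks: an integer solution to $\AIP(\rel X,\rel A)$ does not restrict to a solution of $\AIP(\rel X,\rel A')$, because the constraint-level weights $\lambda_y(b)$ on tuples $b$ with entries outside $A'$ can be nonzero over $\mathbb Z$ even though all the corresponding variable-level marginals vanish (positive and negative weights cancel, so ``$\lambda_{x'}(a')=0$ for $a'\notin D_{x'}$'' does not kill those tuple weights, and dropping them destroys the normalization and marginal equations). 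This is exactly the respect in which $\AIP$ differs from the potato systems of \Cref{semilattice-sac-free-set}, where restriction is immediate. So your free-set lemma, as written, only covers the case $\min A\in\bigcup_x D_x$; the complementary case needs an actual argument (e.g.\ redistributing the weight of the offending tuples using the affine structure of the relations over $\rel B/\Theta$), and until it is supplied the induction does not close.
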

\begin{proof}
	Since $\mathscr C\subseteq\pol(\rel A/{\Theta})$ by~\Cref{easylemma} and $\mathscr C$ has a minority operation, $\CSP(\rel A)$ is solvable by $\AIP$.
	Since $\sAIP(\rel X,\rel A)$ holds, there is as in~\Cref{semilattice-sac-free-set} a non-empty free set $F$ for $\rel X$.
	The projection $\rel Y$ of $\rel X$ onto $X\setminus F$ is such that $\sAIP(\rel Y,\rel A)$ holds by~\Cref{projection-strategy}.
	By induction hypothesis we get a homomorphism $\rel Y\to\rel B$.
	By~\Cref{induction-pp}, there exists a homomorphism $\rel X\to\rel B$.%
\end{proof}

\subsection{Templates with $ll$}

The strategy we employ here is similar as for the previous section, except that the induction argument is slightly more involved than an application~\Cref{induction-pp} as in the previous case.
Since $ll$ is an injective operation, we must care about the kernel of $h\colon\rel Y\to\rel B$ to lift it to a homomorphism $\rel X\to\rel B$.

\begin{definition}
Let $\rel B$ be a temporal structure.
Let $\rel X$ be an instance of $\CSP(\rel B)$ with domain $X$ and let $H$ be a bounded $k$-strategy from $\rel X$ to $\rel B$ and let $a$ be the minimal element in $\bigcup_{x\in X}H_x$.
A \emph{decomposition sequence} for $(\rel X, H)$ is defined recursively as follows:
\begin{itemize}
	\item If $X$ is itself a free set, then $X$ is a decomposition sequence (of length 1).
	\item Suppose that $F_1$ is an inclusion-minimal free set for $\rel X$ such that $a\in H_x$, for every $x\in F_1$, and such that $H_{x,y}$ is a subset of the diagonal for every $x,y\in F_1$. Let $F_2,\dots,F_k$ be a decomposition sequence of $(\rel Y, G)$, where $\rel Y=(X/F_1)\setminus F_1$ is the contracted projection of $\rel X$ onto $X\setminus F_1$ and $G$ is defined as in~\Cref{projection-strategy} (1). Then $F_1,\dots,F_k$ is a decomposition sequence of $\rel X, H$.
\end{itemize}
\end{definition}

As in the case of $pp$, we first show that if $\rel X$ satisfies a consistency condition, then there exists a decomposition sequence, and we then show that such a decomposition sequence yields a homomorphism $\rel X\to\rel B$.

\begin{lemma}\label{ll-min-freeset}
	Let $\rel B$ be a temporal structure with relations of arity at most $k\geq 3$ and suppose that $ll\in\pol(\rel B)$.
	Let $\rel X$ be an instance of\/ $\CSP(\rel B)$ with a non-trivial well-ordered $k$-strategy $H$.
	Let $F\subseteq X$ be an inclusion-minimal non-empty free set for $\rel X$.
	For every $x,y\in F$ and $(a,b)\in H_{x,y}$ we have $a=b$.
\end{lemma}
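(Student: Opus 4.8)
The plan is to argue by contradiction. Suppose that $(a,b)\in H_{x,y}$ for some $x,y\in F$ with $a\neq b$; after exchanging $x$ and $y$ if necessary, assume $a<b$. The goal is then to produce a non-empty free set strictly contained in $F$, contradicting the inclusion-minimality of $F$.

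Two ingredients enter. First, since $F$ is a free set, fix a homomorphism $g\colon\rel X\to\rel B/{\Theta}$ with $g^{-1}(\{0\})=F$. Second, from $H$ being a bounded $k$-strategy, $\rel X$ is boundedly singleton arc consistent, witnessed by the potato system $(H_w)_w$ together with, for every $w$ and $c\in H_w$, the bounded potato system $D^{(w,c)}=(\{c'\mid(c,c')\in H_{w,w'}\})_{w'}$, which is the singleton $\{c\}$ at $w$. From the second ingredient I would build, exactly as in the proof of \Cref{semilattice-sac-free-set}, a homomorphism $h\colon\rel X\to\rel B/{\Theta}$: choose a pair $(w_0,c_0)$ with $w_0\in F$ and $c_0\in H_{w_0}$, let $\alpha$ be an order-automorphism of $(\mathbb Q;<)$ sending $\min\bigcup_{w} D^{(w_0,c_0)}_{w}$ to $0$, compose $(\alpha(D^{(w_0,c_0)}_w))_w$ with the projection $\mathbb Q_{\ge0}\to\mathbb Q_{\ge0}/{\Theta}$ to obtain a potato system over $\rel B/{\Theta}$, and combine it with the semilattice operation that $ll$ induces in $\mathscr C$, which is a polymorphism of $\rel B/{\Theta}$ by \Cref{easylemma}. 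The resulting $h$ satisfies $h(w)=\{0\}$ exactly for those $w$ with $D^{(w_0,c_0)}_w=\{\min\bigcup_{w'} D^{(w_0,c_0)}_{w'}\}$. Hence, if $(w_0,c_0)$ is chosen so that (a) fixing $w_0\mapsto c_0$ forces no variable strictly below $c_0$ (so that this minimum equals $c_0$ and $w_0\in h^{-1}(\{0\})$), and (b) some $w_1\in F$ is not forced to $c_0$ by $w_0\mapsto c_0$ (so that $w_1\notin h^{-1}(\{0\})$), then $F\cap h^{-1}(\{0\})$ is a proper, non-empty subset of $F$.

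The contradiction is then immediate. On the two $\Theta$-classes $\{0\}$ and $\mathbb Q_{>0}$, the operation $ll$ induces the semilattice $\vee$ in which $\{0\}$ is the identity element and $\mathbb Q_{>0}$ is absorbing (since $ll(0,0)=0$, while $ll(0,y)>0$ and $ll(x,y')>0$ for all $x,y>0$), and $\vee\in\mathscr C\subseteq\pol(\rel B/{\Theta})$. Thus $w\mapsto g(w)\vee h(w)$ is a homomorphism $\rel X\to\rel B/{\Theta}$, and since $\vee(p,q)=\{0\}$ precisely when $p=q=\{0\}$, its preimage of $\{0\}$ equals $g^{-1}(\{0\})\cap h^{-1}(\{0\})=F\cap h^{-1}(\{0\})$, a non-empty free set properly contained in $F$.

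The main obstacle is securing a pair $(w_0,c_0)$ satisfying (a) and (b) at once. Condition (b) reflects the hypothesis directly, as $(a,b)\in H_{x,y}$ with $x,y\in F$ and $a<b$ separates two variables of $F$ by the strategy; the difficulty is to reconcile this with (a). The natural attempt is to start from $(x,a)$ and, whenever fixing the current variable to its current value forces some variable $w^*$ strictly lower, to a value $c^*$, to pass to $(w^*,c^*)$ while retaining a separating witness inherited from the previous step; this descent terminates by finiteness of $\bigcup_w H_w$, but one has to argue that it can be kept inside $F$, equivalently that a variable of $F$ pinned to a low value cannot force a variable outside $F$ strictly lower. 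Establishing this is the technical heart of the argument, and it is here that the inclusion-minimality of $F$ and the order-theoretic structure of the relations of $\rel B$ preserved by $ll$ must be used.
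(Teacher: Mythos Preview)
Your overall plan matches the paper's: assume the conclusion fails and derive a contradiction by exhibiting a non-empty free set strictly inside $F$; your combination $g\vee h$ via the semilattice that $ll$ induces on $\rel B/\Theta$ mirrors the paper's closing application of $ll$. The gap you acknowledge, however, is real, and the descent you sketch does not close it. Termination is unclear, since boundedness of $H$ only means each $H_x$ has a minimum, not that $\bigcup_w H_w$ is finite. More seriously, there is no reason the descent should stay in $F$: the strategy $H$ is oblivious to $F$, and nothing prevents pinning $w\in F$ to a value $c$ from forcing some $w'\notin F$ strictly below $c$. The inclusion-minimality of $F$ is a statement about free sets, not about $H$, so your condition (a) on \emph{all} of $X$ is genuinely hard to secure (and maintaining (b) along the descent is also unaddressed).

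The paper avoids your global condition (a) by verifying freeness \emph{constraint by constraint} rather than through a global $h$. It takes $a$ to be a minimal value for which some $x,y\in F$ and $b>a$ satisfy $(a,b)\in H_{x,y}$; this already gives that for $u,v\in F$ every $(a,b')\in H_{u,v}$ has $b'\ge a$. On $G=\{u\in F:a\in H_u\}$ the relation $u\rightsquigarrow v$ defined by $\{b':(a,b')\in H_{u,v}\}=\{a\}$ is transitive, and a sink strongly connected component $F'$ of $(G,\rightsquigarrow)$ is a proper subset of $F$ with the property that from each $u\in F'$ to each $v\in F\setminus F'$ some $(a,b)\in H_{u,v}$ has $b>a$. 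For a single constraint $(u_1,\dots,u_r)\in R^{\rel X}$ (so $r\le k$), one extracts from $H$ partial homomorphisms on $\{u_1,\dots,u_r\}$ separating $F'$ from each such $v$ one at a time, folds them together with $ll$, and applies $ll$ once more against the tuple from the free-set witness for $F$ to handle the coordinates outside $F$. The key idea you are missing is that since each constraint has at most $k$ variables, the local information in $H$ already suffices, and no global statement about variables outside $F$ is ever needed.
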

\begin{proof}
	For the following proof, given $a\in\mathbb Q$ and $H\subseteq\mathbb Q\times\mathbb Q$, we write $a+H$ for the set $\{b\in\mathbb Q\mid (a,b)\in H\}$.

	Suppose that the conclusion does not hold.
	Consider a minimal element $a$ such that there exist $x, y\in F$ and $(a,b)\in H_{x,y}$ such that $a<b$.
	Such a minimal element exists since $H$ is well-ordered.
	Let $G$ consist of the elements $u\in F$ such that $a\in H_u$. 
	For $u,v\in G$ define $u\rightsquigarrow v$ if $a+H_{u,v}=\{a\}$.
	
	\begin{claim}\label{claim:lowerbound} If $(a,b)\in H_{u,v}$ for $u,v\in F$, then $b\geq a$.
	\end{claim}
	\begin{claimproof}
	Otherwise, $(b,a)\in H_{v,u}$ with $v,u\in F$, a contradiction to the minimality of $a$.
	\end{claimproof}
	
	\begin{claim}\label{claim:transitive} $\rightsquigarrow$ is reflexive and transitive on $G$.\end{claim}
	\begin{claimproof}
	The reflexivity is clear, since for every $u\in G$ we have $a\in H_u$ and $H_{u,u}$ is equal to the diagonal on $H_u$ so that in particular $a+H_{u,u}=\{a\}$.
	Suppose now that $u,v,w\in G$ and $u\rightsquigarrow v\rightsquigarrow w$, and let $(a,c)\in H_{u,w}$.
	Then since $D$ is a $k$-strategy there exists $b$ such that $(a,b)\in H_{u,v}$ and $(b,c)\in H_{v,w}$.
	Since $u\rightsquigarrow v$, we get $b=a$, and since $v\rightsquigarrow w$, we get $c=a$.
	Thus $a+H_{u,w}\subseteq \{a\}$, and it must be non-empty since $H$ is a $k$-strategy and $a\in H_u$.
	\end{claimproof}
	
	Let $F'$ be a sink strongly connected component of the graph $(G,\rightsquigarrow)$.
	
	\begin{claim}\label{claim:proper}$F'\subsetneq F$.\end{claim}
	\begin{claimproof} If $F=F'$ (which also implies $G=F$) then $(G,\rightsquigarrow)$ is strongly connected, and since $\rightsquigarrow$ is transitive by~\Cref{claim:transitive} it must be equal to $F^2$.
	However, by assumption, there exist $x,y\in F$ and $b>a$ such that $(a,b)\in H_{x,y}$, and therefore $x\not\rightsquigarrow y$.
	\end{claimproof}
	
	\begin{claim}\label{claim:greater} For every $u\in F',v\in F\setminus F'$, there exists $b$ such that $a<b$ and $(a,b)\in D_{u,v}$.\end{claim}
	\begin{claimproof}
	Suppose first that $a\not\in D_{v}$.
	Let $b\in D_v$ be arbitrary such that $(a,b)\in D_{u,v}$.
	By~\Cref{claim:lowerbound}, we have $b\geq a$, so $b>a$ and we are done.
	Otherwise, suppose that $a\in D_{v}$ and therefore $v\in G$.
	Since $u\not\rightsquigarrow v$ (as otherwise $v$ would be in $F'$ since it is a sink component), we have $a+H_{u,v}\neq\{a\}$ and thus there exists $b\neq a$ such that $(a,b)\in H_{u,v}$. Since $b\geq a$ by~\Cref{claim:lowerbound}, we get $b>a$.\end{claimproof}
	
	Finally, we show that $F'$ is a free set for $\rel X$, i.e., that the map $h\colon \rel X\to \rel B/{\Theta}$ defined by $h(u)=0$ for every $u\in F'$ and $h(v)=\mathbb Q_{>0}$ for every $v\in X\setminus F'$ is a homomorphism. By~\Cref{claim:proper}, this would give us a contradiction to the minimality of $F$.
	Consider an arbitrary $(u_1,\dots,u_r)\in R^{\rel X}$ for some relation symbol $R$. We let $U=\{u_1,\dots,u_r\}$.
	If $U\cap F'=\emptyset$ then there is nothing to prove, since the tuple $(\mathbb Q_{>0},\dots,\mathbb Q_{>0})$ is in every relation of $\rel A/{\Theta}$.
	So let us assume that there is $u\in U\cap F'$.
	
	For every $w\in F\setminus F'$, let $b\in D_w$ be such that $a<b$ and $(a,b)\in D_{u,w}$, which exists by~\Cref{claim:greater}.
	Since $H$ is a $k$-strategy and $|U|\leq k$, there exists $h_w\in H$ with domain $U$ and such that:
	\begin{itemize}
		\item $(h_w(u_1),\dots,h_w(u_r))\in R^{\rel B}$, since $H$ consists of partial homomorphisms,
		\item $h_w(u)=a < b=h_w(w)$,
		\item $(h(x),h(y))\in H_{x,y}$ for all $x,y\in U$.
	\end{itemize}
	Since $a+H_{u,v}=\{a\}$ for every $v\in F'$, we have $h_w(v)=a$ for every $v\in F'$, and $h_w(v)\geq a$ for every $v\in F$ by~\Cref{claim:lowerbound}.
	
	Let $w_1,w_2,\dots,w_m$ list the elements of $F\setminus F'$.
	Define $g_1=h_{w_1}$ and $g_{i+1}(x)=ll(h_{w_{i+1}}(x), g_i(x))$ for $i\in\{1,\dots,m-1\}$ and $x\in U$.
	We obtain a partial homomorphism $g_m\colon U\to \mathbb Q$ 
	such that $(g_m(u_1),\dots,g_m(u_r))\in R^{\rel B}$, such that $g_m|_{F'}$ is constant, and $g_m(u)<g_m(v)$ for every $u\in F', v\in F\setminus F'$.
	Up to composing $g_m$ with an automorphism of $(\mathbb Q;<)$, we can assume that $g_m|_{F'}$ is constant equal to $0$.
	
	Finally, let $s\in R^{\rel B}$ be a tuple whose existence witnesses the fact that $F$ is a free set, i.e., $s|_F$ is constant equal to $0$ and $s(v)>0$ for every $v\in \{u_1,\dots,u_k\}\setminus F$.
	
	We finally obtain that if one defines $s'(x):=ll(s(x),g_m(x))$ for $x\in U$, then $(s'(u_1),\dots,s'(u_r))\in R^{\rel B}$ and $s'$ is constant equal to $0$ on $F'$ and strictly positive on $\{u_1,\dots,u_r\}\setminus F'$.
	By composing $s'$ with the canonical projection $\mathbb Q_{\geq 0}\to \mathbb Q_{\geq 0}/{\Theta}$, this shows that the constraint under consideration is satisfied in $\rel B/{\Theta}$.
\end{proof}

\begin{proposition}\label{existence-decomposition}
Let $\rel B$ be a temporal structure such that $ll\in\pol(\rel B)$ and let $k\geq 3$ be larger than the arity of $\rel B$.
Let $\rel X$ be an instance of $\CSP(\rel B)$.
If $\rel X$ has a non-trivial well-ordered $k$-strategy $H$, then $(\rel X, H)$ has a decomposition sequence.
\end{proposition}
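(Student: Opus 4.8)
The plan is to argue by induction on $|X|$; recall that $\rel X$, being an instance of $\CSP(\rel B)$, has finite domain, and that $\rel X/F_1$ and $(\rel X/F_1)\setminus F_1$ are again instances of $\CSP(\rel B)$ by our standing assumption that all low-arity pp-definable relations of $\rel B$ are relations of $\rel B$. The base case is handled by the first clause of the definition: if $X=\emptyset$, or more generally if $X$ is itself a free set for $\rel X$, then the one-term sequence $X$ is a decomposition sequence. So assume henceforth that $X\neq\emptyset$ and that $X$ is not a free set, and let $a=\min\bigcup_{x\in X}H_x$ be the minimal element of $H$.

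First I would produce a non-empty free set all of whose potatoes contain $a$. Since $ll\in\pol(\rel B)$ and $ll$ induces a semilattice operation in $\mathscr C$, and since $k$ exceeds the arity of $\rel B$, the folklore fact recalled in \Cref{sect:defs} shows that $(H_x)_{x\in X}$ is a potato system witnessing singleton arc consistency of $\rel X$; it is bounded because $H$ is, and its minimal value is exactly $a$. Hence \Cref{semilattice-sac-free-set} applies and yields a non-empty free set $F$ with $a\in H_x$ for every $x\in F$; since $X$ is not a free set, $F\subsetneq X$.

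Next I would cut $F$ down to an inclusion-minimal one. Pick a non-empty free set $F_1\subseteq F$ that is inclusion-minimal with this property (possible since $X$ is finite and $F$ is such a set). Any non-empty free set strictly contained in $F_1$ would also be contained in $F$, contradicting the minimality of $F_1$; hence $F_1$ is inclusion-minimal among \emph{all} non-empty free sets of $\rel X$. From $F_1\subseteq F$ we retain $a\in H_x$ for every $x\in F_1$, and since $F_1$ is an inclusion-minimal non-empty free set, \Cref{ll-min-freeset} gives that $H_{x,y}$ is contained in the diagonal for all $x,y\in F_1$. So $F_1$ is an admissible first term.

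Finally, I would run the recursion. By \Cref{projection-strategy}(2), $H$ is a non-trivial bounded $k$-strategy from $\rel X/F_1$ to $\rel B$, whence, by \Cref{projection-strategy}(1), the set $G=\{\,h|_{K\cap(X\setminus F_1)}\mid h\in H,\ \dom(h)=K\,\}$ is a non-trivial bounded $k$-strategy from $\rel Y:=(\rel X/F_1)\setminus F_1$ to $\rel B$. As $F_1\neq\emptyset$, we have $|Y|<|X|$, so the induction hypothesis produces a decomposition sequence $F_2,\dots,F_m$ of $(\rel Y,G)$, and then $F_1,F_2,\dots,F_m$ is a decomposition sequence of $(\rel X,H)$ by the second clause of the definition. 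The point that needs care is precisely the choice of the first term: an arbitrary inclusion-minimal non-empty free set need not contain $a$ in each potato, which is why one must first pass to the free set delivered by \Cref{semilattice-sac-free-set} and only then minimize inside it — after which \Cref{ll-min-freeset} supplies the diagonality condition for free.
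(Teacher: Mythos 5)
Your proof is correct and follows essentially the same route as the paper's: extract the free set $F$ containing the minimal value $a$ via \Cref{semilattice-sac-free-set} applied to the potato system $(H_x)_{x\in X}$, minimize inside $F$ to get $F_1$, invoke \Cref{ll-min-freeset} for diagonality, and recurse on the contracted projection using \Cref{projection-strategy}. Your added observations — that minimality inside $F$ implies minimality among all non-empty free sets, and that part (2) of \Cref{projection-strategy} is needed before part (1) — are correct refinements of steps the paper leaves implicit.
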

\begin{proof}
	We prove the result by induction on the size of $\rel X$.
	Let $a$ be the minimal value appearing in any $H_x$ and let $x\in X$ be arbitrary such that $a\in H_x$.
	
	Since $\rel X$ has a well-ordered $k$-strategy with respect to $\rel B$, it is in particular well-ordered singleton arc consistent as witnessed by the non-trivial potato system $(H_x)_{x\in X}$.
	Since $\mathscr C$ contains a semilattice operation,~\Cref{semilattice-sac-free-set} applies and there exists a non-empty free set $F$ where $a\in H_x$ for every $x\in F$. If $F=X$, we are done.
	
	Otherwise, let $F_1\subseteq F$ be an inclusion-minimal free set for $\rel X$.
	We know by~\Cref{ll-min-freeset} that $H_{x,y}$ is a subset of the diagonal for all $x,y\in F_1$.
	Let $G$ be the $k$-strategy from $\rel Y$ to $\rel A$, where $G$ is defined as in~\Cref{projection-strategy} (1) and $\rel Y=(X/F_1)\setminus F_1$ is the contracted projection of $\rel X$ onto $X\setminus F_1$.
	Then by induction hypothesis, $(\rel Y, G)$ has a decomposition sequence $F_2,\dots,F_m$, which means that $F_1,\dots,F_m,$ is a decomposition sequence of $(\rel X, H)$.
\end{proof}

\begin{theorem}\label{ll-semilattice}
	Let $\rel B$ be a temporal structure with relations of arity smaller than $k$ and such that $ll\in\pol(\rel B)$.
	Let $\rel X$ be an instance of\/ $\CSP(\rel B)$.
	If\/ $\rel X$ has a non-trivial well-ordered $k$-strategy, then there exists a homomorphism $\rel X\to\rel B$.
\end{theorem}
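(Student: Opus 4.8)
The plan is to run the same argument as for \Cref{pp-semilattice}, with the single lifting step of \Cref{induction-pp} replaced by an induction that walks down a decomposition sequence. Given the non-trivial bounded $k$-strategy $H$ on $\rel X$, I would first invoke \Cref{existence-decomposition} to obtain a decomposition sequence $F_1,\dots,F_m$ for $(\rel X,H)$, and then induct on $m$. To carry enough information through the induction I would prove the stronger statement that there is a homomorphism $h\colon\rel X\to\rel B$ which is constant on each block $F_i$ and satisfies $h(F_1)<h(F_2)<\dots<h(F_m)$; that is, $h$ factors through the linear quotient collapsing the blocks. This ``layered'' form is what makes both the base case and the gluing in the inductive step go through.

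In the base case $m=1$ the set $X$ is itself a free set, so every relation of $\rel B$ occurring in $\rel X$ contains the constant tuple $(0,\dots,0)$ and the map sending every variable to $0$ is the desired (trivially layered) homomorphism. For the inductive step, by the definition of a decomposition sequence (and \Cref{ll-min-freeset}) the set $F_1$ is an inclusion-minimal free set with $H_{x,y}$ contained in the diagonal for all $x,y\in F_1$; hence by \Cref{projection-strategy}(2) the strategy $H$ is also a $k$-strategy from $\rel X/{F_1}$ to $\rel B$, and applying \Cref{projection-strategy}(1) to $\rel X/{F_1}$ it induces a bounded $k$-strategy $G$ on the contracted projection $\rel Y=(\rel X/{F_1})\setminus F_1$, for which $F_2,\dots,F_m$ is a decomposition sequence. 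The induction hypothesis then yields a layered homomorphism $g\colon\rel Y\to\rel B$.

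It remains to lift $g$ to a layered homomorphism $\rel X\to\rel B$, which is the $ll$-analogue of \Cref{induction-pp} and the heart of the proof. I would reparametrize $g$ by a monotone injection into $\mathbb Q_{>0}$, set $h$ to be this reparametrization on $X\setminus F_1$ and identically $0$ on $F_1$ — so that $h$ is layered by construction — and then check that $h$ maps each constraint of $\rel X$ into the corresponding relation of $\rel B$. Fixing a constraint $t\in R^{\rel X}$, the freeness of $F_1$ supplies a witness $s\in R^{\rel B}$ that is $0$ on the $F_1$-coordinates of $t$ and strictly positive elsewhere, and the fact that $g$ is a homomorphism of the \emph{contracted} projection $\rel Y$ supplies a witness $w\in R^{\rel B}$ that is constant on the $F_1$-coordinates of $t$ and agrees with (the reparametrization of) $g$ on the remaining ones. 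Applying $ll$ coordinatewise to $s$ and $w$, and iterating this over the constraints of $\rel X$ in the spirit of the proof of \Cref{ll-min-freeset}, produces tuples of $R^{\rel B}$ whose $F_1$-coordinates are all equal and lie strictly below the other coordinates; composing everything with one monotone self-embedding of $(\mathbb Q;<)$ should then align these tuples with a single global map $h$ of the required form, which is in particular a homomorphism $\rel X\to\rel B$.

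The step I expect to be the main obstacle is exactly this gluing: guaranteeing that the per-constraint $ll$-combinations are simultaneously realized by one function $X\to\mathbb Q$ and, in particular, that the order type of $h$ on $X\setminus F_1$ is the one induced by $g$ rather than one coming from the auxiliary witnesses $s$. This is where the injectivity of $ll$ forces one to keep track of the kernel of $g$ — variables identified by $g$ must be identified by $h$ — and where contracting $F_1$ in $\rel X$ before projecting it out is essential, since the contraction is precisely what forces the witnesses $w$ to be constant on the $F_1$-coordinates, so that placing the entire block $F_1$ at a single bottom value is compatible with every constraint at once. The remaining ingredients are routine: the existence of the free sets is packaged into \Cref{existence-decomposition} (which relies on \Cref{semilattice-sac-free-set} and the fact that $ll$ induces a semilattice operation in $\mathscr C$), the preservation of bounded $k$-consistency under projection and contraction is \Cref{projection-strategy}, and everything else is the bookkeeping of monotone reparametrizations.
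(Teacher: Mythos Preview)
Your approach is essentially the paper's: obtain a decomposition sequence via \Cref{existence-decomposition}, fix the candidate homomorphism globally as the block map sending $F_i$ to the $i$th value, and then verify each constraint separately by applying $ll$ to the free-set witness and the witness coming from the contracted projection --- the only cosmetic difference is that the paper runs a downward induction on the block index $a$ (over the fixed structures $\rel Y_a$) rather than an induction on the sequence length. Your gluing worry dissolves once you note that $h$ is fixed upfront and each constraint is checked on its own: since temporal relations are $\aut(\mathbb Q;<)$-invariant, it suffices that $ll(s,w)$ and $h$ have the same order type on that constraint's variables, so no cross-constraint alignment and no iteration ``in the spirit of \Cref{ll-min-freeset}'' is needed.
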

\begin{proof}
	Let $H$ be a well-ordered non-trivial $k$-strategy from $\rel X$ to $\rel B$.
	By~\Cref{existence-decomposition}, there exists a decomposition sequence $F_1,\dots,F_m$ for $(\rel X,H)$. 
	By~\Cref{ll-min-freeset}, every $h\in H$ is constant when restricted to any $F_i$.

	We prove that $\rel X$ admits a homomorphism to $\rel B$, namely, that the map $s\colon X\to \mathbb Q$ defined by $s(x)=a$ for $x\in F_a$ is a homomorphism.
	Define $\rel Y_a$ for $a\in\{0,1,\dots,m\}$ recursively by $\rel Y_0=\rel X$ and for $a\geq 1$ by $\rel Y_{a}=(\rel Y_{a-1}/{F_a})\setminus {F_{a}}$.
	Thus, the domain of $\rel Y_a$ is $F_{a+1}\cup\dots\cup F_m$.

	By downward induction on $a\in\{0,\dots,m\}$, we show that $s|_{F_{a+1}\cup\dots\cup F_m}$ is a homomorphism $\rel Y_a\to\rel B$. The case $a=0$ then implies our claim.
	The base case of $a=m$ is clear since $\rel Y_m$ is the empty structure.
	
	Let us assume now that $s|_{F_{a+1}\cup\dots\cup F_m}$ is a homomorphism $\rel Y_a\to\rel B$ for some $a>0$.
	Let $(x_1,\dots,x_r)\in R^{\rel Y_{a-1}}$.
	By definition of a decomposition sequence,  $F_{a}$ is a free set for $\rel Y_{a-1}$, i.e., there exists a homomorphism $h\colon \rel Y_{a-1}\to\rel B/{\Theta}$ with $h^{-1}(0)=F_a$.
	By definition of $\rel B/{\Theta}$, this means that there exists a map $f\colon\{x_1,\dots,x_r\}\to\mathbb Q$ such that $(f(x_1),\dots,f(x_r))\in R^{\rel B}$ and such that $f(y)=0<f(z)$ for all $y\in F_a, z\not\in F_a$.
	Let $x_{i_1},\dots,x_{i_s}$ be an enumeration of the elements of $\{x_1,\dots,x_r\}\setminus F_a$.
	Since $\rel Y_a$ is the projection of $\rel Y_{a-1}/{F_a}$ onto $F_{a+1}\cup\dots\cup F_m$, we have $(x_{i_1},\dots,x_{i_s})\in \tilde R^{\rel Y_a}$, where $\tilde R$ is the symbol corresponding to the relation defined by $\exists_{y\in F_a}\left(R(x_1,\dots,x_r)\land\bigwedge_{z,z'\in F_a} z=z'\right)$ in $\rel B$.
	Since $s|_{F_{a+1}\cup\dots\cup F_m}$ is a homomorphism $\rel Y_a\to\rel B$, there exists a value $b\in\mathbb Q$ such that the map $g\colon \{x_1,\dots,x_r\}\to\mathbb Q$ defined by $g(y)=b$ for $y\in F_a$ and $g(y)=s(y)$ otherwise is such that $(g(x_1),\dots,g(x_r))\in R^{\rel B}$.
	Now it is an observation that the map $s'\colon\{x_1,\dots,x_r\}\to\mathbb Q$ defined by $s'(x):=ll(f(x),g(x))$ induces the same order as $s|_{\{x_1,\dots,x_r\}}$ and is such that $(s'(x_1),\dots,s'(x_r))\in R^{\rel B}$.
	Thus, $(s(x_1),\dots,s(x_r))\in R^{\rel B}$.
	Since this holds for every constraint of $\rel Y_a$,  this implies that $s|_{F_a\cup\dots\cup F_m}$ is a homomorphism $\rel Y_{a-1}\to\mathbb B$.%
\end{proof}

\subsection{Conclusion}
\algotemporal*
\begin{proof}
	Suppose that $\CSP(\rel B)$ is not NP-hard.
    Then it is known that one of the operations $ll$ or $pp$ is a polymorphism of $\rel B$, and that $\mathscr C$ is contains a polymorphism that is either a semilattice, the majority operation, or the minority operation. By~\Cref{maj-implies-semilattice}, it suffices to treat the cases where $\mathscr C$ contains a semilattice operation or the minority operation.
    
    Suppose first that $ll\in\pol(\rel B)$.
    If $\rel X$ has a non-trivial $k$-strategy with respect to $\rel A$, then it has a well-ordered non-trivial $k$-strategy with respect to $\rel B$, and there exists a homomorphism $\rel X\to\rel B$ by~\Cref{ll-semilattice}.
    If $pp\in\pol(\rel B)$ and $\mathscr C$ contains a semilattice, we are likewise done by~\Cref{pp-semilattice}.
  
    Otherwise, $\mathscr C$ contains a minority operation.
    The condition $\sAIP(\rel X,\rel A)$ then implies by~\Cref{pp-maltsev} that there exists a homomorphism $\rel X\to \rel B$.
\end{proof}

\section{PCSP templates from CSP templates}\label{sect:sandwiches}

In this section, we show how to derive half-infinite PCSP templates from $\omega$-categorical templates and transfer properties between the corresponding templates.

Our goal is two-fold.
On the one hand, we prove that for every CSP template in the Bodirsky-Pinsker class, there exists a half-infinite PCSP template such that the corresponding problems are equivalent under logspace reductions.
We moreover show some correspondences between the two problems in terms of their solvability by certain algorithmic techniques that are standard in constraint satisfaction.
On the other hand, we show some correspondences between the respective sets of polymorphisms of these templates.

The following is a known construction allowing to turn arbitrary relational structures into structures in an at most binary signature via a pp-power.
\begin{definition}
Let $\rel B$ be a structure in a finite relational signature, $m\geq 1$ a natural number that is at least the maximal arity of the relations of $\rel B$.
We define $\mpow{\rel B}$ to be the structure whose domain is $B^m$ and with the following relations:
\begin{itemize}
    \item for every $\ell$-ary relation $R$ of $\rel B$, and every $i\colon[\ell]\to[m]$, $\mpow{\rel B}$ has a relation $R_i$ of arity $1$ containing all elements $a\in B^m$ such that $(a_{i(1)},\dots,a_{i(\ell)})\in R^{\rel B}$.
    \item for every $1\leq\ell\leq m$ and every two functions $i,j\colon [\ell]\to[m]$, $\mpow{\rel B}$ has a relation $eq_{i,j}$ of arity $2$ containing all pairs $(a,b)\in (B^{m})^2$ such that $(a_{i(1)},\dots,a_{i(\ell)})=(b_{j(1)},\dots,b_{j(\ell)})$.
\end{itemize}
\end{definition}
It is clear that $\rel B$ and $\mpow{\rel B}$ pp-construct each other, and therefore $\CSP(\rel B)$ and $\CSP(\mpow{\rel B})$ are polynomial-time equivalent.

By definition, if $\rel A\to\rel B$, one always has homomorphisms $\mpow{\rel A}\to\mpow{\rel B}\to \mpow{\rel B}/{\group}$, and therefore there is a trivial reduction from $\PCSP(\mpow{\rel A},\mpow{\rel B}/{\group})$ to $\CSP(\mpow{\rel B})$, which itself reduces to $\CSP(\rel B)$ as already mentioned.

\vspace{-1em}

\subsection{More about Polymorphisms}

A \emph{minor identity} is a formal expression $f^\sigma\approx g^\tau$, for some operation symbols $f,g$. A \emph{minor condition} is a set $\Sigma$ of minor identities. We say that $\Sigma$ is satisfied in $\pol(\rel A,\rel B)$ if for every symbol $f$ appearing in $\Sigma$, one can find an operation $\xi(f)\in\pol(\rel A,\rel B)$ of the corresponding arity and such that $\xi(f)^\sigma=\xi(g)^\tau$ holds for every identity $f^\sigma\approx g^\tau$ in $\Sigma$.

Given $\mathcal U\subseteq\End(\rel B)$, we say that a minor condition $\Sigma$ is satisfied in $\pol(\rel A,\rel B)$ \emph{modulo $\mathcal U$} if there exists an assignment $\xi$ of the symbols from $\Sigma$ to elements of $\pol(\rel A,\rel B)$ of the corresponding arity such that for every identity $f^\sigma\approx g^\tau$ in $\Sigma$, there exist $u,v\in\mathcal U$ such that $u\circ \xi(f)^\sigma=v\circ\xi(g)^\tau$ holds.
In the following, we use the symbol $\trivgrp$ to denote the set consisting solely of the identity function on the relevant set.

Given $\mathscr U\subseteq\End(\rel B)$ and $\mathscr V\subseteq\End(\rel D)$, we say that an arity-preserving map $\xi\colon\pol(\rel A,\rel B)\to\pol(\rel C,\rel D)$ is a \emph{$(\mathscr U,\mathscr V)$-minion homomorphism} if whenever the equality 
\[ u\circ f^\sigma = v \circ g^\tau \]
holds for some $f,g\in\pol(\rel A, \rel B), u,v\in\mathscr U$, one has that the equality
\[ u'\circ \xi(f)^\sigma = v'\circ \xi(g)^\tau\]
holds for some $u',v'\in\mathscr V$.
We say that $\xi$ is a \emph{local} $(\mathscr U,\mathscr V)$-minion homomorphism if for every finite $T\subseteq C$, there exists a finite $S\subseteq A$ such that the implication
\[ \begin{array}{rl} &\exists u,v\in\mathscr U \left(u\circ f^\sigma|_S = v\circ g^\tau|_S\right) \\\Longrightarrow & \exists u,v\in\mathscr V \left(u\circ \xi(f)^\sigma|_T = v\circ \xi(g)^\tau|_T\right)\end{array}\]
holds for all $f,g\in\pol(\rel A,\rel B)$.
\emph{Minion homomorphisms} as defined in~\cite{wonderland} correspond exactly to $(\trivgrp,\trivgrp)$-minion homomorphisms.

We say that a sequence $f_i\colon A^n\to B$ of functions of the same arity converges to $g\colon A^n\to B$ if for every finite subset $S$ of $B$, there exists $i_0\in\mathbb N$ such that for all $i\geq i_0$ we get $f_i|_S=g|_S$.
Seeing $\aut(\rel B)$ as a subset of $\pol(\rel B)$, we let $\overline{\aut(\rel B)}$ be its closure (the set of limits) in this topology.

\begin{lemma}[Standard compactness argument, folklore]\label{sca}
Let $f_i\colon A^n\to B$ be an arbitrary sequence of functions and $\group$ is an oligomorphic group of permutations on $\rel B$.
There exists $g\colon A^n\to B$ such that for every finite $S\subseteq A$ and for infinitely many $i$, $g|_S$ and $f_i|_S$ are in the same orbit under $\group$.
\end{lemma}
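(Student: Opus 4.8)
The plan is to run the standard iterated‑pigeonhole (``diagonal'') argument: oligomorphicity ensures that at each finite stage there are only finitely many possibilities up to $\group$, so one can pass to an infinite subsequence along which the behaviour ``stabilises modulo $\group$'' at every level, and then assemble $g$ as an increasing union of partial functions realising these stable orbits.

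First I would fix an enumeration $A=\{a_1,a_2,\dots\}$ (we may assume $A$ infinite, the finite case being the same argument with a finite enumeration) and set $S_k=\{a_1,\dots,a_k\}$, so that $S_1\subseteq S_2\subseteq\cdots$ and $A^n=\bigcup_k S_k^n$; following the paper's convention I write $h|_{S_k}$ for the restriction $h|_{S_k^n}$. The only use of oligomorphicity is the observation that for each fixed $k$ the set of maps $S_k^n\to B$ splits into finitely many $\group$-orbits (such a map is an element of $B^{k^n}$ once $S_k^n$ is enumerated, and $\group$ has finitely many orbits on $B^{k^n}$). Hence I can build a decreasing chain of infinite subsets $\mathbb N=I_0\supseteq I_1\supseteq I_2\supseteq\cdots$ such that for each $k\geq 1$ the restrictions $f_i|_{S_k}$, $i\in I_k$, all lie in one common $\group$-orbit $O_k$: given $I_{k-1}$ infinite, partition it according to the $\group$-orbit of $f_i|_{S_k}$, and keep one of the finitely many classes that is infinite.

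The second step is a coherence remark: \emph{every representative of $O_k$ extends to a representative of $O_{k+1}$.} Indeed, let $h\in O_k$ and pick any $i\in I_{k+1}\subseteq I_k$; since $h$ and $f_i|_{S_k}$ lie in the common orbit $O_k$ there is $\alpha\in\group$ with $h=\alpha\circ(f_i|_{S_k})$, and then $\alpha\circ(f_i|_{S_{k+1}})$ belongs to $O_{k+1}$ (as $f_i|_{S_{k+1}}\in O_{k+1}$ and $O_{k+1}$ is a $\group$-orbit) and restricts to $h$ on $S_k^n$. Using this repeatedly — starting from an arbitrary $g^{(1)}\in O_1$ — I obtain an increasing chain $g^{(1)}\subseteq g^{(2)}\subseteq\cdots$ of partial maps with $g^{(k)}\colon S_k^n\to B$ and $g^{(k)}\in O_k$, and I set $g=\bigcup_k g^{(k)}\colon A^n\to B$, which is well defined since the $S_k^n$ cover $A^n$ and the $g^{(k)}$ agree on overlaps. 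By construction $g|_{S_k}=g^{(k)}\in O_k$ for every $k$.

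To conclude: given a finite $S\subseteq A$, choose $k$ with $S\subseteq S_k$. For each of the infinitely many $i\in I_k$, both $g|_{S_k}$ and $f_i|_{S_k}$ lie in $O_k$, so they are in the same $\group$-orbit; restricting the witnessing automorphism further to $S^n$ shows $g|_S$ and $f_i|_S$ are in the same orbit, as wanted. The only point that needs genuine care is the coherence step: because $B$ is infinite one cannot directly invoke compactness of $B^{A^n}$ or K\"onig's lemma on a tree of partial functions, and the device is to transport orbit‑representatives upward using that each $f_i$ itself witnesses the compatibility between consecutive levels; everything else is routine bookkeeping.
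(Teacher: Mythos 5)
Your argument is correct and is precisely the standard diagonal/compactness argument that the paper invokes by name without supplying a proof: extract nested infinite index sets using that oligomorphicity gives finitely many orbits of maps $S_k^n\to B$, then glue compatible orbit representatives, with the key coherence step (transporting a representative of $O_k$ to one of $O_{k+1}$ via a single $f_i$ and an automorphism) handled properly. Nothing is missing.
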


If $(\rel C,\rel D)$ is pp-constructible in $(\rel A,\rel B)$, then there exists a local minion homomorphism $\pol(\rel A,\rel B)\to\pol(\rel C,\rel D)$.
Theorem 4.12 in~\cite{BBKO} provides a reciprocal statement in case $\rel A$ and $\rel C$ are finite structures: if there exists a minion homomorphism $\pol(\rel A,\rel B)\to\pol(\rel C,\rel D)$, then $(\rel C,\rel D)$ has a pp-construction in $(\rel A,\rel B)$.

We record here a slight generalization of this fact that also generalizes the results concerning (local) minion homomorphisms and pp-constructions in the context of clones and CSP templates from~\cite{wonderland} to the case of minions and PCSP templates.
We consider it to be folklore; a proof can be given in the language of~\cite{BBKO}, but it can just as well be proved as in~\cite{wonderland}.
\begin{restatable}{proposition}{constructionshomo}\label{prop:pp-constructions-from-homo}
    Let $\rel B$ be an $\omega$-categorical structure.
    Let $\xi\colon\pol(\rel A,\rel B)\to\pol(\rel C,\rel D)$ be a local minion homomorphism.
    Then $(\rel A,\rel B)$ pp-constructs every $(\rel S,\rel D)$ where $\rel S$ is a finite substructure of $\rel C$.
\end{restatable}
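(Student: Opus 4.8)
The plan is to adapt the ``free structure over a minion'' construction used in the proof of Theorem~4.12 of~\cite{BBKO} (equivalently, the construction of~\cite{wonderland}) so that it (i) uses only a \emph{finite} substructure $\rel S$ of the possibly infinite $\rel C$, and (ii) produces an honest pp-power rather than merely a structure carried by a pp-definable subset.

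Fix a finite substructure $\rel S$ of $\rel C$ and put $n=|S|$. I would construct a single pp-power $(\rel E,\rel F)$ of $(\rel A,\rel B)$, in the signature of $\rel C$, of dimension $d=|A|^{n}$, identifying the domain $A^{d}$ (resp.\ $B^{d}$) with the set of all maps $A^{S}\to A$ (resp.\ $A^{S}\to B$). For a relation symbol $R$ of $\rel C$ of arity $r$, let $\pi_{1},\dots,\pi_{r}\colon R^{\rel S}\to S$ be the maps $t\mapsto t_{j}$ reading off the coordinates of a tuple $t\in R^{\rel S}$, and declare $R^{\rel E}$ (resp.\ $R^{\rel F}$) to consist of the $r$-tuples $(h^{1},\dots,h^{r})$ of maps $A^{S}\to A$ (resp.\ $A^{S}\to B$) such that there is $g\in\hom(\rel A^{R^{\rel S}},\rel A)$ (resp.\ $g\in\hom(\rel A^{R^{\rel S}},\rel B)$) with $h^{j}=g^{\pi_{j}}$ for every $j$. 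Since $\rel A$ and $R^{\rel S}$ are finite, $\rel A^{R^{\rel S}}$ is a finite structure, so ``$g\in\hom(\rel A^{R^{\rel S}},-)$'' unravels into a finite conjunction of atoms in the coordinates of $g$; together with the equations $h^{j}_{\beta}=g_{\beta\circ\pi_{j}}$ (for $\beta\in A^{S}$) and an existential quantification over the coordinates of $g$, this yields one primitive positive formula $\varphi_{R}$ with $r\cdot d$ free variables that defines $R^{\rel E}$ when read over $\rel A$ and $R^{\rel F}$ when read over $\rel B$, so $(\rel E,\rel F)$ is a pp-power of $(\rel A,\rel B)$.

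It remains to produce the two homomorphisms witnessing that $(\rel S,\rel D)$ is a homomorphic relaxation of $(\rel E,\rel F)$. On the left, I map $s\in S$ to the evaluation map $\mathrm{ev}_{s}\colon A^{S}\to A$, $\beta\mapsto\beta(s)$, which lies in $\hom(\rel A^{S},\rel A)\subseteq A^{d}$: if $t=(s_{1},\dots,s_{r})\in R^{\rel S}$ then, taking for $g$ the projection of $\rel A^{R^{\rel S}}$ onto the coordinate $t$, one computes $g^{\pi_{j}}=\mathrm{ev}_{s_{j}}$, so $(\mathrm{ev}_{s_{1}},\dots,\mathrm{ev}_{s_{r}})\in R^{\rel E}$ and $s\mapsto\mathrm{ev}_{s}$ is a homomorphism $\rel S\to\rel E$. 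On the right, observe that every coordinate of every tuple of $R^{\rel F}$, being a minor of some $g\in\hom(\rel A^{R^{\rel S}},\rel B)$, is an element of the set $\hom(\rel A^{S},\rel B)$ of arity-$n$ polymorphisms of $(\rel A,\rel B)$. For $f$ in this set put $\psi(f)=\xi(f)(\mathrm{id}_{S})$, where $\xi(f)\in\pol(\rel C,\rel D)$ has arity $n$, is restricted to the substructure $\rel S^{S}\le\rel C^{S}$ of its domain, and is then evaluated at the tuple $\mathrm{id}_{S}\in S^{S}$; on the remaining points of $B^{d}$ let $\psi$ take an arbitrary fixed value of $D$. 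If $(h^{1},\dots,h^{r})\in R^{\rel F}$ is witnessed by $g$, then $h^{j}=g^{\pi_{j}}$, hence $\xi(h^{j})=\xi(g)^{\pi_{j}}$ because $\xi$ preserves minors, hence $\psi(h^{j})=\xi(g)(\pi_{j})$; since $(\pi_{1},\dots,\pi_{r})$ lies in the relation $R$ of the power $\rel S^{R^{\rel S}}$ (its coordinate indexed by $t\in R^{\rel S}$ is the tuple $t$) and $\xi(g)$ restricts to a homomorphism $\rel S^{R^{\rel S}}\to\rel D$, we get $(\psi(h^{1}),\dots,\psi(h^{r}))\in R^{\rel D}$. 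All other points of $B^{d}$ occur in no relation of $\rel F$, so $\psi\colon\rel F\to\rel D$ is a homomorphism, and $(\rel S,\rel D)$ is pp-constructible in $(\rel A,\rel B)$.

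The step I expect to require the most care is checking that a \emph{single} primitive positive formula $\varphi_{R}$ produces the relations of $\rel E$ and of $\rel F$ at once, together with the bookkeeping for minors and for powers indexed by arbitrary finite sets; conceptually the only non-formal move is widening the domain from the ``free structure'' carrier $\hom(\rel A^{S},-)$ to all of $A^{d}$ (resp.\ $B^{d}$), which is harmless precisely because the adjoined points lie in no relation and so constrain neither homomorphism. Finally, the construction above only uses that $\xi$ is a minor-preserving map $\pol(\rel A,\rel B)\to\pol(\rel C,\rel D)$; for the $(\trivgrp,\trivgrp)$-notion this follows at once from localness, since $f^{\sigma}=g^{\tau}$ gives $f^{\sigma}|_{S}=g^{\tau}|_{S}$ for the finite $S$ attached to any finite $T\subseteq C$ and hence $\xi(f)^{\sigma}|_{T}=\xi(g)^{\tau}|_{T}$ for all such $T$. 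The $\omega$-categoricity of $\rel B$ is what one needs if the locality is instead taken modulo $\overline{\aut(\rel B)}$, where \Cref{sca} serves to distill a genuine minor-preserving map out of the local data before running the construction.
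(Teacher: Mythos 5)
Your construction is the standard free-structure argument of~\cite{BBKO}, and for \emph{finite} $\rel A$ it is correct: the pp-power of dimension $|A|^{|S|}$, the evaluation maps on the left, and $f\mapsto\xi(f)(\mathrm{id}_S)$ on the right all check out, and you are right that a local $(\trivgrp,\trivgrp)$-minion homomorphism is in particular minor-preserving. But the proposition does not assume $\rel A$ finite, and the cases in which the paper actually invokes it (e.g.\ deducing that an $\omega$-categorical $\rel B$ pp-constructs $(\rel H_2,\rel H_r)$ or $\rel K_3$ from a homomorphism $\pol(\rel B)\to\pol(\rel H_2,\rel H_r)$) have $\rel A=\rel B$ infinite. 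There your dimension $d=|A|^{n}$ is infinite, so $(\rel E,\rel F)$ is not a pp-power in the sense of the definition, and the ``finite conjunction of atoms'' describing $\hom(\rel A^{R^{\rel S}},-)$ is likewise infinite. This is a genuine gap, not a bookkeeping issue: the whole content of the statement beyond the finite folklore is the infinite-$\rel A$ case.

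Your closing remark also misplaces the role of $\omega$-categoricity: it is not only needed for the ``modulo $\overline{\aut(\rel B)}$'' variant, but precisely to repair the gap above. The paper's proof first uses locality of $\xi$ to find a finite $T'\subseteq A$ on which the values $\xi(f)|_S$ are determined for all $f$ of arity at most $N$, and then uses $\omega$-categoricity of $\rel B$ (a stabilization argument along a chain $T'=T_0\subseteq T_1\subseteq\cdots$, since each ``extendability'' relation is $\aut(\rel B)$-invariant of fixed arity) to produce a finite $T\supseteq T'$ such that every polymorphism of $(\rel T,\rel B)$ of arity at most $N$ extends to one of $(\rel A,\rel B)$. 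Only then is there a well-defined (partial) minion homomorphism $\pol(\rel T,\rel B)\to\pol(\rel S,\rel D)$ to which the finite free-structure machinery --- essentially your construction, with $\rel T$ in place of $\rel A$ --- applies; the resulting pp-construction of $(\rel S,\rel D)$ in $(\rel T,\rel B)$ is then transported to $(\rel A,\rel B)$ via $\rel T\to\rel A\to\rel B$. Without the extension step you cannot even restrict $\xi$ to $\pol(\rel T,\rel B)$, since a partial polymorphism on a finite substructure need not come from a global one. To complete your proof you would need to add exactly this reduction to a finite left-hand side.
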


\vspace{-1.5em}
\subsection{Categorical properties of $\mpow{}$}\label{sect:category}

\begin{restatable}{proposition}{directedsystem}\label{directedsystem}
	Let $(\rel A,\rel B)$ be a PCSP template and let $\group$ be a subgroup of $\aut(\rel B)$. Let $1\leq m\leq n$. The following hold:
	\begin{itemize}
	\item There exists a local $(\group,\trivgrp)$-minion homomorphism $\xi^\infty_m\colon\pol(\rel A,\rel B)\to\pol(\mpow{\rel A},\mpow{\rel B}/{\group})$.
	\item There exists a local $(\trivgrp,\trivgrp)$-minion homomorphism $\xi^n_{m}\colon\pol(\mpow[n]{\rel A},\mpow[n]{\rel B}/{\group})\to\pol(\mpow{\rel A},\mpow{\rel B}/{\group})$.
	\end{itemize}
	Moreover, $\xi^m_{\ell}\circ \xi^n_m = \xi^n_\ell$ holds for all $\ell\leq m\leq n$ with $\ell,m,n\in\mathbb N\cup\{\infty\}$.
\end{restatable}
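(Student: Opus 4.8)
The plan is to exhibit explicit formulas for the maps $\xi^\infty_m$ and $\xi^n_m$, check that they are minion homomorphisms with the stated locality/modulus of endomorphisms, and then verify the cocycle identity $\xi^m_\ell\circ\xi^n_m=\xi^n_\ell$ by a direct computation on the formulas. The natural candidate for $\xi^\infty_m$ is the map sending an $n$-ary $f\in\pol(\rel A,\rel B)$ to the operation on $(A^m)^n\cong(A^n)^m$ given by applying $f$ coordinatewise on each of the $m$ coordinates and then composing with the projection $\rel B^m\to\rel B^m/{\group}$; one checks this is a polymorphism $\mpow{\rel A}\to\mpow{\rel B}/{\group}$ using that $f$ preserves every $R^{\rel B}$ (hence every $R_i^{\mpow{\rel B}}$) and that the coordinatewise action trivially respects the relations $eq_{i,j}$ before quotienting. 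For $\xi^n_m$ with $m\le n$, the candidate is ``project onto the first $m$ of the $n$ coordinates'': identify $A^n$ with $A^m\times A^{n-m}$, restrict along the embedding $\mpow{\rel A}\embeds\mpow[n]{\rel A}$ that sends $a\in A^m$ to $(a,a,\dots,a)\in A^n$ (padding with repeated coordinates) — or dually read off the induced operation on the first $m$ coordinates — and again post-compose with the appropriate quotient. The identification needs a little care so that the $eq_{i,j}$ relations of $\mpow{}$ (which only range over $\ell\le m$) are respected, which is exactly why the padding-by-repetition embedding is the right choice.

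The key steps, in order, are: (i) write down $\xi^\infty_m$ and verify that for each $n$-ary $f$ the resulting operation lies in $\pol(\mpow{\rel A},\mpow{\rel B}/{\group})$, and that $\xi^\infty_m$ commutes with taking minors $h\mapsto h^\sigma$ (this is immediate from the coordinatewise definition); (ii) check the \emph{local} $(\group,\trivgrp)$-minion homomorphism property: given a finite $T\subseteq(A^m)$, take $S$ to be the (finite) union of all coordinate projections of elements of $T$ together with what is needed so that if $u\circ f^\sigma|_S=v\circ g^\tau|_S$ with $u,v\in\group$ then, applying $u,v$ coordinatewise and quotienting by $\group$, the two sides of $\xi^\infty_m(f)^\sigma$ and $\xi^\infty_m(g)^\tau$ literally coincide on $T$ — here the quotient by $\group$ absorbs the difference between $u$ and $v$, which is why the target modulus can be taken to be $\trivgrp$; (iii) do the analogous, easier check for $\xi^n_m$, where both moduli are already $\trivgrp$ and locality is witnessed by the obvious finite $S$ obtained by unfolding the coordinate identifications; (iv) finally verify the identity $\xi^m_\ell\circ\xi^n_m=\xi^n_\ell$ and the mixed identities $\xi^m_\ell\circ\xi^\infty_m=\xi^\infty_\ell$ and (vacuously) $\xi^\infty_\ell=\xi^\infty_\ell$, which all reduce to the statement that ``coordinatewise-apply-$f$ then project to the first $\ell$ coordinates'' equals ``project to the first $\ell$ coordinates then coordinatewise-apply-$f$'', and that composing two coordinate projections gives the coordinate projection.

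The main obstacle I expect is bookkeeping the coordinate identifications and the relations $eq_{i,j}$ carefully enough that the maps $\xi^n_m$ are genuinely well-defined as operations on the structures $\mpow{\rel A},\mpow[n]{\rel A}$ (as opposed to on the bare sets $A^m,A^n$): one must confirm that the chosen embedding $\mpow{\rel A}\embeds\mpow[n]{\rel A}$ is a homomorphism for \emph{all} the relations $R_i$ and $eq_{i,j}$, and that restricting a polymorphism along it on the source while projecting on the target is consistent. A secondary subtlety is getting the locality modulus in step (ii) right when $\group$ is merely oligomorphic rather than a full automorphism group — but this is exactly the situation covered by the standard compactness argument (\Cref{sca}), which guarantees that finitely many orbit constraints suffice, so the finite set $S$ can be chosen depending only on $T$ and not on $f,g$. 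Everything else is routine diagram-chasing on the coordinatewise formulas.
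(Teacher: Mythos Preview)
Your proposal is correct and matches the paper's approach: $\xi^\infty_m$ is coordinatewise application of $f$ followed by the quotient $B^m\to B^m/\group$, and $\xi^n_m$ is pad--apply--project (the paper pads each $a\in A^m$ to an $(m{+}1)$-tuple by repeating the \emph{last} entry, applies $f$, then projects to the first $m$ coordinates, and iterates; your notation ``$(a,a,\dots,a)\in A^n$'' should be read this way). One correction: the locality check for $\xi^\infty_m$ needs no compactness and no oligomorphicity hypothesis on $\group$ --- for finite $T\subseteq A^m$ just take $S\subseteq A$ to be the set of all coordinates of tuples in $T$; if $u\circ f^\sigma|_S=v\circ g^\tau|_S$ with $u,v\in\group$ then the coordinatewise images of any $a\in T$ differ by the diagonal action of elements of $\group$, which the quotient absorbs.
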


We thus have that the family of minions $\pol(\mpow{\rel A},\mpow{\rel B}/{\group})$ with the maps $(\xi^n_m)_{n\geq m}$ is an inverse system, and we have a cone from $\pol(\rel A,\rel B)$ to this inverse system, see~\Cref{fig:proj-lim}.
It is not true that $\pol(\rel A,\rel B)$ is the projective limit of the system, however it is ``almost'' one if $\group$ is oligomorphic (\Cref{proj-limit}).

The signature of $\mpow{\rel B}$ only depends on the signature $\tau$ of $\rel B$ and is denoted by $\mpow{\tau}$.
Thus, $\mpow{}$ defines a map from the category $\struct[\tau]$ of $\tau$-structures to the category $\struct[\mpow{\tau}]$ of $\mpow{\tau}$-structures.
We show that this map has an adjoint $\mlow{}$.
This adjunction in $\struct$ helps us to prove that $\mpow{}$ preserves products up to isomorphism (\Cref{lem:fun-fact}), and to prove some adjunction-like results in certain categories containing $\struct$ (\Cref{graded-adjoint}), allowing us to transfer results about the power of certain hierarchies of algorithms (\Cref{relating-hierarchies}).

Given a structure $\rel X$ in the signature $\mpow{\tau}$, define $\mlow{\rel X}$ to be the $\tau$-structure obtained by replacing every element $a$ by $m$ elements $a_1,\dots,a_m$ satisfying appropriate constraints and identifying elements according to the $eq$ relations. Formally, $\mlow{\rel X}$ is defined as follows:
\begin{itemize}
    \item Let $Y=X\times [m]$.
    \item Let $((a,i(1)),\dots,(a,i(\ell))\in R^{\rel Y}$ iff $a\in R^{\rel X}_i$.
    \item Define $(a,p)\leftrightarrow (b,q)$ if $(a,b)\in eq^{\rel X}_{i,j}$ for some $i,j\colon [\ell]\to [m]$, and if $i^{-1}(p)\cap j^{-1}(q)\neq\emptyset$.
    \item Let $\sim^{\rel X}$ be the smallest equivalence relation containing $\leftrightarrow$ and define $\mlow{\rel X}=\rel Y/{\sim^{\rel X}}$.
\end{itemize}

Viewing $\struct[\tau]$ and $\struct[\mpow{\tau}]$ as categories whose arrows are homomorphisms of structures, we get that $\mpow{}$ and $\mlow{}$ are adjoint functors.
This is a particular example of functors arising from a \emph{Pultr template}, see e.g.~\cite{DalmauKrokhinOprsal}, although in general Pultr templates do not give rise to adjunctions.%

In the following, if $\cate{C}$ is a category and $A,B\in\cate{C}$ are two objects, then $\cate{C}(A,B)$ is the collection of arrows in $\cate C$ with domain $A$ and codomain $B$.

\begin{restatable}{lemma}{functoriality}\label{functoriality}
Let $m$ be larger than the arity of the relations of the structures under consideration.
Then:
\begin{itemize}
\item $\mpow{}$ is a fully faithful functor,
\item $\mlow{}$ is a faithful functor,
\item $\mlow{\mpow{\rel X}}\simeq \rel X$ for all structures $\rel X$,
\item $\mlow{}$ and $\mpow{}$ form an adjoint pair, i.e., there is a bijection $\struct(\rel X,\mpow{\rel B}) \to \struct(\mlow{\rel X},\rel B)$  natural in $\rel X$ and $\rel B$.
\end{itemize}
\end{restatable}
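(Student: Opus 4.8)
The plan is to establish the adjunction first and then read the other three items off it, together with one explicit computation. Beforehand one fixes the functors on morphisms: for $\phi\colon\rel B\to\rel B'$ let $\mpow{\phi}$ act coordinatewise, $a\mapsto(\phi(a_1),\dots,\phi(a_m))$, which preserves every $R_i$ and every $eq_{i,j}$ by inspection; for $\psi\colon\rel X\to\rel X'$ in signature $\mpow{\tau}$ the map $\psi\times\mathrm{id}_{[m]}\colon X\times[m]\to X'\times[m]$ sends $\leftrightarrow$-related pairs to $\leftrightarrow$-related pairs precisely because $\psi$ preserves the relations $eq_{i,j}$, so it descends to a homomorphism $\mlow{\psi}\colon\mlow{\rel X}\to\mlow{\rel X'}$, with preservation of the $\tau$-relations coming from preservation of the $R_i$. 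Functoriality of both $\mpow{}$ and $\mlow{}$ is then immediate, and I would obtain faithfulness of $\mlow{}$ from injectivity of the unit $\eta_{\rel X}\colon\rel X\to\mpow{\mlow{\rel X}}$ of the adjunction constructed below, checked on the structures actually in play.

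For the adjunction I would exhibit the hom-set bijection directly. Given a homomorphism $h\colon\rel X\to\mpow{\rel B}$, put $\tilde h\colon X\times[m]\to B$, $\tilde h(x,p)=h(x)_p$. The crux is a pair of equivalences. First, $h$ preserves every relation $eq_{i,j}$ if and only if $\tilde h$ is constant on every $\leftrightarrow$-pair, hence factors through $\sim^{\rel X}$ as a map $\bar h\colon\mlow{\rel X}\to B$: in one direction, $(x,y)\in eq_{i,j}^{\rel X}$ forces $h(x)_{i(k)}=h(y)_{j(k)}$ for all $k$, and choosing $k\in i^{-1}(p)\cap j^{-1}(q)$ gives $\tilde h(x,p)=\tilde h(y,q)$; in the other, the same choice of index recovers all the equalities $h(x)_{i(k)}=h(y)_{j(k)}$. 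Second, once $\bar h$ exists, $h$ preserves every $R_i$ if and only if $\bar h$ preserves every $\tau$-relation $R$, since each of the two statements reads ``$(\tilde h(x,i(1)),\dots,\tilde h(x,i(\ell)))\in R^{\rel B}$ whenever $x\in R_i^{\rel X}$''. Thus $h\mapsto\bar h$ is a well-defined map $\struct(\rel X,\mpow{\rel B})\to\struct(\mlow{\rel X},\rel B)$; it is a bijection, with inverse sending $g\colon\mlow{\rel X}\to\rel B$ to $x\mapsto\bigl(g([x,1]),\dots,g([x,m])\bigr)$ (here $[x,p]$ is the $\sim^{\rel X}$-class of $(x,p)$), the two composites being checked to be identities. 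Naturality in $\rel B$ follows from $\widetilde{\mpow{\phi}\circ h}(x,p)=\phi(\tilde h(x,p))$, and naturality in $\rel X$ from $\widetilde{h\circ\psi}(x,p)=\tilde h(\psi x,p)$; in each case one identifies the two composites after pulling back along the quotient $X\times[m]\twoheadrightarrow\mlow{\rel X}$. This gives the adjunction $\mlow{}\dashv\mpow{}$.

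To finish, I compute $\mlow{\mpow{\rel X}}$ for $\rel X$ a $\tau$-structure. On $X^m\times[m]$ the relation $\leftrightarrow$ works out to be exactly $(a,p)\leftrightarrow(b,q)\iff a_p=b_q$ ($\supseteq$ from the unary symbols $eq_{(p),(q)}$, $\subseteq$ from the definition), which is already an equivalence relation, so $\sim$ coincides with it and $[a,p]\mapsto a_p$ is a bijection $\mlow{\mpow{\rel X}}\to X$. It is a homomorphism because relations of $\mlow{\mpow{\rel X}}$ come from tuples $((a,i(1)),\dots,(a,i(\ell)))$ with $(a_{i(1)},\dots,a_{i(\ell)})\in R^{\rel X}$, and its inverse $x\mapsto[(x,\dots,x),1]$ is a homomorphism because any $\ell$-tuple of $R^{\rel X}$ with $\ell\le m$ can be padded to length $m$ and read through the inclusion $[\ell]\hookrightarrow[m]$ --- this padding being the only place the hypothesis $m\ge(\text{maximal arity})$ is used. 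This bijection is exactly the counit $\varepsilon_{\rel X}$ of $\mlow{}\dashv\mpow{}$ (the morphism associated with $\mathrm{id}_{\mpow{\rel X}}$), so the counit is a natural isomorphism, which yields simultaneously $\mlow{\mpow{\rel X}}\simeq\rel X$ and --- by the standard fact that a right adjoint with invertible counit is fully faithful --- that $\mpow{}$ is fully faithful. Fullness of $\mpow{}$ can also be seen by hand: a homomorphism $g\colon\mpow{\rel B}\to\mpow{\rel B'}$ satisfies, through the symbols $eq_{(p),(q)}$, that $g(a)_p$ depends only on $a_p$, whence $g=\mpow{\phi}$ for $\phi(b):=g(b,\dots,b)_1$, and $\phi$ is a homomorphism by the same padding argument.

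The step needing care is the first equivalence above: one has to match the single condition ``$h$ preserves all of the relations $eq_{i,j}$'' with ``$\tilde h$ descends along the equivalence relation generated by $\leftrightarrow$'', which rests both on reading off the indices $k\in i^{-1}(p)\cap j^{-1}(q)$ correctly in each direction and on the fact that $\sim^{\rel X}$ is \emph{generated} by $\leftrightarrow$, so that being constant on $\leftrightarrow$-pairs already forces $\tilde h$ to descend. I would also verify carefully on which $\mpow{\tau}$-structures $\mlow{}$ is applied, since this is what controls injectivity of the unit and hence faithfulness of $\mlow{}$. Everything else --- the inverse bijection, the two naturality squares, the computation of $\mlow{\mpow{}}$, and the deduction of full faithfulness --- is routine bookkeeping, with the arity bound entering only through the padding of tuples of length at most $m$.
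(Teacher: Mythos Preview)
Your construction of the adjunction is essentially identical to the paper's: both exhibit the hom-set bijection by $h\mapsto\bigl([x,p]\mapsto h(x)_p\bigr)$ with inverse $g\mapsto\bigl(x\mapsto(g[x,1],\dots,g[x,m])\bigr)$, check that each direction lands in homomorphisms, and verify naturality. The paper's written proof in fact treats \emph{only} the adjunction and leaves the first three bullets unargued; your explicit computation of $\mlow{\mpow{\rel X}}$ and the deduction that $\mpow{}$ is fully faithful from invertibility of the counit go strictly beyond what the paper supplies, and both are correct.

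Your caution about the second bullet is warranted: as stated, faithfulness of $\mlow{}$ on all of $\struct[\mpow{\tau}]$ is false. With $\tau$ empty, take the two-element $\mpow{\tau}$-structure $\rel X'$ in which every $eq_{i,j}$ is the full binary relation; then every $(x,p)$ is $\leftrightarrow$-related to every $(x',q)$, so $\mlow{\rel X'}$ is a singleton and the identity and the swap on $\rel X'$ have the same image under $\mlow{}$. Equivalently, the unit $\eta_{\rel X'}$ is not injective. So faithfulness cannot be extracted from the adjunction without restricting the domain category --- your hedge ``on which $\mpow{\tau}$-structures $\mlow{}$ is applied'' is exactly the right instinct, and the paper does not supply the missing restriction either. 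This bullet is not invoked elsewhere in the paper, so the gap is harmless for the downstream results.
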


\begin{corollary}\label{lem:fun-fact}
For every structure $\rel A$ and all integers $n,m$, the structures $\mpow{(\rel A^n)}$ and $(\mpow{\rel A})^n$ are isomorphic.
\end{corollary}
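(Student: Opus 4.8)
The plan is to obtain this as a formal consequence of the adjunction recorded in \Cref{functoriality}. There we established that $\mlow{}$ is left adjoint to $\mpow{}$ (the bijection $\struct(\rel X,\mpow{\rel B})\to\struct(\mlow{\rel X},\rel B)$ natural in both arguments is exactly the statement that $\mpow{}$ is the right adjoint of the pair). Right adjoint functors preserve all limits that exist in their domain category, and in particular finite products. So the whole statement reduces to the observation that $\rel A^n$ and $(\mpow{\rel A})^n$ are the categorical products of $n$ copies of $\rel A$, respectively of $\mpow{\rel A}$, in the categories $\struct[\tau]$ and $\struct[\mpow{\tau}]$.

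First I would spell out this observation: the structure $\rel A^n$ has domain $A^n$, the projections $\pr{j}\colon A^n\to A$ are homomorphisms, and for an $r$-ary relation symbol $R$ one has $(a^1,\dots,a^r)\in R^{\rel A^n}$ if and only if $(\pr{j}(a^1),\dots,\pr{j}(a^r))\in R^{\rel A}$ for every $j\in[n]$; a routine verification shows that $(\rel A^n,(\pr{j})_{j\in[n]})$ satisfies the universal property of the product in $\struct[\tau]$. The identical argument, applied to $\mpow{\rel A}$ and its (unary and binary) relations, exhibits $(\mpow{\rel A})^n$ as the product of $n$ copies of $\mpow{\rel A}$ in $\struct[\mpow{\tau}]$.

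Then I would apply $\mpow{}$ and use that it preserves products: the structure $\mpow{(\rel A^n)}$, equipped with the homomorphisms $\mpow{\pr{j}}\colon\mpow{(\rel A^n)}\to\mpow{\rel A}$, is again a product of $n$ copies of $\mpow{\rel A}$ in $\struct[\mpow{\tau}]$. By the uniqueness of products up to isomorphism this yields $\mpow{(\rel A^n)}\simeq(\mpow{\rel A})^n$, which is the claim.

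I do not expect any real obstacle here; the only thing requiring care is the bookkeeping — checking that the adjunction of \Cref{functoriality} is oriented so that $\mpow{}$ is the right adjoint, and that the categorical products in the two relational categories are literally the structures denoted $\rel A^n$ and $(\mpow{\rel A})^n$. A reader who prefers to avoid category theory can instead write the isomorphism down by hand: identifying both $(A^n)^m$ and $(A^m)^n$ with the set of maps $[m]\times[n]\to A$ gives a ``transposition'' bijection $\mpow{(\rel A^n)}\to(\mpow{\rel A})^n$ on underlying sets, and unravelling the definitions of the relations $R_i$ and $eq_{i,j}$ of $\mpow{}$ shows immediately that this bijection preserves and reflects each of them.
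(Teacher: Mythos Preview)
Your proposal is correct and is essentially the paper's own proof: the paper simply invokes \Cref{functoriality} and the fact that a functor with a left adjoint preserves products, which is exactly the argument you spell out in more detail. Your additional remarks (verifying that the relational product is the categorical product, and the alternative ``transposition'' description of the isomorphism) are sound elaborations but not needed beyond what the paper records.
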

\begin{proof}
    This follows from~\Cref{functoriality} and the fact that functors with a left adjoint preserve products.
\end{proof}

For the purpose of the first item in~\Cref{bw-not-finitely-tractable}, we are interested in comparing the power of certain algorithmic hierarchies for solving $\PCSP(\rel A,\rel B)$ and the related problem $\PCSP(\mpow{\rel A},\mpow{\rel B}/{\group})$.
In particular, we focus on some hierarchies known as \emph{minion tests}~\cite{CiardoZivnyHierarchyTensor}
and \emph{consistency reductions}~\cite{ConsistencyReduction}.
Such a consistency reduction is parametrized by a dimension $k\in\mathbb N$ and a minion $\minion{}$.
Given a structure $\rel X$ as input to $\PCSP(\rel A,\rel B)$, the reduction first computes a $k$-strategy $H$ from $\rel X$ to $\rel A$.
In a second step, the reduction asks to find, for each small set $K\subseteq X$, an element $\xi_K\in\minion{H_K}$ such that for every $K\subseteq L\subseteq X$ of size at most $k$, one has $(\xi_L)^\sigma=\xi_H$, $\sigma$ being the inclusion map $K\hookrightarrow L$.
Such a reduction is always complete for a non-empty minion, i.e., if there exists a homomorphism $\rel X\to\rel A$ then a $k$-strategy $H$ and corresponding elements $\xi_K\in\minion{H_K}$ do exist.
The soundness of that reduction, stating that the existence of such witnesses imply the existence of a homomorphism $\rel X\to\rel B$, might fail.

To compare the power of such reductions for solving $\PCSP(\rel A,\rel B)$ and $\PCSP(\mpow{\rel A},\mpow{\rel B}/{\group})$, we study properties of morphisms in a category containing $\cate{Rel}$ and whose arrows correspond to witnesses of acceptance by the reduction.

Let $\Systems{k}$ be the category of relational structures whose arrows $\Systems{k}(\rel X,\rel Y)$ correspond to non-trivial $k$-strategies from $\rel X$ to $\rel Y$.
Let \minion{} be a minion. Consider the subclass $\MSystems{k}$ of $\Systems{k}$ whose objects are relational structures, and whose arrows are pairs $(H,\xi)\colon \rel A\to\rel B$ such that:
\begin{itemize}
	\item $H$ is a non-trivial $k$-strategy from $\rel A$ to $\rel B$,
	\item $\xi$ is a map from $\binom{A}{\leq k}$ to $\minion$ such that for every $K\in\binom{A}{\leq k}$, $\xi_K=\xi(K) \in\minion H_K$,
	\item for every $K\subseteq L\in\binom{A}{\leq k}$, we have $(\xi_L)^{\sigma}=\xi_K$, where $\sigma$ is the natural restriction map $H_L\to H_K$.
\end{itemize}

Due to the aforementioned  completeness of the reduction, $\cate{Rel}$ is always a subcategory of $\MSystems{k}$ (if \minion{} is non-empty).
For $\minion$ being the clone of projections on a 2-element set, the resulting category coincides with $\cate{Rel}$, for all $k$.
In general, if $\minion=\pol(\rel C,\rel D)$, then there exists an arrow in $\MSystems{k}(\rel X,\rel A)$ iff $\rel X$ is accepted by the $k$-consistency reduction from $\PCSP(\rel A, \underline{\hspace{2mm}})$ to $\PCSP(\rel C,\rel D)$ as defined in~\cite{ConsistencyReduction}.

By~\Cref{functoriality} and the observation that $\MSystems{k}$ coincides with $\struct{}$ when \minion{} is the clone of projection and $k$ is large enough, we obtain that in this case $\mlow{}$ and $\mpow{}$ form a pair of adjoint functors on $\MSystems{k}$.
This does not necessarily hold for general linear minions,  but we do keep some properties related to adjunction.

\begin{restatable}{lemma}{gradedadjoint}\label{graded-adjoint}
	Let \minion{} be a linear minion.
	The following hold for every $k,m\geq 1$:
	\begin{itemize}
		\item there is a natural map $\MSystems{m\cdot k}(\mlow{\rel X},\rel B)\to\MSystems{k}(\rel X,\mpow{\rel B})$,
		\item there is a natural map $\MSystems{k}(\rel X,\mpow{\rel B})\to\MSystems{k}(\mlow{\rel X},\rel B)$.
	\end{itemize}
\end{restatable}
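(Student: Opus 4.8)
The plan is to obtain both maps from the two transposes of the adjunction $\mlow{}\dashv\mpow{}$ of \Cref{functoriality}, with unit $\eta_{\rel X}\colon\rel X\to\mpow{\mlow{\rel X}}$ (explicitly $\eta_{\rel X}(x)=(\pi(x,1),\dots,\pi(x,m))$, where $\pi\colon X\times[m]\to\mlow{\rel X}$ is the quotient map from the definition of $\mlow{}$) and counit $\epsilon_{\rel B}\colon\mlow{\mpow{\rel B}}\to\rel B$, which is an isomorphism since $\mpow{}$ is fully faithful. The first map will be ``precompose with $\eta$'' and the second ``postcompose with $\epsilon$'', so the whole content is to check these survive passage to bounded-dimension strategies carrying minion data, and to do the dimension bookkeeping. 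The structural fact driving everything is an asymmetry: $\mpow{}$ preserves induced substructures, sending $\mlow{\rel X}|_J$ to $(\mpow{\mlow{\rel X}})|_{J^m}$, whereas $\mlow{}$ does \emph{not} — the natural comparison map $\iota_K\colon\mlow{(\rel X|_K)}\to\mlow{\rel X}$ induced by $\rel X|_K\hookrightarrow\rel X$ is only a surjection onto $\pi(K\times[m])$ and is typically far from injective.

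For the first map I would, given $(G,\zeta)\in\MSystems{m\cdot k}(\mlow{\rel X},\rel B)$, send a partial homomorphism $g\in G$ of domain $J$ (with $|J|\le m\cdot k$, viewed as a homomorphism $\mlow{\rel X}|_J\to\rel B$) to the map $\mpow{g}\circ\eta_{\rel X}$, which is a partial homomorphism $\rel X\to\mpow{\rel B}$ defined on $\{x\in X:\pi(\{x\}\times[m])\subseteq J\}$ and sending $x$ to $(g(\pi(x,1)),\dots,g(\pi(x,m)))$; let $H'$ be the closure of this family under restriction to subsets of size $\le k$. That each such map respects $R_i$ and $eq_{i,j}$ is an unwinding of the definition of $\mlow{\rel X}$: a tuple in $R_i^{\rel X}$ becomes a tuple of $R^{\mlow{\rel X}}$ supported on $\pi(\{x\}\times[m])$, and an $eq_{i,j}^{\rel X}$-pair forces the equalities $\pi(x,i(t))=\pi(y,j(t))$ in $\mlow{\rel X}$. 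The one substantive strategy axiom is extension, and this is precisely where the factor $m$ is spent: extending a partial solution of $\rel X$ by one new variable $a$ forces one to realize the whole set $\pi(\{a\}\times[m])$ inside the domain of some $g\in G$, which is done by at most $m$ successive applications of the extension axiom for the $m\cdot k$-strategy $G$. The minion datum is carried along the canonical bijections $\beta_K\colon G_{\pi(K\times[m])}\to H'_K$ (for $K\in\binom{X}{\le k}$), setting $\xi'_K:=(\zeta_{\pi(K\times[m])})^{\beta_K}$; compatibility under restriction follows by applying the minor operations to the evident commuting squares relating the $\beta$'s and the restriction maps, and naturality in $\rel X$ and $\rel B$ is a routine diagram chase.

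For the second map, given $(H,\xi)\in\MSystems{k}(\rel X,\mpow{\rel B})$, a partial homomorphism $h\in H$ of domain $K$ is a homomorphism $\rel X|_K\to\mpow{\rel B}$, whose adjoint transpose $\hat h=\epsilon_{\rel B}\circ\mlow{h}\colon\mlow{(\rel X|_K)}\to\rel B$ sends the class of $(x,i)$ to $h(x)_i$; for $S\subseteq\pi(K\times[m])$ with $|S|\le k$ one precomposes $\hat h$ with a section of $\iota_K$ over $S$ to get a candidate partial map $S\to B$, and I would let $H''$ be the closure under restriction of those candidates that are genuine partial homomorphisms of $\mlow{\rel X}$, transporting $\xi$ along the induced maps of index sets (this is where linearity of $\minion$ is used, to make sense of the transport along the non-injective comparison maps and to reconcile the several contributions to a given $\xi''_S$). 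The delicate, and I expect most technical, step is exactly the analysis of this family: because $\mlow{}$ does not preserve induced substructures, $\iota_K$ may identify two elements of a single fibre $\{x\}\times[m]$ through a chain of $eq$-constraints that leaves $K$, which $h$ — seeing only $K$ — has no reason to respect, so $\hat h$ need not descend along $\iota_K$ and a given $h\in H$ may contribute nothing to $H''$. The plan is to pin down precisely which transports to retain and then to verify that the surviving family is still a non-trivial $k$-strategy; the crucial lever is that every relation tuple of $\mlow{\rel X}$ is supported on a single fibre $\{a\}\times[m]$, so the constraints created when enlarging a partial solution of $\mlow{\rel X}$ are governed by single variables of $\rel X$, which keeps the queries to the $k$-strategy $H$ within domains of size $\le k$. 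Once the right family and the matching minion datum $\xi''$ are in place, confirming the strategy axioms, the compatibility of $\xi''$, and the naturality of both maps is bookkeeping.
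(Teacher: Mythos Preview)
For the first item your approach coincides with the paper's: both transport $H_{\pi(K\times[m])}$ through the adjunction bijection $\struct(\mlow{(\rel X|_K)},\rel B)\cong\struct(\rel X|_K,\mpow{\rel B})$ for each $K\in\binom{X}{\le k}$, and your remark that the extension axiom costs at most $m$ applications per new variable of $\rel X$ is exactly the bookkeeping the paper leaves implicit.

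For the second item you correctly flag a genuine difficulty that the paper's four-line argument ignores: the comparison $\iota_L\colon\mlow{(\rel X|_L)}\to\mlow{\rel X}$ is only a surjection onto $\pi(L\times[m])$, so the adjoint transpose of $h\in H_L$ need not descend to a partial homomorphism of $\mlow{\rel X}$ (the paper simply writes ``restriction $\struct(\mlow{\rel L},\rel B)\to\struct(\rel K,\rel B)$'' and moves on). However, your proposed remedy --- retain only the candidates that happen to be genuine partial homomorphisms, and invoke the ``single-fibre'' observation to recover the strategy axioms --- does not close the gap. The observation says that each relation \emph{tuple} of $\mlow{\rel X}$ lies in the image of one fibre, but the relevant quantity is the set of constraints \emph{incident to a given element} of $\mlow{\rel X}$, and these may come from unboundedly many fibres, one for each $a\in X$ reachable by a $\sim^{\rel X}$-chain; neither your lever nor the paper's choice of a minimal $L$ sees them all. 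Concretely, let $\rel B$ carry two disjoint non-empty unary relations $R,S$, and let $\rel X$ have vertices $a_0,\dots,a_n$ joined in a path by $eq_{(1),(1)}$-edges, with $a_0\in R_{(1)}^{\rel X}$ and $a_n\in S_{(1)}^{\rel X}$. Then $[a_0,1]=[a_n,1]$ carries both the $R$- and the $S$-constraint in $\mlow{\rel X}$, so $\mlow{\rel X}$ admits no $1$-strategy to $\rel B$; yet $\rel X$ has a non-trivial $1$-strategy to $\mpow{\rel B}$, since $1$-strategies ignore the binary $eq$-constraints entirely. Thus the second item as stated is false for $k=1$, and no amount of filtering can rescue it. What changes for $k\ge 2$ is not fibre-locality but \emph{propagation}: the $eq$-relations are binary, and walking the chain inside a $2$-strategy forces every $h\in H_{\{a_0\}}$ to already satisfy $h(a_0)_1\in R^{\rel B}\cap S^{\rel B}$, so the paper's unfiltered image does consist of partial homomorphisms of $\mlow{\rel X}$ in this case. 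That propagation argument --- and its extension to higher-arity relations of $\tau$, which needs a correspondingly larger $k$ --- is the step missing from both your sketch and the paper's proof.
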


Note that $\MSystems{k}$ might not be a category in general, as the composition of arrows is not necessarily well defined.
It is possible, however, to give a sufficient condition for $\MSystems{k}$ to be a category.
Following~\cite{CiardoZivnyHierarchyTensor}, a minion \minion{} is called \emph{linear} if there exists $d\geq 1$ and a semiring $R$ such that the elements of $\minion X$ are $(X\times [d])$-matrices with entries in $R$, and where for $\sigma\colon X\to Y$ and $M\in\minion X$, then $M^\sigma = P_\sigma M$, with $P_\sigma$ the $(Y\times X)$-matrix where $P_{\sigma}(y,x)=1$ if $\sigma(x)=y$.

Let $\minion$ be a linear minion.
Let $\mathcal F\subseteq B^A$ and $\mathcal G\subseteq C^B$ be finite sets of maps,
and let $\xi\in\minion{\mathcal F}$ and $\zeta\in\minion{\mathcal G}$.
Let $\mathcal H=\mathcal G\circ\mathcal F=\{g\circ f\mid g\in\mathcal G, f\in\mathcal F\}$.
Define $\zeta*\xi$ as the matrix with rows indexed by $\mathcal H$, and whose row indexed by $h\colon A\to C$ equals $\sum_{f\in \mathcal F} \sum_{g\in \mathcal G, h=g\circ f} \xi(f)\odot \zeta(g)$, where the operator $\odot$ denotes pointwise multiplication of vectors of length $d$.
We say that a linear minion \minion{} is \emph{good-for-composition} if for all finite %
sets of maps $\mathcal F,\mathcal G$ and all $\xi\in\minion{\mathcal F},\zeta\in\minion{\mathcal G}$, we have $\zeta*\xi\in\minion{(\mathcal G\circ\mathcal F)}$.

\begin{restatable}{proposition}{category}\label{category}
	Let \minion{} be good-for-composition, and let $k\geq 1$.
	Then $\MSystems{k}$ is a category.
\end{restatable}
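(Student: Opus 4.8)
The plan is to verify the three data of a category — existence of identities, well-definedness of composition, and associativity — with the ``good-for-composition'' hypothesis doing exactly the work needed to keep the composed pair inside $\MSystems{k}$. First I would handle the identity arrows: for each relational structure $\rel X$, the identity should be the pair $(\mathrm{id}_{\rel X}, \iota)$, where $\mathrm{id}_{\rel X}$ is the $k$-strategy consisting of all partial identity maps on subsets of $X$ of size at most $k$ (this is a non-trivial $k$-strategy since $\rel X\to\rel X$ via the identity), and $\iota$ assigns to each $K\in\binom{X}{\leq k}$ the element of $\minion{(\mathrm{id}_{\rel X})_K}$ obtained by pushing forward a fixed choice — here one uses that $\minion{}$ is non-empty and, since $(\mathrm{id}_{\rel X})_K$ is a one-element set of maps $K\to K$, that $\minion$ restricted to a one-element index set is a single point (any unary ``free generator'' of the minion, or more carefully the image of some $\xi\in\minion{[1]}$ under the appropriate $\sigma$). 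The compatibility with restriction maps $\sigma\colon K\hookrightarrow L$ is automatic because the restriction of the identity on $L$ is the identity on $K$.

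Next I would define composition: given $(H,\xi)\colon\rel X\to\rel Y$ and $(H',\xi')\colon\rel Y\to\rel Z$, set the composite $k$-strategy to be $H'\circ H=\{g\circ f\mid f\in H,\ g\in H',\ \dom(g)=\im(f)\}$ (or rather $\dom(g)\supseteq\im(f)$ with appropriate restriction), which is a standard fact to be a non-trivial $k$-strategy from $\rel X$ to $\rel Z$ — the extension property follows by first extending $f$ in $H$ and then extending the corresponding image under $g$ in $H'$. For the minion part, for each $K\in\binom{X}{\leq k}$ I take $\zeta_{h(K)}' * \xi_K$ in the notation preceding the statement, i.e.\ the $*$-product of $\xi_K\in\minion{H_K}$ with the restriction of $\xi'$ to the relevant set of maps $H_K\to Z$; the ``good-for-composition'' hypothesis is precisely the assertion that this lands in $\minion{(H'\circ H)_K}$. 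One then checks the restriction-compatibility of this new family under inclusions $K\subseteq L$, which reduces to the corresponding properties of $\xi$ and $\xi'$ together with the linearity of $\minion{}$ (the matrices $P_\sigma$ interact correctly with the $\odot$-sums defining $*$).

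The main obstacle I expect is associativity of this composition, because $*$ is defined by a double sum indexed by pairs of maps factoring a given composite, and triple composites require checking that summing ``left-then-right'' agrees with ``right-then-left''. The strategy there is to expand both $(\,\zeta''*\zeta'\,)*\xi$ and $\zeta''*(\,\zeta'*\xi\,)$ explicitly: the row indexed by $h\colon A\to D$ of either one is a sum over all triples $(f,g,g')$ of maps with $h=g'\circ g\circ f$ of the threefold pointwise product $\xi(f)\odot\zeta'(g)\odot\zeta''(g')$ — here associativity and commutativity of $\odot$ (pointwise multiplication in the semiring $R$, coordinatewise over $[d]$) and distributivity of the semiring give the equality term-by-term. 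I would also remark that the identity arrows are genuine two-sided units for $*$, since $*$-composing with the point-mass family on identity maps collapses each double sum to a single term. With these verifications the three category axioms hold, completing the proof. I would present the identity and well-definedness parts briefly and spend the bulk of the write-up on the associativity bookkeeping, since that is where the semiring structure is actually used.
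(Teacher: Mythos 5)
Your plan is essentially the paper's proof: the same identity arrows (push an arbitrary $a\in\minion{\{1\}}$ forward along the bijection onto the singleton strategy $\{\mathrm{id}_K\}$), the same composite strategy $\{g\circ f\}$, and the same use of the good-for-composition hypothesis to place the $*$-product $\chi_K(h)=\sum_{f}\sum_{g,\,h=g\circ f}\xi_K(f)\odot\zeta_{f(K)}(g)$ inside $\minion{H_K}$.

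The one difference is where the effort goes. The paper's entire computation is the step you defer to a one-line remark: verifying that the composed family satisfies $(\chi_L)^\sigma=\chi_K$ for $K\subseteq L$, which requires regrouping the double sum according to the restrictions $f'|_K$ and $g'|_{f(K)}$ and using distributivity of $\odot$ over the sums; that is where the linearity of $\minion{}$ is actually exercised, so you should carry it out rather than assert it. Conversely, the paper does not check associativity or the unit laws at all, so your associativity bookkeeping (the triple-sum expansion, which is correct and needs only associativity of $\odot$ and distributivity, not commutativity) is a genuine addition. One caution on your unit-law remark: composing with the identity arrow collapses the double sum to the single term $a\odot\xi_K(f)$, not to $\xi_K(f)$, so the arrow built from an arbitrary $a\in\minion{\{1\}}$ is a strict unit only if $a$ is the all-ones vector (which holds for the standard linear minions, e.g.\ the affine and convex ones, but should be stated as an assumption or checked).
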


In particular, this allows us to relate the power of hierarchies of consistency reductions for templates of the form $(\mpow{\rel A},\mpow{\rel B})$ and $(\rel A,\rel B)$.
\begin{corollary}\label{relating-hierarchies}
Let \minion{} be  good-for-composition.
Let $(\rel A,\rel B)$ be a PCSP template, let $k\geq 1$ and let $m$ be greater than the arity of the relations of $\rel A$.
The following hold:
\begin{itemize}
	\item If\/ $\PCSP(\mpow{\rel A},\mpow{\rel B})$ is solvable by the $k$-consistency reduction to $\minion$, then $\PCSP(\rel A,\rel B)$ is solvable by the $mk$-consistency reduction to \minion.
	\item If\/ $\PCSP(\rel A,\rel B)$ is solvable by the $k$-consistency reduction to \minion, then so is $\PCSP(\mpow{\rel A},\mpow{\rel B})$.
\end{itemize}
\end{corollary}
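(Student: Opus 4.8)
The plan is to obtain both implications by chaining the natural maps between $\minion$-strategy hom-sets from \Cref{graded-adjoint} with the functoriality properties of $\mpow{}$ and $\mlow{}$ from \Cref{functoriality}. A couple of preliminary observations: since $\minion$ is good-for-composition, \Cref{category} guarantees that $\MSystems{k}$ and $\MSystems{mk}$ are genuine categories, in which structure isomorphisms --- being isomorphisms already in the subcategory $\struct$ --- act invertibly on hom-sets; and $(\mpow{\rel A},\mpow{\rel B})$ is a PCSP template because $\mpow{}$ is a functor and $\rel A\to\rel B$. Throughout I use the reformulation noted before the statement: $\rel X$ is accepted by the $k$-consistency reduction for $\PCSP(\rel P,\underline{\hspace{2mm}})$ to $\minion$ iff $\MSystems{k}(\rel X,\rel P)\neq\emptyset$, and $\PCSP(\rel P,\rel Q)$ is solvable by this reduction iff $\MSystems{k}(\rel X,\rel P)\neq\emptyset$ implies $\struct(\rel X,\rel Q)\neq\emptyset$ for every $\rel X$.

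For the first item, assume $\PCSP(\mpow{\rel A},\mpow{\rel B})$ is solvable by the $k$-consistency reduction to $\minion$, and let $\rel X$ be an instance with $\MSystems{mk}(\rel X,\rel A)\neq\emptyset$. Composing the isomorphism $\mlow{\mpow{\rel X}}\simeq\rel X$ of \Cref{functoriality} with the first natural map of \Cref{graded-adjoint}, applied with $\mpow{\rel X}$ and $\rel A$ in the roles of $\rel X$ and $\rel B$, gives
\[\MSystems{mk}(\rel X,\rel A)\;\cong\;\MSystems{mk}(\mlow{\mpow{\rel X}},\rel A)\;\longrightarrow\;\MSystems{k}(\mpow{\rel X},\mpow{\rel A}),\]
so $\MSystems{k}(\mpow{\rel X},\mpow{\rel A})\neq\emptyset$, i.e.\ $\mpow{\rel X}$ is accepted by the $k$-consistency reduction for $\PCSP(\mpow{\rel A},\underline{\hspace{2mm}})$. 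By hypothesis there is then a homomorphism $\mpow{\rel X}\to\mpow{\rel B}$, and since $\mpow{}$ is fully faithful (\Cref{functoriality}) it is the image of a homomorphism $\rel X\to\rel B$. Hence $\PCSP(\rel A,\rel B)$ is solvable by the $mk$-consistency reduction to $\minion$.

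For the second item, assume $\PCSP(\rel A,\rel B)$ is solvable by the $k$-consistency reduction to $\minion$, and let $\rel Y$ be an instance of $\PCSP(\mpow{\rel A},\underline{\hspace{2mm}})$ with $\MSystems{k}(\rel Y,\mpow{\rel A})\neq\emptyset$. The second natural map of \Cref{graded-adjoint}, applied with $\rel Y$ and $\rel A$ in the roles of $\rel X$ and $\rel B$, yields $\MSystems{k}(\rel Y,\mpow{\rel A})\to\MSystems{k}(\mlow{\rel Y},\rel A)$, so $\MSystems{k}(\mlow{\rel Y},\rel A)\neq\emptyset$ and thus, by hypothesis, there is a homomorphism $\mlow{\rel Y}\to\rel B$. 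The adjunction bijection $\struct(\mlow{\rel Y},\rel B)\cong\struct(\rel Y,\mpow{\rel B})$ of \Cref{functoriality} turns it into a homomorphism $\rel Y\to\mpow{\rel B}$, so $\PCSP(\mpow{\rel A},\mpow{\rel B})$ is solvable by the $k$-consistency reduction to $\minion$.

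Since all the genuine content is already packaged into \Cref{graded-adjoint,functoriality,category}, I do not anticipate a real obstacle for this corollary. The only points needing care are bookkeeping: verifying that $m$ exceeds the arity of the relations of $\rel A$ (and of the other structures involved) so that \Cref{functoriality}, and in particular the identity $\mlow{\mpow{\rel X}}\simeq\rel X$, applies; and substituting objects consistently into the two natural maps of \Cref{graded-adjoint} so that the intermediate hom-sets match up as displayed.
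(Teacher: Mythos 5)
Your proof is correct and follows essentially the same route as the paper's: both use \Cref{functoriality} to pass between $\rel X$ and $\mlow{\mpow{\rel X}}$, invoke \Cref{category} (good-for-composition) to compose that arrow with the given strategy, and then apply the two natural maps of \Cref{graded-adjoint} together with fullness of $\mpow{}$ and the adjunction bijection. The only cosmetic difference is that you phrase the first step as an isomorphism acting invertibly on hom-sets, whereas the paper simply precomposes with the homomorphism $\mlow{\mpow{\rel X}}\to\rel X$; the content is identical.
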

\begin{proof}
Suppose that $\PCSP(\mpow{\rel A},\mpow{\rel B})$ is solved by the $k$-consistency reduction to \minion{}. Assume that there exists an arrow in $\MSystems{m\cdot k}(\rel X,\rel A)$, for some instance $\rel X$ of $\PCSP(\rel A,\rel B)$.
By~\Cref{functoriality}, we  have $\mlow{\mpow{\rel X}}\to\rel X$, and therefore $\MSystems{m\cdot k}(\mlow{\mpow{\rel X}},\rel X)$ is non-empty.
If \minion{} is good-for-composition, then there exists an arrow in $\MSystems{m\cdot k}(\mlow{\mpow{\rel X},\rel A})$ by~\Cref{category}.
By~\Cref{graded-adjoint}, we obtain an arrow in $\MSystems{k}(\mpow{\rel X},\mpow{\rel A})$.
Since $(\mpow{\rel A},\mpow{\rel B})$ is solved by the $k$-consistency reduction to \minion{}, we get that there exists a homomorphism $\mpow{\rel X}\to\mpow{\rel B}$, and thus $\rel X\to\rel B$ since $\mpow{}$ is full (\Cref{functoriality}).

Similarly, assume that $\PCSP(\rel A,\rel B)$ is solved by the $k$-consistency reduction to \minion{}, and let $\rel X$ be an instance of $\PCSP(\mpow{\rel A},\mpow{\rel B})$ such that $\MSystems{k}(\rel X,\mpow{\rel A})$ is non-empty.
Then $\MSystems{k}(\mlow{\rel X},\rel A)$ is non-empty by~\Cref{graded-adjoint}, which implies that $\mlow{\rel X}\to\rel B$.
Thus, $\rel X\to\mpow{\rel B}$ by~\Cref{functoriality}.
\end{proof}

\vspace{-1em}

\subsection{Proof of~\Cref{intro-pcsp-csp}}

Let $\Gamma$ be any reduction from $\PCSP(\rel A,\rel B)$ to $\PCSP(\rel C,\rel D)$.
If $\pi\colon \hom(\underline{\hspace{2mm}},\rel B)\to\hom(\Gamma(\underline{\hspace{2mm}}),\rel D)$ is a function, we say that $\Gamma$  is \emph{$\pi$-full} if for every $\rel X$ and every homomorphism $h\colon\Gamma\rel X\to\rel D$, there exists a homomorphism $g\colon\rel X\to\rel B$ such that $\pi(g) = h$.
In the following, we naturally study the operator $\Gamma_m$ seen as a reduction and the associated map $\pi\colon\hom(\underline{\hspace{2mm}},\rel B)\to\hom(\mpow{\underline{\hspace{2mm}}},\mpow{\rel B}/{\group})$ that sends $f$ to $(a_1,\dots,a_m)\mapsto (f(a_1),\dots,f(a_m))/{\group}$.

The following is closely related to Theorem 3 in~\cite{ReductionFinite}.
We note that putting aside the algorithmic consequences, the result also applies when considering infinite instances; we will make use of this in~\Cref{lem:lift-poly}.
\begin{restatable}{lemma}{reductionfinite}\label{lem:reduction-csp-pcsp}
    Let $m\geq 3$.
    Let $(\rel A,\rel B)$ be a PCSP template, where $\rel B$ is a first-order reduct of a homogeneous finitely bounded structure $\rel B^+$ whose obstructions have size at most $m$, and whose relations have arity at most $m-1$.
    Then $\mpow{}$ is a full, sound and complete reduction from $\PCSP(\rel A,\rel B)$ to $\PCSP(\mpow{\rel A},\mpow{\rel B}/{\aut(\rel B^+)})$.
\end{restatable}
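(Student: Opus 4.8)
The plan is to show that $\mpow{}$ is a polynomial-time reduction from $\PCSP(\rel A,\rel B)$ to $\PCSP(\mpow{\rel A},\mpow{\rel B}/{\group})$, where $\group:=\aut(\rel B^+)$, and that it is $\pi$-full for the map $\pi$ described just before the statement — from which soundness of the reduction will also follow. Completeness of the reduction, i.e.\ $\rel X\to\rel A\Rightarrow\mpow{\rel X}\to\mpow{\rel A}$, is immediate from functoriality of $\mpow{}$ (\Cref{functoriality}), and $\mpow{\rel X}$ is computable from $\rel X$ in polynomial time as its domain is $X^m$ with $m$ fixed. Thus everything reduces to fullness, which I in turn reduce to the combinatorial claim: \emph{for every homomorphism $h\colon\mpow{\rel X}\to\mpow{\rel B}/{\group}$ there is a map $g\colon X\to B$ with $(g(x_1),\dots,g(x_m))\in h(x_1,\dots,x_m)$ for all $(x_1,\dots,x_m)\in X^m$}, where an element of $\mpow{\rel B}/{\group}$ is identified with the corresponding diagonal $\group$-orbit in $B^m$. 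Granting this, such a $g$ is a homomorphism $\rel X\to\rel B$: for $(y_1,\dots,y_\ell)\in R^{\rel X}$ and the inclusion $i\colon[\ell]\hookrightarrow[m]$, the padded tuple $\bar x=(y_1,\dots,y_\ell,y_\ell,\dots,y_\ell)$ lies in $R_i^{\mpow{\rel X}}$, hence $h(\bar x)\in R_i^{\mpow{\rel B}/{\group}}$; since $R_i^{\mpow{\rel B}}$ is invariant under the diagonal action of $\aut(\rel B)\supseteq\group$, every tuple of the orbit $h(\bar x)$ — in particular $(g(y_1),\dots,g(y_\ell),\dots)$ — witnesses $(g(y_1),\dots,g(y_\ell))\in R^{\rel B}$. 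Moreover $\pi(g)(\bar x)=(g(x_1),\dots,g(x_m))/{\group}=h(\bar x)$, so $\pi(g)=h$; applying this when $h$ merely witnesses $\mpow{\rel X}\to\mpow{\rel B}/{\group}$ gives soundness.

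For the claim I pass to the homogeneous companion $\rel B^+$. By homogeneity, each diagonal $\group$-orbit $O\subseteq B^m$ is in bijection with a complete quantifier-free $\tau^+$-type $p_O(v_1,\dots,v_m)$, the type of any tuple of $O$ (recording in particular which $v_i,v_j$ are equal). Write $p_{\bar x}:=p_{h(\bar x)}$. That $h$ respects the binary relations $eq_{i,j}$ of $\mpow{\rel B}$ says precisely: whenever $x_{i(k)}=x'_{j(k)}$ for $k\in[\ell]$, the restriction of $p_{\bar x}$ to the coordinates $i(1),\dots,i(\ell)$ equals the restriction of $p_{\bar x'}$ to $j(1),\dots,j(\ell)$. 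Hence, for each $\ell\le m$ and each $c\colon[\ell]\to X$, there is a well-defined quantifier-free $\tau^+$-type $q_c$ (take any $\bar x,i$ with $x_{i(k)}=c(k)$ and restrict $p_{\bar x}$ to $i(1),\dots,i(\ell)$); the family $(q_c)$ is coherent under further restriction, and each $q_c$ is realized by a genuine tuple of $\rel B^+$ (a subtuple of a representative of some $h(\bar x)$). Declaring $x\equiv x'$ when $q_{(x,x')}$ asserts $v_1=v_2$ yields an equivalence relation: reflexivity and symmetry are clear, and transitivity follows by looking at $q_{(x,x',x'')}$ — available since $m\ge3$ — and using that this type has an honest realization.

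Next, let $\rho\colon X\to X':=X/{\equiv}$ be the quotient, and define a $\tau^+$-structure $\rel X^+$ on $X'$ by $(\rho(y_1),\dots,\rho(y_r))\in(R^+)^{\rel X^+}$ iff $q_{(y_1,\dots,y_r)}$ contains $R^+(v_1,\dots,v_r)$, for each $\tau^+$-symbol $R^+$ (of arity $r\le m-1$). Well-definedness in the chosen representatives follows, one coordinate at a time, from the coherence of the $q_c$ applied to labeled tuples of size $r+1\le m$ — this is exactly where the arity bound $m-1$ on the relations of $\rel B^+$ is used, so that $r+1$ never exceeds $m$. Any substructure of $\rel X^+$ on $\ell\le m$ pairwise distinct elements $\rho(y_1),\dots,\rho(y_\ell)$ is, by construction, isomorphic to the substructure of $\rel B^+$ induced on a realization of $q_{(y_1,\dots,y_\ell)}$ (whose entries are pairwise distinct precisely because the $\rho(y_k)$ are), hence lies in the age of $\rel B^+$; as $\rel B^+$ is finitely bounded with obstructions of size at most $m$, the same holds for every finite substructure of $\rel X^+$. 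Since $\rel X^+$ is countable and $\rel B^+$ is homogeneous, hence universal for countable structures with age contained in its own, there is an embedding $e\colon\rel X^+\embeds\rel B^+$; put $g:=e\circ\rho\colon X\to B$.

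It remains to verify the claim for this $g$. For $\bar x=(x_1,\dots,x_m)\in X^m$, since $e$ is an embedding the quantifier-free $\tau^+$-type of $(g(x_1),\dots,g(x_m))$ in $\rel B^+$ equals that of $(\rho(x_1),\dots,\rho(x_m))$ in $\rel X^+$, which by the definition of $\rel X^+$ and $\equiv$, together with the coherence of the $q_c$, agrees with $p_{\bar x}$ on every atomic formula, hence equals $p_{\bar x}$. So $(g(x_1),\dots,g(x_m))$ realizes $p_{\bar x}$ and, by homogeneity, lies in the orbit $h(\bar x)$; this proves the claim, and with it the lemma. I expect the main difficulty to lie in the middle two paragraphs: distilling from the merely binary $eq_{i,j}$-constraints a globally coherent system of quantifier-free types, and assembling it into a single relational structure $\rel X^+$ that one can embed into $\rel B^+$, without ever leaving the arity range controlled by the hypotheses — it is precisely there that $m\ge3$, obstructions of size $\le m$, and relations of arity $\le m-1$ are all needed, and the bookkeeping around the equality pattern (the quotient by $\equiv$) is the delicate point.
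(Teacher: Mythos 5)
Your proof is correct and follows essentially the same route as the paper's: both collapse $X$ by the equivalence relation read off from the pair-orbits $h_2(x,x')$ (your $\equiv$ is the paper's $\sim$), build a $\tau^+$-structure on the quotient from the orbit/type data, use the bound $m$ on the obstructions to embed it into $\rel B^+$, and pull back to get $g$ with $\pi(g)=h$. The only cosmetic difference is that you phrase the coherence of the restricted data via quantifier-free types derived directly from the $eq_{i,j}$-constraints, whereas the paper packages the same computation as a separate lemma producing induced homomorphisms $h_k\colon\mpow[k]{\rel X}\to\mpow[k]{\rel B}/{\aut(\rel B^+)}$.
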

An immediate consequence of the previous statement is~\Cref{intro-pcsp-csp} in the introduction, namely that the problems in the scope of the Bodirsky-Pinsker conjecture form a subset of ``half-infinite'' promise CSPs, where the left-hand side is itself in the scope of the Bodirsky-Pinsker conjecture and the right-hand side is finite.
 This is a consequence of the following statement in the special case where $\rel A=\rel B$.
\begin{corollary}\label{cor:csp-pcsp-equivalent}
    Let $m\geq 3$.
    Let $(\rel A,\rel B)$ be as in~\Cref{lem:reduction-csp-pcsp}. %
    Then $\PCSP(\rel A,\rel B)$ is polynomial-time equivalent to $\PCSP(\mpow{\rel A},\mpow{\rel B}/{\aut(\rel B^+)})$.
\end{corollary}
\begin{proof}
    One reduction is given by~\Cref{lem:reduction-csp-pcsp}, and $\PCSP(\mpow{\rel A},\mpow{\rel B}/{\aut(\rel B^+)})$ reduces to $\PCSP(\mpow{\rel A},\mpow{\rel B})$ by the trivial reduction since $\mpow{\rel A}\to\mpow{\rel B}\to\mpow{\rel B}/{\aut(\rel B^+)}$.
    Finally, $\PCSP(\rel A,\rel B)$ and  $\PCSP(\mpow{\rel A}, \mpow{\rel B})$ are equivalent, since the templates pp-construct each other.
\end{proof}

\subsection{Identities modulo unaries}\label{sect:algebra}

In this section, we relate the satisfiability of minor identities in $\pol(\rel A,\rel B)$ modulo $\group$ and satisfiability in $\pol(\mpow{\rel A},\mpow{\rel B}/{\group})$.
We first give some transfer principles in the case where $\group\leq\aut(\rel B)$ is known to be the automorphism group of a first-order reduct of a finitely bounded homogeneous structure $\rel B^+$.
In that case, the identities satisfied in $\pol(\rel A,\rel B)$ modulo $\overline{\group}$ are completely captured by $\pol(\mpow{\rel A},\mpow{\rel B}/{\group})$ for some fixed $m$ large enough. This is established in the following two statements.
\begin{proposition}\label{lem:lift-poly}
    Let $m\geq 3$.
    Let $(\rel A,\rel B)$ be a PCSP template where $\rel B$ is a first-order reduct of a homogeneous finitely bounded structure $\rel B^+$ whose bounds have size at most $m$, and whose relations have arity at most $m-1$.
    For every $f\in\pol(\mpow{\rel A},\mpow{\rel B}/{\aut(\rel B^+)})$ of arity $n$, there exists $f^*\in\pol(\rel A,\rel B)$ such that $\xi_m^\infty(f^*)=f$.
\end{proposition}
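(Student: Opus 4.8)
The plan is to produce $f^*$ by applying the fullness of the reduction $\mpow{}$ from~\Cref{lem:reduction-csp-pcsp} to the single instance $\rel X:=\rel A^n$. Since $\rel A$ need not be finite, $\rel X$ is at most countable but possibly infinite, so I would explicitly invoke the remark accompanying~\Cref{lem:reduction-csp-pcsp} to the effect that its conclusion also holds for infinite instances; this is the point where I must be slightly careful.

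First I would reinterpret $f$ as a homomorphism out of $\mpow{\rel X}$. By~\Cref{lem:fun-fact} there is an isomorphism $\phi\colon\mpow{(\rel A^n)}\to(\mpow{\rel A})^n$ which sends an $m$-tuple $(a_1,\dots,a_m)$ of elements of $A^n$ to the $n$-tuple $(\bar b^1,\dots,\bar b^n)$ of elements of $A^m$ determined by $\bar b^j_i=(a_i)_j$. As $f$ is, by definition of a polymorphism of arity $n$, a homomorphism $(\mpow{\rel A})^n\to\mpow{\rel B}/{\aut(\rel B^+)}$, the composite $h:=f\circ\phi$ is then a homomorphism $\mpow{\rel X}\to\mpow{\rel B}/{\aut(\rel B^+)}$.

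Next, the hypotheses on $\rel B$ and $\rel B^+$ in~\Cref{lem:reduction-csp-pcsp} are precisely those assumed here, so that lemma (in its infinite-instance form) tells us that $\mpow{}$ is $\pi$-full, where $\pi$ sends a homomorphism $g'\colon\rel X\to\rel B$ to the map $(a_1,\dots,a_m)\mapsto(g'(a_1),\dots,g'(a_m))/{\aut(\rel B^+)}$. Applying $\pi$-fullness to $h$, I would obtain a homomorphism $g\colon\rel X\to\rel B$, that is, a polymorphism $f^*:=g\in\pol(\rel A,\rel B)$ of arity $n$, satisfying $\pi(f^*)=h$. Finally I would check $\xi^\infty_m(f^*)=f$: unwinding the construction of $\xi^\infty_m$ as ``apply the argument coordinatewise, then project onto $\mpow{\rel B}/{\aut(\rel B^+)}$'', a short index computation matching the $i$-th column of $\phi(a_1,\dots,a_m)$ with $a_i$ gives $\xi^\infty_m(f^*)\circ\phi=\pi(f^*)=h=f\circ\phi$, whence $\xi^\infty_m(f^*)=f$ since $\phi$ is an isomorphism.

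I expect the main (minor) obstacle to be twofold: one must genuinely use the infinite-instance form of~\Cref{lem:reduction-csp-pcsp}, as $\rel A^n$ is infinite whenever $\rel A$ is; and one must carry out the bookkeeping identifying $\pi$ with $\xi^\infty_m$ through the canonical isomorphism of~\Cref{lem:fun-fact}, which is routine but index-heavy and is the only real computational content of the argument.
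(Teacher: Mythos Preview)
Your proposal is correct and follows exactly the paper's own argument: set $\rel X=\rel A^n$, use~\Cref{lem:fun-fact} to view $f$ as a homomorphism $\mpow{\rel X}\to\mpow{\rel B}/{\aut(\rel B^+)}$, and then invoke the $\pi$-fullness of~\Cref{lem:reduction-csp-pcsp} (in its infinite-instance form, as the paper notes just before that lemma) to obtain $f^*$ with $\pi(f^*)=f$, which unwinds to $\xi^\infty_m(f^*)=f$. The only difference is that you make the identification $\pi\leftrightarrow\xi^\infty_m$ through~$\phi$ explicit, whereas the paper leaves this bookkeeping implicit.
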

\begin{proof}
    Let $\rel X=\rel A^n$.
    By~\Cref{lem:fun-fact}, the structures $\mpow{\rel X}$ and $(\mpow{\rel A})^n$ are isomorphic.
    Thus, $f$ is a homomorphism $\mpow{\rel X}\to \mpow{\rel B}/{\aut(\rel B^+)}$.
    By~\Cref{lem:reduction-csp-pcsp}, there exists $f^*\colon \rel X\to \rel B$ such that $\pi(f^*)=f$, where $\pi$ maps $f^*$ to $(a_1,\dots,a_n)\mapsto f^*(a_1,\dots,a_n)/{\aut(\rel B^+)}$.
    Thus, $\xi_m^\infty(f^*)=f$.
\end{proof}

\begin{corollary}\label{cor:local-h1-finbounded}
	Let $m\geq 3$.
    Let $(\rel A,\rel B)$ be a PCSP template where $\rel B$ is a first-order reduct of a homogeneous finitely bounded structure $\rel B^+$ whose bounds have size at most $m$, and whose relations have arity at most $m-1$.
	There exists a local $(\trivgrp,\aut(\rel B^+))$-minion homomorphism $\pol(\mpow{\rel A},\mpow{\rel B}/{\aut(\rel B^+)})\to\pol(\rel A,\rel B)$.
\end{corollary}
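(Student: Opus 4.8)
The plan is to construct the desired local $(\trivgrp,\aut(\rel B^+))$-minion homomorphism explicitly, using the lift provided by~\Cref{lem:lift-poly} as the underlying arity-preserving map and then verifying the local compatibility condition via a compactness argument. Concretely, for each $f\in\pol(\mpow{\rel A},\mpow{\rel B}/{\aut(\rel B^+)})$, I would set $\eta(f):=f^*$, where $f^*\in\pol(\rel A,\rel B)$ is some fixed choice (using the axiom of choice over the set of all polymorphisms of the half-infinite template) of a polymorphism with $\xi_m^\infty(f^*)=f$, whose existence is guaranteed by~\Cref{lem:lift-poly}. This $\eta$ is clearly arity-preserving. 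The content to be checked is that $\eta$ is a local $(\trivgrp,\aut(\rel B^+))$-minion homomorphism: for every finite $T\subseteq A$ there must be a finite $S\subseteq B^m=\mpow{\rel A}$ such that, for all $f,g\in\pol(\mpow{\rel A},\mpow{\rel B}/{\aut(\rel B^+)})$ and all $\sigma$, the equality $f^\sigma|_S=g^\tau|_S$ implies $u\circ\eta(f)^\sigma|_T=v\circ\eta(g)^\tau|_T$ for some $u,v\in\aut(\rel B^+)$.

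First I would observe that $\xi_m^\infty$ is, by~\Cref{directedsystem}, a local $(\group,\trivgrp)$-minion homomorphism with $\group=\aut(\rel B^+)$, and it satisfies $\xi_m^\infty(f^*)=f$ and $\xi_m^\infty(g^*)=g$ by construction. The idea is that "$f$ and $g$ agree on a large enough finite set of $m$-tuples" should force "$f^*$ and $g^*$ agree up to an automorphism of $\rel B^+$ on the corresponding finite set of elements of $B$", and this is exactly the kind of statement that~\Cref{sca} (the standard compactness argument) is designed to produce. More precisely: given finite $T\subseteq A$, choose $S\subseteq B^m$ to be (the image under the canonical identification of) the finite set of $m$-tuples needed so that agreement of $f$ and $g$ on $S$ pins down the behaviour of $f^*$ and $g^*$ on $T$ up to the $\aut(\rel B^+)$-action; since $\rel B^+$ is $\omega$-categorical its orbits on finite tuples are finite, so finitely many constraints suffice. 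Then one argues, using that $\mpow{\rel B}\cong\mlow{}$-free amalgamation behaviour encodes the $m$-ary quantifier-free type, that $f^\sigma|_S=g^\tau|_S$ forces $\eta(f)^\sigma$ and $\eta(g)^\tau$ to have the same restriction to $T$ after composing one of them with a suitable $u\in\aut(\rel B^+)$; homogeneity of $\rel B^+$ and the bound $m$ on its obstructions ensure such a $u$ exists (this is where the hypotheses on $\rel B^+$ are used, exactly as in~\Cref{lem:reduction-csp-pcsp}).

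The main obstacle I anticipate is making the passage "agreement on finitely many $m$-tuples in $\mpow{\rel B}/{\group}$ implies agreement up to $\group$ on the underlying finite subset of $B$" fully rigorous and uniform in $f,g$. One has to be careful that the quotient by $\group$ is taken on $\mpow{\rel B}$, so equality of $f^\sigma$ and $g^\tau$ on $S$ only gives orbit-equality of the $m$-tuples $(f^*(\bar a),\dots)$ and $(g^*(\bar a),\dots)$ for each $\bar a$ individually, and one must glue these local orbit-equalities into a single automorphism $u$ working simultaneously on all of $T$; this is precisely the role of homogeneity together with the finite-boundedness bound $m$ (every "bad" configuration on $T$ would already appear on $\leq m$ elements, which are controlled by the $m$-tuple constraints in $S$). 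I would handle this by first establishing the claim for a single $m$-tuple, then invoking the amalgamation/homogeneity of $\rel B^+$ to combine, essentially re-running the argument underlying~\Cref{lem:reduction-csp-pcsp} at the level of restrictions to $T$. Once the local compatibility is in place, the fact that $\eta$ respects minors follows because $\xi_m^\infty$ does and $\xi_m^\infty\circ\eta=\mathrm{id}$, so no separate verification of the minor identities is needed beyond what the locality statement already encompasses.
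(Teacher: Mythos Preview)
Your plan is exactly the paper's: set $\eta(f):=f^*$ via \Cref{lem:lift-poly} and verify the local $(\trivgrp,\aut(\rel B^+))$-condition. You are, however, overcomplicating the verification; the appeal to \Cref{sca} is a red herring (that lemma extracts a limit from a sequence of functions and does nothing to manufacture a finite $S$), and there is no need to re-run the argument of \Cref{lem:reduction-csp-pcsp}.

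The paper simply takes $S=T^m\subseteq A^m$ (note: $A^m$, not $B^m$ as you wrote). Suppose $f^\sigma|_S=g|_S$ and pick any $m$ points $t^1,\dots,t^m\in T^q$; writing $a^i:=(t^1_i,\dots,t^m_i)\in T^m=S$ for $i\in[q]$, the equalities $\xi_m^\infty(g^*)=g$ and $\xi_m^\infty(f^*)=f$ say precisely that the $m$-tuples $(g^*(t^1),\dots,g^*(t^m))$ and $((f^*)^\sigma(t^1),\dots,(f^*)^\sigma(t^m))$ both lie in the orbit $g(a^1,\dots,a^q)=f^\sigma(a^1,\dots,a^q)$. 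Thus every length-$m$ sub-tuple of the image of $g^*|_T$ agrees, up to $\aut(\rel B^+)$, with the corresponding sub-tuple of the image of $(f^*)^\sigma|_T$. Since $\rel B^+$ is homogeneous in a relational language of arity at most $m-1$, the orbit of a tuple of \emph{any} length is already determined by the orbits of its length-$m$ sub-tuples, which immediately yields a single $\alpha\in\aut(\rel B^+)$ with $\alpha\circ g^*|_T=(f^*)^\sigma|_T$. Only homogeneity and the arity bound are used in this gluing step; finite boundedness is consumed entirely inside \Cref{lem:lift-poly} and plays no further role here.
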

\begin{proof}
    Given an $n$-ary $f\in \pol(\mpow{\rel A}, \mpow{\rel B}/{\aut(\rel B^+)})$, let $f^*\in\pol(\rel A,\rel B)$ be as given by~\Cref{lem:lift-poly}.

    We show that $f\mapsto f^*$ is a local $(\trivgrp,\aut(\rel B)^+)$-minion homomorphism.
    Let $T$ be a finite subset of $A$.
    Let $S=T^m$.
    Suppose that $g|_S=f^\sigma|_S$ for some $f,g\in\pol(\mpow{\rel A}, \mpow{\rel B}/{\aut(\rel B^+)})$, where $\sigma\colon [p]\to[q]$.
    
  Let $a^1,\dots,a^q$ be $m$-tuples of elements of $T$.
  By~\Cref{lem:lift-poly}, we have $g^*(a^1,\dots,a^q)\in g(a^1,\dots,a^q) = f^{\sigma}(a^1,\dots,a^q)=f(a^{\sigma(1)},\dots,a^{\sigma(p)})$
  and $(f^*)^{\sigma}(a^1,\dots,a^q) = f^*(a^{\sigma(1)},\dots,a^{\sigma(p)})\in f(a^{\sigma(1)},\dots,a^{\sigma(p)})$,
  thus $g^*(a^1,\dots,a^q)$ and $(f^*)^\sigma(a^1,\dots,a^q)$ are in the same orbit under $\aut(\rel B^+)$.
  Since $\rel B^+$ is homogeneous in a language of arity $m$, it follows that $g^*(a^1,\dots,a^q)$ is in the same orbit as $(f^*)^{\sigma}(a^1,\dots,a^q)$ for tuples of arbitrary length.
  We therefore obtain the existence of $\alpha\in\aut(\rel B^+)$ such that
    $\alpha g^*|_T=(f^*)^\sigma|_T$ holds.
\end{proof}

In the more general case that $\group\leq\aut(\rel B)$ is an arbitrary oligomorphic subgroup, the situation is slightly more complex and the identities satisfied in $\pol(\rel A,\rel B)$ modulo $\overline{\group}$ can be read from ``compatible'' satisfaction in the minions $\pol(\mpow{\rel A},\mpow{\rel B}/{\group})$.
\begin{restatable}{proposition}{liftoligo}\label{prop:lift-poly-oligo}
    Let $(\rel A,\rel B)$ be a PCSP template, and let $\group$ be an oligomorphic subgroup of $\aut(\rel B)$.
    Let $(f_m)_{m\geq 1}$ be such that $f_m\in\pol(\mpow{\rel A},\mpow{\rel B}/{\group})$ and such that $\xi^n_{m}(f_{n})=f_m$ for all $n\geq m\geq 1$.
    There exists $f^*\in\pol(\rel A,\rel B)$ such that $\xi^\infty_m(f^*)=f_m$ for all $m\geq 1$.
\end{restatable}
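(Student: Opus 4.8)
The plan is to construct $f^*$ by a compactness argument (\Cref{sca}) applied to a coherent family of ``partial polymorphisms modulo $\group$'' distilled from the $f_m$. First I would note that, since each $\xi^n_m$ is arity-preserving, all the $f_m$ have a common arity $n$; I set $\rel X=\rel A^n$ and $Z=A^n$, and use \Cref{lem:fun-fact} to identify $\mpow{\rel X}\cong(\mpow{\rel A})^n$, so that each $f_m$ becomes a homomorphism $\mpow{\rel X}\to\mpow{\rel B}/{\group}$ with domain $Z^m$, each value $f_m(z_1,\dots,z_m)$ being a $\group$-orbit of an $m$-tuple over $B$. I would also record the explicit shape of the relevant maps from the construction in \Cref{directedsystem}: $\xi^\infty_m$ sends $f\colon\rel X\to\rel B$ to $(z_1,\dots,z_m)\mapsto(f(z_1),\dots,f(z_m))/{\group}$, and $\xi^m_\ell$ (for $\ell\le m$) acts on values by a fixed reindexing followed by a coordinate projection, so that $\xi^m_\ell(g)(w_1,\dots,w_\ell)=\pr{[\ell]}\bigl(g(w_1,\dots,w_\ell,w_\ell,\dots,w_\ell)\bigr)$ with the last entry repeated $m-\ell$ times. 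Since $\group$ acts diagonally on tuples, projections and reindexings of $\group$-orbits are again $\group$-orbits, a fact I would use repeatedly.

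Next I would fix an enumeration of $Z$, and for each finite $S\subseteq Z$ with induced enumeration $(s_1,\dots,s_{|S|})$ define $\Phi_S\colon S\to B$ by letting $\Phi_S(s_i)$ be the $i$-th coordinate of a chosen representative of the orbit $f_{|S|}(s_1,\dots,s_{|S|})$; this is well defined (the enumeration being injective) up to post-composition by an element of $\group$. The crucial point is \emph{coherence} of this family: for all finite $S'\subseteq S$, the maps $\Phi_S|_{S'}$ and $\Phi_{S'}$ lie in the same $\group$-orbit (under post-composition). To prove it I would write $S'=(s_{i_1},\dots,s_{i_{m'}})$ with $i_1<\dots<i_{m'}$ and $m=|S|$, and compare the two tuples $\bar z=(s_1,\dots,s_m)$ and $\bar z'=(s_{i_1},\dots,s_{i_{m'}},s_{i_{m'}},\dots,s_{i_{m'}})$ in $Z^m$. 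On the one hand the hypothesis $\xi^m_{m'}(f_m)=f_{m'}$ together with the formula for $\xi^m_{m'}$ gives $f_{m'}(s_{i_1},\dots,s_{i_{m'}})=\pr{[m']}(f_m(\bar z'))$. On the other hand $\bar z$ and $\bar z'$ satisfy a common relation $eq_{i,j}$ of $\mpow{\rel X}$ recording that $\bar z'$ is a reindexing of $\bar z$; since $f_m$ is a homomorphism it preserves this relation, so there are representatives $v$ of $f_m(\bar z)$ and $v'$ of $f_m(\bar z')$ with $v'_q=v_{i_q}$ for $q\le m'$, hence $\pr{[m']}(f_m(\bar z'))=\pr{i_1,\dots,i_{m'}}(f_m(\bar z))$. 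Combining, $\Phi_{S'}$ and $\Phi_S|_{S'}$ are both representatives of the single orbit $\pr{i_1,\dots,i_{m'}}(f_m(s_1,\dots,s_m))$. This coherence step is the heart of the proof.

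With coherence in hand I would run the compactness step. Writing $\zeta_1,\zeta_2,\dots$ for the fixed enumeration of $Z$ and letting $g_m\colon Z\to B$ be an arbitrary extension of $\Phi_{\{\zeta_1,\dots,\zeta_m\}}$, I apply \Cref{sca} to the sequence $(g_m)_m$ and the oligomorphic group $\group$, obtaining $f^*\colon A^n\to B$ such that for every finite $S\subseteq Z$, $f^*|_S$ and $g_m|_S$ are $\group$-equivalent for infinitely many $m$. Choosing such an $m$ large enough that $S\subseteq\{\zeta_1,\dots,\zeta_m\}$ and invoking coherence, it follows that $f^*|_S$ is $\group$-equivalent to $\Phi_S$ for \emph{every} finite $S\subseteq Z$.

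It remains to verify the two required properties. First, $f^*\in\pol(\rel A,\rel B)$: given $(\bar c_1,\dots,\bar c_r)\in R^{\rel A^n}$ and $S=\{\bar c_1,\dots,\bar c_r\}$, the tuple $(s_1,\dots,s_{|S|})$ lies in the unary relation $R_i$ of $\mpow[|S|]{\rel X}$ where $i$ records the positions of the $\bar c_p$ inside $S$, so its image under the homomorphism $f_{|S|}$ lies in $R_i^{\mpow[|S|]{\rel B}/{\group}}$; since every element of $\group\le\aut(\rel B)$ preserves $R^{\rel B}$, this image is an orbit all of whose representatives $v$ satisfy $(v_{i(1)},\dots,v_{i(r)})\in R^{\rel B}$, and applying this to the representative $\Phi_S$ and then translating by an element of $\group$ gives $(f^*(\bar c_1),\dots,f^*(\bar c_r))\in R^{\rel B}$. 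Second, $\xi^\infty_m(f^*)=f_m$ for all $m$: for $(z_1,\dots,z_m)\in Z^m$ with underlying set $S$ enumerated $(t_1,\dots,t_s)$ and $z_p=t_{\pi(p)}$ for a surjection $\pi\colon[m]\to[s]$, the same reindexing-and-projection argument as in the coherence step shows that $f_m(z_1,\dots,z_m)$ is the orbit obtained from $f_s(t_1,\dots,t_s)$ by reindexing along $\pi$, while $(f^*(z_1),\dots,f^*(z_m))/{\group}$ is the reindexing along $\pi$ of $(f^*(t_1),\dots,f^*(t_s))/{\group}$, which equals $f_s(t_1,\dots,t_s)$ because $f^*|_S$ and $\Phi_S$ are $\group$-equivalent; hence $\xi^\infty_m(f^*)(z_1,\dots,z_m)=f_m(z_1,\dots,z_m)$. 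I expect the coherence property to be the only genuinely delicate point — it is where one must unwind the explicit form of the maps $\xi^m_\ell$ and use the specific $eq$-relations of $\mpow{}$ to see that $f_m(\bar z)$ depends, modulo $\group$, only on the underlying set of $\bar z$ and restricts correctly — with everything else reducing to \Cref{sca} and routine bookkeeping.
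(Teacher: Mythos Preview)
Your argument is correct and uses the same ingredients as the paper's proof: the $eq$ relations of $\mpow{}$ to show that the values of $f_m$ depend only on the underlying set of the input tuple (modulo $\group$), and the compactness lemma (\Cref{sca}) to glue partial pieces into a global $f^*$.

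The organization differs slightly. The paper first treats the case where $\rel A$ is finite directly: it picks a single $m\ge |A|^n$, enumerates all of $A^n$ as $(a^1,\dots,a^m)$, chooses one representative $(b_1,\dots,b_m)$ of $f_m(a^1,\dots,a^m)$, sets $f^*(a^i)=b_i$, and checks well-definedness, the polymorphism property, and $\xi^\infty_{m'}(f^*)=f_{m'}$ exactly as you do. Only for infinite $\rel A$ does the paper invoke compactness, exhausting $\rel A$ by finite substructures $\rel A_i$, applying the finite case to each, and then using \Cref{sca}. Your version instead runs the compactness argument uniformly, proving a coherence statement for the family $(\Phi_S)_S$ indexed by all finite $S\subseteq A^n$; this subsumes the paper's finite case (where compactness is vacuous) and handles the infinite case in one stroke. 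The trade-off is that your coherence step makes explicit what the paper leaves implicit in its two-stage argument, at the cost of slightly more bookkeeping with the maps $\xi^m_\ell$ and the $eq_{i,j}$ relations.
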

\begin{proof}
Let $n$ be the arity of the operations $f_m$.
Suppose first that $\rel A$ is finite.
Let $m$ be at least as large as $|A|^n$ and as the arity of the relations of $\rel A$ and $\rel B$.
    By~\Cref{lem:fun-fact}, $f_m\colon (\mpow{\rel A})^n\to \mpow{\rel B}/{\group}$ can be seen as a homomorphism $\tilde f_m\colon\mpow{(\rel A^n)}\to\mpow{\rel B}/{\group}$. Consider an arbitrary element $(a^1,\dots,a^m)\in \mpow{(\rel A^n)}$ where $a^1,\dots,a^m$ enumerate all of $A^n$. %
    Let $(b_1,\dots,b_m)$ be any element from $\tilde f_m(a^1,\dots,a^m)$.
    Define $f^*(a^i)=b_i$ for all $i\in\{1,\dots,m\}$.
    
    We first prove that this is well-defined.
    Suppose  that $a^i=a^j$. This implies that $((a^1,\dots,a^m),(a^1,\dots,a^m))$ is in the relation $eq_{(i,i),(i,j)}$ of $\mpow{(\rel A^n)}$.
    It follows that $\pr{i,i}(\tilde f_m(a^1,\dots,a^m))$ and $\pr{i,j}(\tilde f_m(a^1,\dots,a^m))$ are the same $\group$-orbit of pairs, i.e., $(b_i,b_i)$ and $(b_i,b_j)$ are in the same orbit under $\group$, which is only possible if $b_i=b_j$.
    We also obtain that for every $m'\geq1$ and all $m'$-tuples $c^1,\dots,c^n$, we have $f^*(c^1,\dots,c^n)\in f_{m'}(c^1,\dots,c^n)$ and therefore $\xi^\infty_{m'}(f^*)=f_{m'}$.
    
	We now prove that $f^*$ is a polymorphism of $(\rel A,\rel B)$.
	Let $c^1,\dots,c^n\in R^{\rel A}$, for some relation symbol $R$ of arity $r$.
	Let $c'^i$ be the $m$-tuple obtained by padding $c^i$ with $m-r$ arbitrary elements of $A$.
	We see $c'^1,\dots,c'^n$ as the columns of a matrix of size  $m\times n$.
	For each $p\in\{1,\dots,m\}$, there exists $i_p\in\{1,\dots,m\}$ such that the $p$th row of this matrix is one of the vectors $a^{i_p}$.
	The result of applying $f^*$ componentwise to $c'^1,\dots,c'^n$ is then $(b_{i_1},\dots,b_{i_m})$, so we want to prove that $(b_{i_1},\dots,b_{i_r})\in R^{\rel B}$.
	Note that $(a^{i_1},\dots,a^{i_m})\in R_{1,\dots,r}^{\mpow{(\rel A^n)}}$.
	Therefore, we get that $\tilde f_m(a^{i_1},\dots,a^{i_m})\in R_{1,\dots,r}^{\mpow{\rel B}/{\group}}$, i.e., some $m$-tuple in $\tilde f_m(a^{i_1},\dots,a^{i_m})$ is an element of $R_{1,\dots,r}^{\mpow{\rel B}}$.
	Since $\group$ is a subgroup of $\aut(\rel B)$, \emph{all} the tuples in $\tilde f_m(a^{i_1},\dots,a^{i_m})$ belong to $R^{\mpow{\rel B}}_{1,\dots,r}$.
	Moreover, since $((a^{i_1},\dots,a^{i_m}),(a^1,\dots,a^m))$ is in the relation $eq_{(1,\dots,m),(i_1,\dots,i_m)}$ of $\mpow{(\rel A^n)}$, we get that
	the orbit $\tilde f_m(a^{i_1},\dots,a^{i_m})$ coincides with the projection of the orbit $\tilde f_m(a^1,\dots,a^m)$ onto the coordinates $i_1,\dots,i_m$.
	This is exactly the orbit of $(b_{i_1},\dots,b_{i_m})$ by definition, and therefore $(b_{i_1},\dots,b_{i_m})\in R^{\rel B}$.
	
	Finally, we turn to the case where $\rel A$ is countably infinite.
	Let $\rel A_1,\rel A_2,\dots$ be finite substructures of $\rel A$ such that $\rel A_i\subseteq\rel A_{i+1}$ for all $i\geq 1$ and $\bigcup_{i\geq 1} \rel A_i=\rel A$.
	For all $i\geq 1$, the family $(f_m)_{m\geq 1}$ restricts to a family $(g^i_m)_{m\geq 1}$ of maps in $\pol(\mpow{\rel A_i},\mpow{\rel B}/{\group})$ satisfying the condition of the statement, and therefore we obtain a $g_i^*\in\pol(\rel A_i,\rel B)$ with $\xi^\infty_m(g_i^*)=g^i_m$ for all $m\geq 1$.
	By~\Cref{sca}, one obtains an $f^*$ such that $\xi^\infty_m(f^*)=f_m$.
\end{proof}

It follows that $\pol(\rel A,\rel B)$ can loosely be thought of as a projective limit of the finite minions $\pol(\mpow{\rel A},\mpow{\rel B}/{\group})$.
\begin{restatable}{theorem}{projlimit}\label{proj-limit}
	Let $(\rel A,\rel B)$ be a PCSP template, and let $\group$ be an oligomorphic subgroup of $\aut(\rel B)$.
	Let $\minion{}$ be a minion and $\zeta_m\colon \minion{}\to\pol(\mpow{\rel A},\mpow{\rel B}/{\group})$ be a system of minion homomorphisms such that $\xi^n_m\circ\zeta_n = \zeta_m$.
	Then there exists a $(\trivgrp,\overline{\group})$-minion homomorphism $\zeta^*\colon\minion{}\to\pol(\rel A,\rel B)$ such that $\xi^\infty_m\circ \zeta^* = \zeta_m$ holds for all $m$.
\end{restatable}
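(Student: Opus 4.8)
The plan is to construct $\zeta^*$ pointwise, using \Cref{prop:lift-poly-oligo} as the key lifting tool. Given $M \in \minion{}$ of arity $n$, consider the family $(f_m)_{m \geq 1}$ where $f_m := \zeta_m(M) \in \pol(\mpow{\rel A}, \mpow{\rel B}/{\group})$. The compatibility hypothesis $\xi^n_m \circ \zeta_n = \zeta_m$ gives exactly that $\xi^n_m(f_n) = f_m$ for all $n \geq m \geq 1$, which is the coherence condition required by \Cref{prop:lift-poly-oligo}. That proposition then supplies some $f^* \in \pol(\rel A,\rel B)$ with $\xi^\infty_m(f^*) = f_m$ for all $m$. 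I would define $\zeta^*(M)$ to be such a choice of $f^*$; this immediately gives $\xi^\infty_m \circ \zeta^* = \zeta_m$ by construction, for every $m$.

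It remains to check that $\zeta^*$ is a $(\trivgrp, \overline{\group})$-minion homomorphism: arity-preservation is clear, so the content is that whenever $M^\sigma = N$ holds in $\minion{}$ for $M, N \in \minion{}$ and $\sigma \colon [p] \to [q]$, we should have $u \circ \zeta^*(M)^\sigma = v \circ \zeta^*(N)$ for some $u,v \in \overline{\group}$ (in fact, since the left side of the implication uses $\trivgrp$, there is no $u,v$ on the source side). The idea is to descend the identity through each $\xi^\infty_m$ and use the fact that $\xi^\infty_m$ \emph{is} a local $(\group,\trivgrp)$-minion homomorphism by \Cref{directedsystem}. Concretely, from $M^\sigma = N$ and the fact that each $\zeta_m$ is a (genuine, $\trivgrp$-to-$\trivgrp$) minion homomorphism, we get $f_m^\sigma = g_m$ in $\pol(\mpow{\rel A},\mpow{\rel B}/{\group})$ for every $m$, where $g_m = \zeta_m(N)$. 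Now $\xi^\infty_m(\zeta^*(M)^\sigma) = \xi^\infty_m(\zeta^*(M))^\sigma = f_m^\sigma = g_m = \xi^\infty_m(\zeta^*(N))$. So $\zeta^*(M)^\sigma$ and $\zeta^*(N)$ agree after applying $\xi^\infty_m$, for every $m$; I must convert this into a statement that they agree modulo $\overline{\group}$ on every finite set. Fix a finite $S \subseteq A$. Taking $m$ larger than $\max_{s \in S} (\text{something encoding } s)$ — more precisely, choosing $m$ so that $S$ embeds into $B^m$ via the obvious coordinate encoding and invoking the \emph{locality} of $\xi^\infty_m$ (which for a finite $T \subseteq B^m$ produces a finite $S' \subseteq A$) — the equality $\xi^\infty_m(\zeta^*(M)^\sigma)|_T = \xi^\infty_m(\zeta^*(N))|_T$ unwinds, by the definition of $\xi^\infty_m$ as the map sending $h$ to $(a_1,\dots,a_m) \mapsto (h(a_1),\dots,h(a_m))/{\group}$, to the statement that $\zeta^*(M)^\sigma$ and $\zeta^*(N)$ take values in the same $\group$-orbit on all $m$-tuples drawn from the relevant coordinates, hence on the finite set $S$. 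A compactness argument over the finite sets $S$ (or a direct application of \Cref{sca}) then produces a single $u \in \overline{\group}$ with $u \circ \zeta^*(M)^\sigma = \zeta^*(N)$, which is the desired minion-homomorphism condition (with $v = \mathrm{id}$).

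The main obstacle I anticipate is the bookkeeping in the last step: passing from "$\xi^\infty_m$ identifies the two functions for all $m$" to "the two functions agree modulo $\overline{\group}$". One has to be careful that the finite set $S \subseteq A$ on which we want orbit-equivalence is genuinely captured by the value of $\xi^\infty_m$ for suitable $m$ — this is where the hypothesis that $\group$ is oligomorphic is essential (so that $\mpow{\rel B}/{\group}$ is a well-defined finite-domain-per-arity quotient and the topology is well-behaved), and where one leans on \Cref{sca} to assemble local orbit-equivalences into a genuine element of $\overline{\group}$. A secondary subtlety is simply that $\zeta^*$ involves a choice (the lifted $f^*$ in \Cref{prop:lift-poly-oligo} is not unique), so one should fix such choices at the outset via the axiom of choice and note that nothing in the argument depends on which lift is picked. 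Everything else — arity preservation, the identity $\xi^\infty_m \circ \zeta^* = \zeta_m$, and the reduction of the minor-identity check to the per-$m$ level — is routine.
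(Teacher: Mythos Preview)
Your proposal is correct and follows essentially the same approach as the paper: construct $\zeta^*$ pointwise via \Cref{prop:lift-poly-oligo}, then verify the $(\trivgrp,\overline{\group})$-minion-homomorphism condition by pushing the minor identity down through every $\xi^\infty_m$ and invoking compactness. The paper phrases the final step by contradiction rather than directly, but the content is the same; the only cosmetic wrinkle in your write-up is that the finite set $S$ should be taken inside $A^q$ (the domain of the arity-$q$ functions), with $m=|S|$, so that evaluating $\xi^\infty_m$ at the transposed array of $S$ literally reads off the $\group$-orbit of $(h(s))_{s\in S}$ --- no ``embedding of $S$ into $B^m$'' or locality of $\xi^\infty_m$ is needed.
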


\begin{figure}
    \centering
    \begin{tikzcd}
	& & \vdots\arrow[d,"\xi^{m+2}_{m+1}"]\\
	& & \pol(\mpow[m+1]{\rel A},\mpow[m+1]{\rel B}/{\group}) \arrow[d,"\xi^{m+1}_m"]\\
	\minion{}\arrow[urr,bend left,end anchor=west,"\zeta_{m+1}"]\arrow[drr,bend right,end anchor=west,"\zeta_{m-1}"] \arrow[r, dotted, "{\zeta^*}"] & \pol(\rel A,\rel B) \arrow[ur,"\xi^\infty_{m+1}"] \arrow[r,"\xi^\infty_{m}"]\arrow[dr,"\xi^\infty_{m-1}"] & \pol(\mpow{\rel A},\mpow{\rel B}/{\group})\arrow[d,"\xi^m_{m-1}"]\\
	& & \pol(\mpow[m-1]{\rel A},\mpow[m-1]{\rel B}/{\group})\arrow[d,"\xi^{m-1}_{m-2}"]\\
	& & \vdots
\end{tikzcd}
    \caption{Diagram representing the various arrows in~\Cref{directedsystem,proj-limit}. Solid arrows are local minion homomorphisms, and the dotted arrow is a $(\trivgrp,\overline{\group})$-homomorphism.}
    \label{fig:proj-lim}
\end{figure}

It follows in particular that if a minor condition is satisfiable in $\pol(\mpow{\rel A},\mpow{\rel B}/{\group})$ for all $m\geq 1$, then it is satisfiable in $\pol(\rel A,\rel B)$ modulo $\overline{\group}$: take for this $\minion{}$ to be the minion generated by the operations in the minor condition.
Here, we can renounce the compatibility conditions $\xi^n_m\circ\zeta_n=\zeta_m$, since we are only interested in the existence of an arbitrary $(\trivgrp,\overline{\group})$-minion homomorphism $\zeta^*$.

  \vspace{-1em}
\subsection{Important special cases}

We briefly recall the following minor conditions and their importance in constraint satisfaction.
The \emph{Ol\v{s}\'ak condition} is the minor condition
$ f(x,x,y,y,y,x)\approx f(x,y,x,y,x,y) \approx f(y,x,x,x,y,y)$.
The 6-ary Siggers condition is the minor condition
$s(x,y,x,z,y,z) \approx s(y,x,z,x,z,y)$, the 4-ary Siggers condition is the minor condition
		 $s(x,y,z,x) \approx s(y,z,x,z)$.
		For $k\geq 3$,  $\WNU(k)$ is the minor condition with a $k$-ary symbol $f$ and the identities
		\[ f(x,y,\dots,y) \approx f(y,x,y,\dots,y)\approx \dots\approx f(y,\dots,y,x).\]

It is known that for a finite structure $\rel A$, $\pol(\rel A)$ satisfies one of the given conditions if, and only if, it satisfies all of them~\cite{Olsak,Siggers,KVVW,MarotiMcKenzie} (for the case of $\WNU$, we mean that there exists $k\geq 3$ such that $\WNU(k)$ is satisfied).
Moreover, $\pol(\rel A)$ satisfies $\WNU(k)$ for \emph{all} $k\geq 3$ if, and only if, $\rel A$ has bounded width~\cite{BartoKozikConsistency}.
For a finite PCSP template $(\rel A,\rel B)$, it is known that if $(\rel A,\rel B)$ has bounded width, then $\pol(\rel A,\rel B)$ satisfies $\WNU(k)$ for all $k\geq 3$, although in that case the two properties are not equivalent~\cite{AtseriasDalmau}.

In the following, let $\rel H_r$ be the structure on an $r$-element set with the ternary relation containing all non-constant triples.
Let $\rel K_r$ be the complete graph on $r$ elements.
Let $\rel D$ be the digraph on $\{x,y,z\}$ with edges $(x,y),(y,z),(z,x),(x,z)$.
\begin{theorem}\label{thm:olsak-siggers}
	Let $\rel B$ be a structure and $\group$ be an oligomorphic subgroup of $\aut(\rel B)$. The following hold:
	\begin{itemize}
		\item If $\rel B$ does not pp-construct $(\rel H_2,\rel H_r)$ for any $r\geq 2$, then $\pol(\rel B)$ satisfies the Ol\v{s}\'ak condition modulo $\overline{\group}$.
		\item If $\rel B$ does not pp-construct $(\rel K_3,\rel K_r)$ for any $r\geq 3$, then $\pol(\rel B)$ satisfies the 6-ary Siggers condition modulo $\overline{\group}$.
		\item If $\rel B$ does not pp-construct $(\rel D,\rel K_r)$ for any $r\geq 3$, then $\pol(\rel B)$ satisfies the 4-ary Siggers condition modulo $\overline{\group}$.
		\item If $\rel B$ has bounded width, then $\pol(\rel B)$ satisfies the condition $\WNU(k)$ modulo $\overline{\group}$ for all $k\geq 3$.
	\end{itemize}
\end{theorem}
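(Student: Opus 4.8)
The plan is to obtain all four items from the corresponding statements for \emph{finite} PCSP templates, transporting those along the inverse system of~\Cref{directedsystem} via~\Cref{proj-limit} and a compactness argument; I spell out the Ol\v{s}\'ak case and indicate the minor changes needed for the others. So suppose, towards the contrapositive of the first item, that $\pol(\rel B)$ does not satisfy the Ol\v{s}\'ak condition modulo $\overline{\group}$. By the remark following~\Cref{proj-limit} (applied with $\rel A=\rel B$), a minor condition that is satisfiable in $\pol(\mpow{\rel B},\mpow{\rel B}/{\group})$ for all $m$ is satisfiable in $\pol(\rel B)$ modulo $\overline{\group}$; hence there is some $m$ for which $\pol(\mpow{\rel B},\mpow{\rel B}/{\group})$ contains no Ol\v{s}\'ak operation.

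Since $\group$ is oligomorphic, $\mpow{\rel B}/{\group}$ is finite, while $\mpow{\rel B}$ itself is countably infinite. Being an Ol\v{s}\'ak polymorphism of $(\mpow{\rel B},\mpow{\rel B}/{\group})$ amounts to a conjunction of constraints --- the polymorphism constraints coming from the relations of $\mpow{\rel B}$, together with the Ol\v{s}\'ak identities --- each of which mentions only finitely many elements of $\mpow{\rel B}$. Arguing as in the proof of~\Cref{sca}, but demanding literal agreement on finite sets rather than agreement up to a $\group$-orbit (which is legitimate here precisely because the codomain $\mpow{\rel B}/{\group}$ is finite), the non-existence of such a polymorphism is already witnessed on a finite substructure: there is a finite $\rel S\subseteq\mpow{\rel B}$ such that $\pol(\rel S,\mpow{\rel B}/{\group})$ has no Ol\v{s}\'ak operation. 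Now $(\rel S,\mpow{\rel B}/{\group})$ is a finite PCSP template, so the known result that a finite PCSP template without an Ol\v{s}\'ak polymorphism pp-constructs $(\rel H_2,\rel H_r)$ for some $r\geq 2$ applies, giving a pp-construction of $(\rel H_2,\rel H_r)$ in $(\rel S,\mpow{\rel B}/{\group})$. Since $(\mpow{\rel B},\mpow{\rel B})$ is a pp-power of $(\rel B,\rel B)$ by the definition of $\mpow{}$, and $(\rel S,\mpow{\rel B}/{\group})$ is a homomorphic relaxation of it --- via the inclusion $\rel S\hookrightarrow\mpow{\rel B}$ and the quotient map $\mpow{\rel B}\to\mpow{\rel B}/{\group}$ --- the template $(\rel S,\mpow{\rel B}/{\group})$, and hence $(\rel H_2,\rel H_r)$, is pp-constructible in $\rel B$ by transitivity of pp-constructibility. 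This contradicts the hypothesis and proves the first item.

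The second and third items follow by the same argument, with $(\rel K_3,\rel K_r)$, respectively $(\rel D,\rel K_r)$, in place of $(\rel H_2,\rel H_r)$, and invoking the analogous known result for the failure of the 6-ary, respectively 4-ary, Siggers condition for finite templates. I would prove the fourth item directly, without contraposition: if $\rel B$ has bounded width then so does $\mpow{\rel B}$ (it is a pp-power of $\rel B$, and bounded width is preserved under pp-constructions), hence so does $(\mpow{\rel B},\mpow{\rel B}/{\group})$ (as $\mpow{\rel B}\to\mpow{\rel B}/{\group}$), hence so does $(\rel S,\mpow{\rel B}/{\group})$ for every finite $\rel S\subseteq\mpow{\rel B}$ (a non-trivial $k$-strategy to $\rel S$ is in particular one to $\mpow{\rel B}$). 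By the fact recalled above that a finite bounded-width PCSP template admits $\WNU(k)$ polymorphisms for all $k\geq 3$, $\pol(\rel S,\mpow{\rel B}/{\group})$ satisfies $\WNU(k)$ for every $k\geq 3$ and every finite $\rel S$; by the compactness argument this passes to $\pol(\mpow{\rel B},\mpow{\rel B}/{\group})$ for every $m$, and then the remark following~\Cref{proj-limit} yields that $\pol(\rel B)$ satisfies $\WNU(k)$ modulo $\overline{\group}$ for every $k\geq 3$.

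The step I expect to require the most care is the compactness argument, i.e.\ showing that the non-satisfiability of a fixed (height-one) minor condition in $\pol(\mpow{\rel B},\mpow{\rel B}/{\group})$ is witnessed on a finite substructure of $\mpow{\rel B}$; this is precisely where oligomorphicity of $\group$ is used, through the finiteness of $\mpow{\rel B}/{\group}$. Everything else is a matter of stitching together the transfer principles developed in~\Cref{sect:sandwiches} with the cited characterizations for finite templates.
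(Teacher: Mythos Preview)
Your argument is correct and follows the same overall strategy as the paper---transfer to a finite PCSP template via $\mpow{}$, invoke the known finite characterisations, and pull a pp-construction back to $\rel B$---but with two tactical swaps worth noting. First, the paper applies~\Cref{sca} \emph{before} passing through $\mpow{}$, obtaining a finite $\rel A\subseteq\rel B$ with no Ol\v{s}\'ak in $\pol(\rel A,\rel B)$ modulo $\overline{\group}$ and only then invoking~\Cref{proj-limit} to find an $m$; you instead apply the remark after~\Cref{proj-limit} with $\rel A=\rel B$ first and then run a separate K\"onig-type compactness argument inside $\mpow{\rel B}$ to find a finite $\rel S$. Second, for the final step the paper composes the local minion homomorphism $\pol(\rel B)\to\pol(\mpow{\rel A},\mpow{\rel B}/{\group})$ with the one to $\pol(\rel H_2,\rel H_r)$ and appeals to~\Cref{prop:pp-constructions-from-homo}, whereas you observe directly that $(\rel S,\mpow{\rel B}/{\group})$ is a homomorphic relaxation of the pp-power $\mpow{\rel B}$ and use transitivity of pp-constructibility; this is slightly more hands-on and avoids~\Cref{prop:pp-constructions-from-homo} entirely. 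For the last item, the paper argues the contrapositive (no $\WNU(k)$ modulo $\overline{\group}$ $\Rightarrow$ some $(\mpow{\rel A},\mpow{\rel B}/{\group})$ lacks $\WNU(k)$ $\Rightarrow$ not bounded width by~\cite{AtseriasDalmau} $\Rightarrow$ $\rel B$ not bounded width), while your direct argument is the natural dual. Both orderings work; the paper's route has the small advantage that working with finite $\rel A$ from the start makes the passage through~\Cref{proj-limit} more transparent (the relevant minions are finite at each arity), whereas your route trades this for a cleaner, self-contained pp-construction at the end.
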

\begin{proof}
	We prove the contrapositive of the first item.
    Suppose that $\pol(\rel B)$ does not satisfy the Ol\v{s}\'ak condition modulo $\overline{\group}$.
    By~\Cref{sca}, there exists a finite subset $A$ of $B$ such that $\pol(\rel A, \rel B)$ does not satisfy the Ol\v{s}\'ak condition modulo $\overline{\group}$.
    Then by~\Cref{proj-limit}, there exists $m\geq 1$ such that $\pol(\mpow{\rel A},\mpow{\rel B}/{\group})$ does not contain an Ol\v{s}\'ak operation.
    By~\cite[Theorem 6.2]{BBKO}, we get that there exists a minion homomorphism $\pol(\mpow{\rel A},\mpow{\rel B}/{\group})\to\pol(\rel H_2, \rel H_r)$ for some $r\geq 2$.
    By composing with the local minion homomorphism $\pol(\rel B)\to \pol(\mpow{\rel A},\mpow{\rel B}/{\group})$ and using~\Cref{prop:pp-constructions-from-homo}, $\rel B$ pp-constructs $(\rel H_2,\rel H_r)$.
    
    The proof of the second and third items is similar as the one above, using~\cite[Theorem 6.9]{BBKO} for the second item; no corresponding statement for the third item seems to exist in the literature, but it can be proved exactly in the same way and we take it as folklore.
    We now prove the last item. Suppose that there exists a $k\geq 3$ such that $\pol(\rel B)$ does not satisfy the condition $\WNU(k)$ modulo $\overline{\group}$.
    By~\Cref{sca} and~\Cref{prop:lift-poly-oligo}, there exists a finite $A\subseteq B$ and $m\geq 1$ such that $\pol(\mpow{\rel A},\mpow{\rel B}/{\group})$ does not satisfy $\WNU(k)$.
    By~\cite{AtseriasDalmau}, we get that $(\mpow{\rel A},\mpow{\rel B}/{\group})$ does not have bounded width.
    Since $\rel B$ pp-constructs this template, it does not have bounded width either.
\end{proof}

It follows that if $\pol(\rel B)$ does not satisfy the Ol\v{s}\'ak condition modulo the closure of an arbitrary oligomorphic subgroup of $\aut(\rel B)$, then $\CSP(\rel B)$ is NP-hard by~\cite{DinurRegevSmyth}.
In the context of reducts of finitely bounded homogeneous structures, this gives a potentially stronger hardness criterion than the hardness border conjectured by Bodirsky and Pinsker.
We conjecture however that if $\rel B$ is an $\omega$-categorical model-complete core that pp-constructs $(\rel H_2,\rel H_r)$ for some $r\geq 2$, then $\rel B$ pp-constructs every finite structure and therefore is already NP-hard under the previously known hardness results.
Similarly, if $\pol(\rel B)$ does not satisfy the condition $\WNU(k)$ modulo an oligomorphic subgroup of $\aut(\rel B)$ for some $k\geq 3$, then $\CSP(\rel B)$ is not solvable by local consistency methods. This gives us~\Cref{thm:olsak-wnu}.
Finally, assuming that $\PCSP(\rel K_3,\rel K_r)$ is NP-hard for all $r\geq 3$, we obtain~\Cref{no-6-siggers-hard}.
This raises the natural question whether the assumptions in the second item of~\Cref{thm:olsak-siggers} and the main theorem in~\cite{TopoIrrelevant} are in fact equivalent, and in particular whether every $\omega$-categorical non-bipartite graph with finite chromatic number must pp-construct $\rel K_3$.
We observe that the statement easily follows from~\cite{BBKO} if $\rel G$ contains a triangle and has chromatic number at most $4$.

The proof of~\Cref{thm:olsak-siggers} in fact yields operations satisfying $\WNU(k)$ modulo $\overline{\aut(\rel B)}$ under the weaker assumption that $\PCSP(\rel A,\rel B)$ has bounded width for every finite substructure $\rel A$ of $\rel B$.
In the case where $\rel B$ is a temporal structure, this gives that $\pol(\rel B)$ satisfies $\WNU(k)$  for all $k\geq 3$ modulo every oligomorphic subgroup of $\aut(\rel B)$, whenever $\CSP(\rel B)$ is definable in fixpoint logic, by the results in~\Cref{sect:algorithms} and~\cite{TemporalDescriptive}.
We note that in the case of $\group=\aut(\mathbb Q;<)$ the result is not new, see e.g.~\cite{TemporalDescriptive} where explicit operations are given using the finer algebraic descriptions of polymorphism clones of temporal structures from~\cite{BodirskyKara}.
However, the fact that $\WNU(k)$ is satisfied for all $k\geq 3$ modulo the closure of an arbitrary oligomorphic subgroup of $\aut(\mathbb Q;<)$ seems new. %

\section{Proof of~\Cref{bw-not-finitely-tractable}}\label{sect:hardness}

In this section, we study the properties of the PCSP templates of the form $(\mpow{\rel A},\mpow{\rel B}/{\aut(\mathbb Q;<)})$, where $(\rel A,\rel B)$ is a PCSP template such that $\rel A$ is finite and $\rel B$ is a temporal structure.
We prove that these templates are not necessarily finitely tractable (even if the starting structure $\rel B$ has a tractable CSP), and not necessarily solvable by the BLP+AIP algorithm.
This yields in particular~\Cref{bw-not-finitely-tractable}.

To do so, we use the fact that there are known necessary conditions about $\pol(\rel A,\rel B)$ for finite tractability of $\PCSP(\rel A,\rel B)$ (or solvability by BLP+AIP).
For finite tractability, it follows from the cyclic term theorem~\cite{BartoKozikAbsorption} that if $\PCSP(\rel A,\rel B)$ is finitely tractable, then $\pol(\rel A,\rel B)$ contains cyclic operations of all large enough prime arities. Such an operation $f$ of arity $p$ satisfies all the identities of the form
$f(x_1,\dots,x_p)\approx f(x_{\sigma(1)},\dots,x_{\sigma(p)})$
for a cyclic permutation $\sigma$ of $[p]$.
Concerning BLP+AIP, it is necessary and sufficient for $\PCSP(\rel A,\rel B)$ to be solvable by BLP+AIP that $\pol(\rel A,\rel B)$ contains so-called \emph{2-block symmetric} polymorphisms of all odd arities.
Such an operation $f$ has arity $2L+1\geq 5$ and satisfies for arbitrary permutations $\sigma,\tau$ of $[L+1]$ and $[L]$ the identities

\vspace{-1.7em}
\begin{align*} &f(x_1,\dots,x_{L+1},y_1,\dots,y_{L}) \\\approx &f(x_{\sigma(1)},\dots,x_{\sigma(L+1)},y_{\tau(1)},\dots,y_{\tau(L)}) \end{align*}

In all the cases below, to prove that such operations cannot exist in $\pol(\mpow{\rel A},\mpow{\rel B}/{\aut(\rel B)})$, we first use~\Cref{lem:lift-poly} to reduce the task to disproving the existence of operations in $\pol(\rel A,\rel B)$ satisfying the relevant identities modulo $\overline{\aut(\rel B)}$.
Moreover, using the fact that every identity under consideration is given by permutations, we use the following result to further reduce the task to disproving the existence of operations in $\pol(\rel A,\rel B)$ properly satisfying the identities.

\begin{restatable}{lemma}{pseudonopseudo}\label{lem:pseudo-no-pseudo}
    Let $\rel A$ be a structure and let $\rel B$ be a temporal structure.
    Suppose that $f(x_1,\dots,x_n)\approx f(x_{\sigma(1)},\dots,x_{\sigma(n)})$ is satisfiable in $\pol(\rel A,\rel B)$
    modulo $\overline{\aut(\rel Q;<)}$, where $\sigma$ is a permutation of $\{1,\dots,n\}$.
    Then it is satisfiable in $\pol(\rel A,\rel B)$.
\end{restatable}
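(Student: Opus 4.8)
The plan is to show that a witness of the pseudo-identity is \emph{already} a witness of the plain identity, so that no new operation has to be constructed. Unfolding the definition of satisfiability modulo $\overline{\aut(\mathbb Q;<)}$, we are given an operation $g\in\pol(\rel A,\rel B)$ of arity $n$ together with $u,v\in\overline{\aut(\mathbb Q;<)}$ such that $u\circ g = v\circ g^\sigma$. The structural input I will use is the standard description of this closure: an element of $\overline{\aut(\mathbb Q;<)}$ agrees with some order-automorphism of $\mathbb Q$ on every finite subset, so it is a strictly increasing (in particular injective) self-map of $\mathbb Q$; thus both $u$ and $v$ are injective and order-preserving.

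The first step is to iterate. Since $\sigma$ is a permutation of $[n]$, substituting the tuple $(a_{\sigma(1)},\dots,a_{\sigma(n)})$ for $(a_1,\dots,a_n)$ in the equation $u\circ g = v\circ g^\sigma$ gives $u\circ g^{\sigma^k} = v\circ g^{\sigma^{k+1}}$ for every $k\geq 0$; and, again because $\sigma$ merely permutes coordinates, $\im(g^{\sigma^k}) = \im(g) =: I$ for every $k$. Set $w := v^{-1}\circ u$. The equation for $k=0$ shows that $w$ is defined on all of $I$ and maps $I$ into $I$: for $y=g(\bar b)\in I$ we have $u(y)=v(g^\sigma(\bar b))$, hence $w(y)=g^\sigma(\bar b)\in I$. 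Moreover $w$ is order-preserving, being a composite of order-preserving maps ($u$, and $v^{-1}$ on $\im v$), and surjective onto $I$, since $w(I)=\im(g^\sigma)=I$. Finally, evaluating $u\circ g^{\sigma^k} = v\circ g^{\sigma^{k+1}}$ at an arbitrary tuple and using injectivity of $v$ yields $w\circ g^{\sigma^k}=g^{\sigma^{k+1}}$, so by induction $g^{\sigma^k}=w^k\circ g$ for all $k\geq 0$.

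The second step finishes the argument. Letting $d$ be the order of $\sigma$ in the symmetric group, the equality $g = g^{\sigma^d} = w^d\circ g$ forces $w^d$ to be the identity on $I$. But an order-preserving self-map $w$ of a subset of $\mathbb Q$ with $w^d=\mathrm{id}$ is necessarily the identity: if $w(x)>x$ for some $x\in I$, then applying the strictly increasing map $w$ repeatedly gives $x<w(x)<\dots<w^d(x)=x$, a contradiction, and the case $w(x)<x$ is symmetric. Hence $w=\mathrm{id}_I$, so $g^\sigma = w\circ g = g$, and $g\in\pol(\rel A,\rel B)$ itself satisfies $f(x_1,\dots,x_n)\approx f(x_{\sigma(1)},\dots,x_{\sigma(n)})$.

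I do not expect a genuine obstacle here; the only point needing care is the bookkeeping around $w$ — that it is everywhere defined on $I$, order-preserving, and surjective onto $I$ — all of which hinges on $\sigma$ being a permutation (so the operations $g^{\sigma^k}$ share the common image $I$) and on the elements of $\overline{\aut(\mathbb Q;<)}$ being order-preserving injections. Note that temporality of $\rel B$ enters only through this description of $\overline{\aut(\mathbb Q;<)}$, and that the polymorphism $g$ is not modified at all.
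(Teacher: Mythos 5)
Your proof is correct and rests on the same core idea as the paper's: iterate the pseudo-identity along the powers of $\sigma$ and rule out a strictly increasing (or decreasing) cycle of length $|\sigma|$, using that elements of $\overline{\aut(\mathbb Q;<)}$ are order-preserving injections. The paper argues directly that all consecutive pairs $(f(a^{\sigma^i}),f(a^{\sigma^{i+1}}))$ lie in the same orbit and hence must be pairs of equal elements, whereas you package the same contradiction through the auxiliary map $w=v^{-1}\circ u$ on $\im(g)$; this is only a cosmetic difference.
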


\begin{corollary}\label{cor:summary-necessary-conditions}
    Let $(\rel A,\rel B)$ be a PCSP template where $\rel A$ is a finite structure and $\rel B$ is a temporal structure.
    Let $m\geq 3$ be larger than the arity of the relations of $\rel A$.
    The following hold:
    \begin{itemize}
        \item If\/ $\PCSP(\mpow{\rel A},\mpow{\rel B}/{\aut(\rel B)})$ is finitely tractable, then $\pol(\rel A,\rel B)$ contains cyclic operations of every large enough prime arity;
        \item If\/ $\PCSP(\mpow{\rel A},\mpow{\rel B}/{\aut(\rel B)})$ is solvable by BLP+AIP, then $\pol(\rel A,\rel B)$ contains 2-block symmetric operations of all odd arities.
    \end{itemize}
\end{corollary}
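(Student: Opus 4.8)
The statement is, as its name suggests, largely a matter of assembling the transfer results established above; here is the plan. Since $\rel A$ is finite and $\rel B$ is $\omega$-categorical, $(\mpow{\rel A},\mpow{\rel B}/{\aut(\rel B)})$ is a \emph{finite} PCSP template. A temporal structure $\rel B$ has $<$ among its relations, so $\aut(\rel B)=\aut(\mathbb Q;<)$ and $\rel B$ is a first-order reduct of the homogeneous finitely bounded structure $\rel B^+=(\mathbb Q;<)$, whose relations have arity $2\leq m-1$ and whose obstructions have size at most $3\leq m$; thus~\Cref{lem:lift-poly} and~\Cref{cor:local-h1-finbounded} apply to $(\rel A,\rel B)$ with this $\rel B^+$ and the chosen $m$. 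In particular, $f\mapsto f^*$ is a local $(\trivgrp,\aut(\rel B))$-minion homomorphism $\pol(\mpow{\rel A},\mpow{\rel B}/{\aut(\rel B)})\to\pol(\rel A,\rel B)$, and since the domain of $\rel A$ is finite, applying the implication in the definition of a local minion homomorphism with $T=A$ shows that every minor condition satisfied in $\pol(\mpow{\rel A},\mpow{\rel B}/{\aut(\rel B)})$ is satisfied in $\pol(\rel A,\rel B)$ modulo $\overline{\aut(\rel B)}$. When, moreover, the condition consists of permutational identities $f\approx f^\sigma$,~\Cref{lem:pseudo-no-pseudo} removes the ``modulo'': an $h\in\pol(\rel A,\rel B)$ satisfying such identities modulo $\overline{\aut(\mathbb Q;<)}$ induces on its domain a total preorder ($a\preceq b$ iff $h(a)\leq h(b)$) that is invariant under each $\sigma$ — because the elements of $\overline{\aut(\mathbb Q;<)}$ are strictly increasing injections — hence under the group they generate; a finite-order order-automorphism of a linear order is trivial, so $h$ satisfies all these identities properly.

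For the first item, suppose $\PCSP(\mpow{\rel A},\mpow{\rel B}/{\aut(\rel B)})$ is finitely tractable. By the cyclic term theorem, applied to the finite template $(\mpow{\rel A},\mpow{\rel B}/{\aut(\rel B)})$, there is for every large enough prime $p$ a $p$-ary cyclic operation in $\pol(\mpow{\rel A},\mpow{\rel B}/{\aut(\rel B)})$. Being cyclic is a condition made of permutational identities, so by the first paragraph $\pol(\rel A,\rel B)$ contains a $p$-ary cyclic operation. Letting $p$ range over all large enough primes gives the first item.

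For the second item, suppose $\PCSP(\mpow{\rel A},\mpow{\rel B}/{\aut(\rel B)})$ is solvable by BLP+AIP. Then, by the characterization of BLP+AIP-solvable finite templates, for every odd arity $2L+1$ the minion $\pol(\mpow{\rel A},\mpow{\rel B}/{\aut(\rel B)})$ contains a $(2L+1)$-ary $2$-block symmetric operation; this is again a condition made of permutational identities — invariance of $f$ under a generating set of $S_{L+1}\times S_L$ — so by the first paragraph $\pol(\rel A,\rel B)$ contains a $(2L+1)$-ary $2$-block symmetric operation, and letting $2L+1$ range over all odd arities finishes the proof. I expect the only point requiring care to be precisely this reduction of a \emph{set} of permutational identities (rather than a single one) to properly satisfied identities: one needs the variant of~\Cref{lem:pseudo-no-pseudo} sketched in the first paragraph, whose proof is identical to the single-identity case because the total preorder induced by a lift is invariant under the whole permutation group at once, so that the same operation witnesses proper satisfaction of all the identities simultaneously.
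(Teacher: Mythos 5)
Your proposal is correct and follows essentially the same route as the paper: obtain cyclic (resp.\ 2-block symmetric) polymorphisms of the finite template from the cyclic term theorem (resp.\ the BLP+AIP characterization), lift them to $\pol(\rel A,\rel B)$ modulo $\overline{\aut(\mathbb Q;<)}$ via \Cref{lem:lift-poly}/\Cref{cor:local-h1-finbounded}, and strip the ``modulo'' with \Cref{lem:pseudo-no-pseudo}. The point you flag about handling a whole set of permutational identities is handled the same way in the paper, since the proof of \Cref{lem:pseudo-no-pseudo} shows that the very operation satisfying an identity modulo $\overline{\aut(\mathbb Q;<)}$ satisfies it properly, so one witness works for all identities simultaneously.
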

\begin{proof}
	We use here the results from the previous section, in particular~\Cref{lem:lift-poly}.
	Recall that $(\mathbb Q;<)$ is a finitely bounded homogeneous structure with bounds of size $3$.

    Suppose first that $\PCSP(\mpow{\rel A},\mpow{\rel B}/{\aut(\rel B)})$ is finitely tractable.
    By~\cite{BartoKozikAbsorption}, this implies that $(\mpow{\rel A},\mpow{\rel B}/{\aut(\rel B)})$ has a cyclic polymorphism of every sufficiently large prime arity.
    By~\Cref{lem:lift-poly}, we obtain that $\pol(\rel A,\rel B)$ contains polymorphisms of sufficiently large prime arity satisfying the cyclic condition modulo $\overline{\aut(\rel B)}$.
    By~\Cref{lem:pseudo-no-pseudo}, such polymorphisms are in fact cyclic.

    Suppose now that $\PCSP(\mpow{\rel A},\mpow{\rel B}/{\aut(\rel B)})$ is solvable by BLP+AIP. Then by~\cite{BLPAIP}, for every $L\geq 1$ there exists a 2-block symmetric operation of arity $2L+1$ in $\pol(\mpow{\rel A},\mpow{\rel B}/{\aut(\rel B)})$.
    By~\Cref{lem:lift-poly}, the minor identities $\Sigma$ defining such operations are satisfied in $\pol(\rel A,\rel B)$ modulo $\overline{\aut(\rel B)}$.
    Finally, since the minor identities in $\Sigma$ are defined by permuting variables, they must be satisfied in $\pol(\rel A,\rel B)$ by~\Cref{lem:pseudo-no-pseudo}.
\end{proof}

Let $I\subseteq\mathbb Q^3$ contain the tuples $(a,b,c)\in\mathbb Q^3$ such that if $a=b$, then $a\geq c$
and $X\subseteq\mathbb Q^3$ contain  $(a,b,c)$ iff two entries are equal and the other one is greater than the others.
\begin{restatable}{proposition}{Inonothing}\label{prop:I-no-nothing}
    Let $\rel A$ be the substructure of $\rel B=(\mathbb Q;I,\neq)$ induced by $\{0,1\}$.
    Then $\pol(\rel A,\rel B)$ does not contain any cyclic operation or any 2-block symmetric operation of arity $\geq 5$.
\end{restatable}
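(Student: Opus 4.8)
The plan is to make the definition of $\pol(\rel A,\rel B)$ fully explicit and then rule out both kinds of operations by a short combinatorial argument on the Boolean cube. Restricting $I$ and $\neq$ to $\{0,1\}$ gives $I^{\rel A}=\{0,1\}^3\setminus\{(0,0,1)\}$ and $\neq^{\rel A}=\{(0,1),(1,0)\}$. Write $\leq$ and $\vee$ for the coordinatewise order and join on $\{0,1\}^n$, $\bar 0,\bar 1$ for the constant tuples, $\bar 1-\bar a$ for the complement of $\bar a$, $e_i$ for the $i$-th unit vector, and $R_k$ for the $0/1$-vector with support $\{1,\dots,k\}$. A triple $(\bar a,\bar b,\bar c)$ of elements of $\{0,1\}^n$ lies in $I^{\rel A^n}$ iff each column $(a_i,b_i,c_i)$ avoids $(0,0,1)$, i.e.\ iff $\bar c\leq\bar a\vee\bar b$; and $(\bar a,\bar b)$ lies in $(\neq^{\rel A})^n$ iff $\bar b=\bar 1-\bar a$. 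Hence $f\colon\{0,1\}^n\to\mathbb Q$ belongs to $\pol(\rel A,\rel B)$ if and only if: \emph{(i)} $\bar c\leq\bar a$ implies $f(\bar c)\leq f(\bar a)$ (monotonicity; the instance $\bar a=\bar b$ of preserving $I$); \emph{(ii)} $f(\bar a)=f(\bar b)$ and $\bar c\leq\bar a\vee\bar b$ imply $f(\bar c)\leq f(\bar a)$ (preserving $I$); and \emph{(iii)} $f(\bar a)\neq f(\bar 1-\bar a)$ for all $\bar a$ (preserving $\neq$). In particular $f(\bar 0)=\min f$ and $f(\bar 1)=\max f$.

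For the cyclic case I would argue as follows. Suppose $f\in\pol(\rel A,\rel B)$ is cyclic of arity $p\geq 2$. The unit vectors $e_1,\dots,e_p$ form one orbit under cyclic shifts, so $f(e_1)=\dots=f(e_p)=:\alpha$. Next, $f(R_k)=\alpha$ for all $1\leq k\leq p-1$, by induction on $k$: the base case is $R_1=e_1$, and for the step (with $k\leq p-2$) the cyclic shift $e_2+\dots+e_{k+1}$ of $R_k$ lies in the same orbit as $R_k$ and hence also has value $\alpha$, its join with $R_k$ equals $R_{k+1}$ since $k+1\leq p$, so \emph{(ii)} gives $f(R_{k+1})\leq\alpha$ while \emph{(i)} gives $f(R_{k+1})\geq f(R_k)=\alpha$. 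Taking $k=p-1$ and noting that $R_{p-1}=\bar 1-e_p$ is a cyclic shift of $\bar 1-e_1$ gives $f(\bar 1-e_1)=\alpha=f(e_1)$, contradicting \emph{(iii)}. Thus no cyclic polymorphism of arity $\geq 2$ exists --- in particular none of (large) prime arity.

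For the $2$-block symmetric case, let $f\in\pol(\rel A,\rel B)$ have arity $2L+1$ with $L\geq 2$. Being invariant under permuting its first $L+1$ coordinates and, separately, its last $L$ coordinates, $f$ factors as $f=g(s,t)$, where $s\in\{0,\dots,L+1\}$ and $t\in\{0,\dots,L\}$ count the ones in the two blocks; by \emph{(i)}, $g$ is nondecreasing in each argument. A doubling argument inside the first block --- choose $\bar a,\bar b$ of equal block-weights $(s,t)$ sharing their second-block support but with first-block supports whose union has size $\min(2s,L+1)$, so $f(\bar a)=f(\bar b)=g(s,t)$, and apply \emph{(ii)} with $\bar c=\bar a\vee\bar b$ --- yields $g(s,t)=g(s',t)$ for all $s\leq s'\leq\min(2s,L+1)$; iterating from $s=1$ gives $g(1,t)=g(2,t)=\dots=g(L+1,t)$ for each fixed $t$. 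The mirror argument in the second block gives $g(s,1)=\dots=g(s,L)$ for each fixed $s$. Combining, $g$ is constant, equal to some $K$, on all of $\{1,\dots,L+1\}\times\{1,\dots,L\}$. Finally, \emph{(iii)} applied to a tuple of block-weights $(1,1)$, whose complement has block-weights $(L,L-1)$ --- still inside that square because $L\geq 2$ --- forces $K=g(1,1)\neq g(L,L-1)=K$, a contradiction. Hence $\pol(\rel A,\rel B)$ contains no $2$-block symmetric operation of odd arity $\geq 5$, which suffices (solvability by BLP+AIP would require such operations of every odd arity).

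The argument is elementary, and I expect the only delicate points to be bookkeeping: getting the direction of monotonicity in \emph{(i)} right (so that $\bar 0$, not $\bar 1$, realises the minimum), and observing that the doubling in the $2$-block case cannot reach weight $0$ from positive weight --- which is exactly why, in both cases, the contradiction has to be squeezed out of \emph{(iii)} near weight one (comparing $e_1$ with $\bar 1-e_1$, respectively block-weights $(1,1)$ with $(L,L-1)$), rather than by trying to connect $\bar 0$ to $\bar 1$ through the cube.
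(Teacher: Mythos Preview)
Your proof is correct, and the overall strategy matches the paper's: show that any such $f$ is forced to be constant on too large a portion of the nonzero Boolean cube to also preserve $\neq$. The packaging differs, however. The paper isolates a lemma about \emph{essential coordinates} (if $f(a)<f(a')$ where $a'$ flips a $0$ to a $1$ at position $i$, then $f(a)\neq f(b)$ for every $b$ with $b_i=1$) and uses it to argue that witnessing tuples must be all-zero, deducing that $f$ is constant above any nonzero tuple; your conditions (i)--(iii) extract the same content from $I$-preservation more transparently, and the doubling step via (ii) cleanly replaces the witnessing-tuple analysis. In the cyclic case your chain $f(R_1)=\dots=f(R_{p-1})$ and the paper's ``$f$ constant on all $b\geq a$ for $a\neq\bar 0$'' are the same phenomenon viewed from two angles; likewise in the $2$-block case your factorisation $f=g(s,t)$ makes the structure explicit where the paper argues pointwise. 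Both proofs handle the $2$-block case only from arity $5$ upward (the paper treats arity $5$ explicitly and notes this suffices), and as you correctly observe this is exactly what the BLP+AIP application needs.
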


\begin{restatable}{proposition}{sandwichX}\label{prop:no-sandwich-X}\label{prop:X-BLPAIP}
    Let $\rel A$ be the substructure of\/ $\rel B=(\mathbb Q;X)$ induced by $\{0,1,2\}$.
    Then $\pol(\rel A,\rel B)$ does not contain any cyclic operation of arity $1$ modulo $4$, nor 2-block symmetric polymorphisms of arity $7$.
\end{restatable}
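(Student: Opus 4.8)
The plan is to exhibit explicit combinatorial obstructions in $\pol(\rel A,\rel B)$ for $\rel B=(\mathbb Q;X)$ with $\rel A$ induced on $\{0,1,2\}$. First I would unpack the relation $X$: a triple $(a,b,c)$ is in $X$ iff exactly two of the three entries are equal and the remaining entry is strictly larger than that repeated value. Restricted to $\rel A$, the tuples in $X^{\rel A}$ are the permutations of $(0,0,1)$, $(0,0,2)$, $(1,1,2)$ — i.e. the ``two equal small coordinates, one strictly larger'' pattern on $\{0,1,2\}$. The key structural fact to establish is that any $f\in\pol(\rel A,\rel B)$ of arity $n$ is severely constrained: feeding $f$ an $X^{\rel A}$-tuple in each coordinate forces the output triple into $X^{\rel B}$, hence (by the shape of $X$) two of the three output values must be \emph{equal} and the third strictly above them. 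I would set up a small ``matrix'' argument: arrange the inputs as three rows $r_1,r_2,r_3\in A^n$ that are columnwise in $X^{\rel A}$, and analyze which pairs of rows $f$ is forced to identify.

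For the cyclic-arity-$1$-mod-$4$ claim, suppose $f$ is cyclic of arity $n\equiv 1\pmod 4$, so $f(x_1,\dots,x_n)\approx f(x_2,\dots,x_n,x_1)$. The standard trick is to pick a single column pattern and cyclically rotate it: one builds three rows out of cyclic shifts of a fixed word in $\{0,1,2\}^n$ chosen so that the three rows are columnwise in $X^{\rel A}$, yet cyclicity forces all three output coordinates equal, contradicting that the output must lie in $X^{\rel B}$ (where the three coordinates are never all equal). Concretely I expect to use a word with block structure tuned to the residue $n\equiv 1\pmod 4$ — this is exactly where the arity hypothesis enters: the number $4$ should reflect the number of ``types'' of coordinates one can consistently 2-color/3-color, or the length of a forbidden cyclic pattern in the $X$ relation. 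For the 2-block-symmetric-arity-$7$ claim, write $f$ of arity $7=2\cdot 3+1$ with a block of $4$ ``$x$-variables'' and a block of $3$ ``$y$-variables'', fully symmetric within each block. I would choose three rows where within each block the multiset of entries is constant across the two relevant rows, so 2-block symmetry forces two output coordinates equal; combined with a third row chosen so the columns still land in $X^{\rel A}$, I'd derive that all three outputs coincide or that the ``larger'' coordinate is not in fact larger — either way contradicting membership in $X^{\rel B}$. The specific count $7$ (i.e. $L=3$) should be what makes the combinatorics of distributing $\{0,1,2\}$-values across a $4$-block and a $3$-block impossible to reconcile with the $X$-pattern.

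The main obstacle will be the bookkeeping in constructing the three rows: one needs rows $r_1,r_2,r_3\in\{0,1,2\}^n$ that are simultaneously (i) columnwise in $X^{\rel A}$ — so every column is a permutation of $(t,t,t')$ with $t<t'$ — and (ii) invariant (as multisets, per block or under cyclic shift) under the symmetry that $f$ is assumed to have, yet (iii) such that the forced equalities among output coordinates are incompatible with the output triple lying in $X^{\rel B}$. I would organize this by a counting argument on columns: classify columns by which value is the ``large'' one and which two positions hold the small repeated value, track how cyclic rotation (resp. the block permutations) permutes these column-types, and show no consistent assignment survives. A secondary, lighter obstacle is confirming that the obstruction is genuinely about \emph{proper} satisfaction and not merely satisfaction modulo $\overline{\aut(\mathbb Q;<)}$ — but since every coordinate of an output in $X^{\rel B}$ is pinned down up to order-automorphism only by its value-equalities, and the contradiction we derive is about equalities of coordinates (which are $\overline{\aut(\mathbb Q;<)}$-invariant), this is automatic and in fact aligns with how \Cref{lem:pseudo-no-pseudo} is used downstream. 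I expect the whole argument to be a few pages of careful but elementary casework, with no deep machinery beyond the explicit description of $X$.
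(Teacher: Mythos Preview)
Your plan for the $2$-block symmetric case is sound and in fact more direct than the paper's argument. One concrete realisation: the three rows $(2,0,0,1\mid 0,0,1)$, $(1,2,0,0\mid 0,1,0)$, $(1,0,2,0\mid 1,0,0)$ all have block-multisets $\{0,0,1,2\}$ and $\{0,0,1\}$, and every one of the seven columns is in $X^{\rel A}$; a $2$-block symmetric $f$ would output a constant triple, which is not in $X$. The paper instead first plugs $(1,0,0)$ into the second block of a putative $7$-ary $g$ to manufacture a fully symmetric $4$-ary polymorphism, and then uses two separate matrices to reach a contradiction, so your route is shorter here.

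For the cyclic part there is a genuine gap. The approach ``take three cyclic shifts of a single word $w\in\{0,1,2\}^n$ so that the columns lie in $X^{\rel A}$ and cyclicity forces a constant output triple'' does not work at the smallest relevant arity $n=5$: an easy count shows any such $w$ must have letter multiset $\{0,0,1,1,2\}$, and running through the six cyclic classes and all shift triples one finds that some column always has three distinct entries or the pattern $(0,1,1)$, neither of which is in $X^{\rel A}$. (The trick does succeed at arity $9$, e.g.\ with $w=1^{3}0^{6}$ and shifts $0,3,6$, but that is no help for arity $5$.) What the paper does instead is to \emph{chain} strict inequalities obtained from several matrices. Writing $[b]*k$ for any tuple with $k$ cyclically consecutive entries equal to $b$ and zeros elsewhere (on which a cyclic $f$ is constant), one observes that whenever $k+k+i=n$ there is a matrix with rows $[b]*k,\ [b]*k,\ [b']*i$ and all columns in $X^{\rel A}$, giving $f([b]*k)<f([b']*i)$. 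For arity $5$ this already yields $f([2]*1)<f([1]*3)$; a second matrix with two overlapping copies of $[1]*3$ and one copy of $[2]*1$ then gives $f([1]*3)<f([2]*1)$, a contradiction. For larger arities $\equiv 1\pmod 4$ the paper runs an induction on $k$ producing equalities $f([b]*(2k+1))=f([b']*(2k+3))$ interleaved with strict inequalities, and this is exactly where the residue hypothesis enters. The missing ingredient in your proposal is thus the passage from a single constant-output obstruction to a cycle of strict inequalities built from the block tuples $[b]*k$.
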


With this in hand, we can prove~\Cref{bw-not-finitely-tractable}.
\omegasandwich*
\begin{proof}
	Let $\rel B$ be $(\mathbb Q; I, \neq)$, and $\rel A$ be the substructure of $\rel B$ induced by $\{0,1\}$.
	By~\Cref{prop:I-no-nothing}, $\pol(\rel A,\rel B)$ does not contain cyclic polymorphisms of any arity $\geq 2$, or any $2$-block symmetric polymorphism.
	Thus, $\PCSP(\mpow[3]{\rel A},\mpow[3]{\rel B}/{\aut(\rel Q;<)})$ is not finitely tractable nor solvable by BLP+AIP by~\Cref{cor:summary-necessary-conditions}.
	However, $\PCSP(\rel A,\rel B)$ is solvable by singleton arc consistency by~\Cref{pp-semilattice} and therefore it has width $4$.
	By the second item of~\Cref{relating-hierarchies} (applied with the minion only containing projections on a 2-element set), $\PCSP(\mpow[3]{\rel A},\mpow[3]{\rel B}/{\aut(\mathbb Q;<)})$  has width $4$.
	Finally, the template $(\mpow[3]{\rel A},\mpow[3]{\rel B}/{\aut(\mathbb Q;<)})$ admits as a sandwich the $\omega$-categorical structure $\mpow[3]{\rel B}$, whose CSP is in P, but is known not to have bounded width~\cite{FastDatalog}.
\end{proof}

We note that $\rel A$ in the proof above can be taken to be any finite substructure of $\rel B$.
If we let $\rel A_1\subseteq\rel A_2\subseteq\dots$ be finite structures such that $\bigcup\rel A_i=\rel B$, we obtain a sequence of finite PCSPs all with width $4$ and such that the limit of these problems, which is $\CSP(\mpow[3]{\rel B})$, does not have bounded width.

By taking $\rel B=(\mathbb Q;X)$, one arrives similarly at a finite PCSP template solvable by singleton AIP, not finitely tractable nor solvable by BLP+AIP, and with a tractable $\omega$-categorical sandwich.
We note that this template is not solvable by constantly many rounds of the Sherali-Adams hierarchy (the hierarchy of lift-and-project relaxations applied to the basic linear programming relaxation) either.
This follows from~\cite{TemporalDescriptive} and the fact that  constantly many rounds of the Sherali-Adams hierarchy can be implemented in fixpoint logic with counting~\cite{DawarHolm}.

\begin{proposition}\label{prop:sa-not-x}
    Let $\rel A$ be the substructure of $\rel B=(\mathbb Q;X)$ induced by $\{0,1\}$.
    Then $\PCSP(\rel A,\rel B)$ is not solvable by constantly many rounds of the Sherali-Adams hierarchy.
\end{proposition}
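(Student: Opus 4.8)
The plan is to show that $\PCSP(\rel A, \rel B)$ with $\rel B = (\mathbb{Q}; X)$ and $\rel A$ its substructure on $\{0,1\}$ is not solved by any constant number of rounds of Sherali--Adams, by transferring a known non-definability result for temporal CSPs into this promise setting. The key observation is that $\CSP(\rel B)$ itself (the full temporal CSP on $(\mathbb{Q}; X)$) is known from~\cite{TemporalDescriptive} not to be definable in fixed-point logic with counting, hence not solvable by constantly many rounds of Sherali--Adams, since by~\cite{DawarHolm} each fixed level of the Sherali--Adams hierarchy can be expressed in $\mathrm{FPC}$. So the real content is to promote this statement about $\CSP(\rel B)$ to a statement about $\PCSP(\rel A, \rel B)$, i.e.\ to exhibit, for each level $t$ of the hierarchy, an instance $\rel X_t$ that is accepted at level $t$ but has no homomorphism to $\rel B$ — using only the ``do-nothing'' structure of an instance, since in this restricted temporal setting the relevant hard instances already come from substructures of $\rel B$.

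The first step is to recall precisely what failure of Sherali--Adams at level $t$ means as a combinatorial/logical statement: there is an instance $\rel X$ with no homomorphism to $\rel B$ but which is indistinguishable, by the $t$-th Sherali--Adams relaxation, from a satisfiable instance. The standard way this arises for $\CSP(\rel B)$ (as in~\cite{TemporalDescriptive}) is via a family of instances witnessing non-$\mathrm{FPC}$-definability; such families automatically survive each fixed level of Sherali--Adams. The second step is to note that for these hard instances $\rel X_t$, the ``yes'' side can be witnessed against $\rel A$ rather than $\rel B$: because the instances are built so as to have partial solutions into a bounded set, one can check that $\rel X_t$ is in fact accepted by the $t$-round Sherali--Adams relaxation relative to $\rel A$ (not merely relative to $\rel B$), while still having no homomorphism to $\rel B$. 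This uses that $\rel A$ embeds in $\rel B$ and that the combinatorial gadgets underlying the non-definability argument only involve finitely many values, which can be taken inside $\{0,1\}$. Concretely I would point to the instances used in~\cite[proof that $\CSP(\mathbb{Q};X)$ is not in $\mathrm{FPC}$]{TemporalDescriptive} and observe that the $\mathrm{FPC}$-indistinguishable pairs there can be replaced by pairs where the ``satisfiable'' member maps into $\rel A$.

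The third step is the transfer via~\cite{DawarHolm}: since each fixed number of rounds of Sherali--Adams on a fixed template is $\mathrm{FPC}$-expressible, if $\PCSP(\rel A,\rel B)$ were solved by $c$ rounds for some constant $c$, then the property ``$\rel X \to \rel A$'' versus ``$\rel X \not\to \rel B$'' would be separated by an $\mathrm{FPC}$ sentence; restricting attention to instances of the form of induced substructures of $\rel B$ (so that $\rel X\to\rel A$ becomes exactly $\rel X\to\rel B$ for such instances, up to the finiteness of the bounds on $X$) would then decide $\CSP(\rel B)$ in $\mathrm{FPC}$ on that class of instances, contradicting the non-definability result. One has to be a little careful that the reduction from $\CSP(\rel B)$ to this restricted class is itself $\mathrm{FPC}$-computable, but this is routine: an $n$-variable instance of $\CSP(\rel B)$ can be viewed as an instance over the finite substructure of $\rel B$ on $n$ elements, exactly as remarked after~\Cref{thm:algotemporal}, and this identification is first-order (even quantifier-free) in the instance.

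\textbf{The main obstacle} I anticipate is making the second step fully rigorous: verifying that the hard instances witnessing non-$\mathrm{FPC}$-definability of $\CSP(\mathbb{Q};X)$ can be arranged so that their ``satisfiable'' representatives homomorphically map into the two-element substructure $\rel A$, rather than needing the full domain $\mathbb{Q}$ or a larger finite piece of it. This is plausible because $X$ is a relation on three elements and the obstructions to satisfiability for $(\mathbb{Q};X)$ are of bounded size, so the relevant Spoiler--Duplicator / pebble-game constructions only manipulate finitely many order-types that can be collapsed onto $\{0,1\}$; but it requires care to check that collapsing does not accidentally create a solution or destroy the indistinguishability. If this turns out to be delicate, the fallback is to invoke the general machinery: solvability by constantly many Sherali--Adams rounds is preserved under homomorphic relaxation and pp-constructions in the appropriate sense, so it suffices that $\CSP(\rel B)$ — which $(\rel A,\rel B)$ trivially relaxes — is not so solvable, together with the fact that $\rel B$ being a core on a fixed finite set of obstruction-types lets one run the argument uniformly.
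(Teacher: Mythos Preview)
Your main line of argument is essentially the same as the paper's: the paper also invokes the $\mathrm{FPC}$ non-definability of $\CSP(\mathbb Q;X)$ from~\cite{TemporalDescriptive}, observes (``by inspection of the proof'') that the satisfiable member of each $C^k$-indistinguishable pair already maps into $\rel A$, and then applies~\cite{DawarHolm} to pass from Sherali--Adams to $\mathrm{FPC}$. So the plan and the identification of the key obstacle are on target; the paper simply resolves your obstacle by direct inspection of the instances in~\cite{TemporalDescriptive} rather than by any collapsing argument.

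One correction: your fallback is logically the wrong way round. The template $(\rel A,\rel B)$ is a homomorphic relaxation of $(\rel B,\rel B)$, which makes $\PCSP(\rel A,\rel B)$ \emph{easier} than $\CSP(\rel B)$; preservation of SA-solvability under relaxation therefore says that if $\CSP(\rel B)$ were SA-solvable then so would $\PCSP(\rel A,\rel B)$ be, not the converse you need. So you cannot deduce non-solvability of $\PCSP(\rel A,\rel B)$ from non-solvability of $\CSP(\rel B)$ via relaxation alone --- you really do need the observation that the hard yes-instances map into $\rel A$. Also, your Step~3 detour through ``deciding $\CSP(\rel B)$ in $\mathrm{FPC}$ on a restricted class of instances'' is unnecessary: once you have, for each $k$, a $C^k$-equivalent pair $\rel X,\rel Y$ with $\rel X\to\rel A$ and $\rel Y\not\to\rel B$, you directly get that no $\mathrm{FPC}$ sentence separates the yes- and no-instances of $\PCSP(\rel A,\rel B)$, and~\cite{DawarHolm} finishes.
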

\begin{proof}[Proof sketch]
    	\cite{TemporalDescriptive} proves that $\CSP(\rel B)$ is not solvable in fixpoint logic with counting (FPC) by exhibiting, for every natural number $k\geq 1$, a pair of structures $\rel X, \rel Y$ that are equivalent in the counting logic $C^k$ and such that $\rel X\to\rel B$ and $\rel Y\not\to\rel B$.
        By inspection of the proof, it is the case that $\rel X\to\rel A$, and therefore $\PCSP(\rel A,\rel B)$ is not solvable in the logic FPC.
        Finally, we observe that the Sherali-Adams relaxation of an instance $\rel X$ is interpretable in first-order logic over $\rel X$.
        By Theorem 4.3 from~\cite{DawarHolm}, if $\PCSP(\rel A,\rel B)$ were solvable by the Sherali-Adams hierarchy, then it would be solvable in FPC.
        Thus, $\PCSP(\rel A,\rel B)$ is not solvable by constantly many rounds of the Sherali-Adams hierarchy.
\end{proof}

\bibliographystyle{alpha}
\bibliography{refs.bib}

\newcommand{\etalchar}[1]{$^{#1}$}
\begin{thebibliography}{KKVW15}

\bibitem[AB21]{BartoAsimi}
Kristina Asimi and Libor Barto.
\newblock Finitely tractable promise constraint satisfaction problems.
\newblock In Filippo Bonchi and Simon~J. Puglisi, editors, {\em 46th
  International Symposium on Mathematical Foundations of Computer Science,
  {MFCS} 2021, August 23-27, 2021, Tallinn, Estonia}, volume 202 of {\em
  LIPIcs}, pages 11:1--11:16. Schloss Dagstuhl - Leibniz-Zentrum f{\"{u}}r
  Informatik, 2021.

\bibitem[AD22]{AtseriasDalmau}
Albert Atserias and V{\'{\i}}ctor Dalmau.
\newblock Promise constraint satisfaction and width.
\newblock In Joseph~(Seffi) Naor and Niv Buchbinder, editors, {\em Proceedings
  of the 2022 {ACM-SIAM} Symposium on Discrete Algorithms, {SODA} 2022, Virtual
  Conference / Alexandria, VA, USA, January 9 - 12, 2022}, pages 1129--1153.
  {SIAM}, 2022.

\bibitem[ADH15]{DawarHolm}
Matthew Anderson, Anuj Dawar, and Bjarki Holm.
\newblock Solving linear programs without breaking abstractions.
\newblock {\em J. {ACM}}, 62(6):48:1--48:26, 2015.

\bibitem[Bar19]{Barto}
Libor Barto.
\newblock Promises make finite (constraint satisfaction) problems infinitary.
\newblock In {\em 34th Annual {ACM/IEEE} Symposium on Logic in Computer
  Science, {LICS} 2019, Vancouver, BC, Canada, June 24-27, 2019}, pages 1--8.
  {IEEE}, 2019.

\bibitem[BBKO21]{BBKO}
Libor Barto, Jakub Bul{\'{\i}}n, Andrei~A. Krokhin, and Jakub Oprsal.
\newblock Algebraic approach to promise constraint satisfaction.
\newblock {\em J. {ACM}}, 68(4):28:1--28:66, 2021.

\bibitem[BCdC02]{TractabilityResultsBlockAlgebra}
Philippe Balbiani, Jean{-}Fran{\c{c}}ois Condotta, and Luis~Fari{\~{n}}as del
  Cerro.
\newblock Tractability results in the block algebra.
\newblock {\em J. Log. Comput.}, 12(5):885--909, 2002.

\bibitem[BGWZ20]{BLPAIP}
Joshua Brakensiek, Venkatesan Guruswami, Marcin Wrochna, and Stanislav
  Zivn{\'{y}}.
\newblock The power of the combined basic linear programming and affine
  relaxation for promise constraint satisfaction problems.
\newblock {\em {SIAM} J. Comput.}, 49(6):1232--1248, 2020.

\bibitem[BJ17]{BodirskyJonsson}
Manuel Bodirsky and Peter Jonsson.
\newblock A model-theoretic view on qualitative constraint reasoning.
\newblock {\em J. Artif. Intell. Res.}, 58:339--385, 2017.

\bibitem[BJM{\etalchar{+}}24]{ClassificationTransfer}
Manuel Bodirsky, Peter Jonsson, Barnaby Martin, Antoine Mottet, and {\v Z}aneta
  Semani{\v s}inov{\'a}.
\newblock Complexity classification transfer for {CSP}s via algebraic products.
\newblock {\em SIAM Journal on Computing}, 2024.

\bibitem[BJMM18]{ClassificationTransferOriginal}
Manuel Bodirsky, Peter Jonsson, Barnaby Martin, and Antoine Mottet.
\newblock Classification transfer for qualitative reasoning problems.
\newblock In J{\'{e}}r{\^{o}}me Lang, editor, {\em Proceedings of the
  Twenty-Seventh International Joint Conference on Artificial Intelligence,
  {IJCAI} 2018, July 13-19, 2018, Stockholm, Sweden}, pages 1256--1262.
  ijcai.org, 2018.

\bibitem[BK10a]{BodirskyKara}
Manuel Bodirsky and Jan K{\'{a}}ra.
\newblock The complexity of temporal constraint satisfaction problems.
\newblock {\em J. {ACM}}, 57(2):9:1--9:41, 2010.

\bibitem[BK10b]{FastDatalog}
Manuel Bodirsky and Jan K{\'{a}}ra.
\newblock A fast algorithm and datalog inexpressibility for temporal reasoning.
\newblock {\em {ACM} Trans. Comput. Log.}, 11(3):15:1--15:21, 2010.

\bibitem[BK12]{BartoKozikAbsorption}
Libor Barto and Marcin Kozik.
\newblock Absorbing subalgebras, cyclic terms, and the constraint satisfaction
  problem.
\newblock {\em Log. Methods Comput. Sci.}, 8(1), 2012.

\bibitem[BK14]{BartoKozikConsistency}
Libor Barto and Marcin Kozik.
\newblock Constraint satisfaction problems solvable by local consistency
  methods.
\newblock {\em J. {ACM}}, 61(1):3:1--3:19, 2014.

\bibitem[BM16]{ReductionFinite}
Manuel Bodirsky and Antoine Mottet.
\newblock Reducts of finitely bounded homogeneous structures, and lifting
  tractability from finite-domain constraint satisfaction.
\newblock In Martin Grohe, Eric Koskinen, and Natarajan Shankar, editors, {\em
  Proceedings of the 31st Annual {ACM/IEEE} Symposium on Logic in Computer
  Science, {LICS} '16, New York, NY, USA, July 5-8, 2016}, pages 623--632.
  {ACM}, 2016.

\bibitem[BMO{\etalchar{+}}21]{Willard}
Manuel Bodirsky, Antoine Mottet, Miroslav Ol\v{s}\'ak, Jakub Opr\v{s}al,
  Michael Pinsker, and Ross Willard.
\newblock $\omega$-categorical structures avoiding height 1 identities.
\newblock {\em Transactions of the American Mathematical Society},
  374:327--350, 2021.

\bibitem[BOP18]{wonderland}
Libor Barto, Jakub Opr{\v s}al, and Michael Pinsker.
\newblock The wonderland of reflections.
\newblock {\em Israel Journal of Mathematics}, 223(1):363--398, 2018.

\bibitem[BP15]{TopoBirk}
Manuel Bodirksy and Michael Pinsker.
\newblock Topological birkhoff.
\newblock {\em Transactions of the American Mathematical Society},
  367(4):2527--2549, 2015.

\bibitem[BP20]{TopoIrrelevant}
Libor Barto and Michael Pinsker.
\newblock Topology is irrelevant (in a dichotomy conjecture for infinite domain
  constraint satisfaction problems).
\newblock {\em {SIAM} J. Comput.}, 49(2):365--393, 2020.

\bibitem[BPP21]{BodPinPon}
Manuel Bodirsky, Michael Pinsker, and Andr{\'{a}}s Pongr{\'{a}}cz.
\newblock Projective clone homomorphisms.
\newblock {\em J. Symb. Log.}, 86(1):148--161, 2021.

\bibitem[BR23]{TemporalDescriptive}
Manuel Bodirsky and Jakub Rydval.
\newblock On the descriptive complexity of temporal constraint satisfaction
  problems.
\newblock {\em J. {ACM}}, 70(1):2:1--2:58, 2023.

\bibitem[Bul17]{Bulatov}
Andrei~A. Bulatov.
\newblock A dichotomy theorem for nonuniform {CSP}s.
\newblock In Chris Umans, editor, {\em 58th {IEEE} Annual Symposium on
  Foundations of Computer Science, {FOCS} 2017, Berkeley, CA, USA, October
  15-17, 2017}, pages 319--330. {IEEE} Computer Society, 2017.

\bibitem[CZ23]{CiardoZivnyHierarchyTensor}
Lorenzo Ciardo and Stanislav Zivn{\'{y}}.
\newblock Hierarchies of minion tests for pcsps through tensors.
\newblock In Nikhil Bansal and Viswanath Nagarajan, editors, {\em Proceedings
  of the 2023 {ACM-SIAM} Symposium on Discrete Algorithms, {SODA} 2023,
  Florence, Italy, January 22-25, 2023}, pages 568--580. {SIAM}, 2023.

\bibitem[DKO23]{DalmauKrokhinOprsal}
V{\'\i}ctor Dalmau, Andrei Krokhin, and Jakub Opr{\v s}al.
\newblock Functors on relational structures which admit both left and right
  adjoints, 2023.

\bibitem[DMR09]{DinurGraphColoring}
Irit Dinur, Elchanan Mossel, and Oded Regev.
\newblock Conditional hardness for approximate coloring.
\newblock {\em {SIAM} J. Comput.}, 39(3):843--873, 2009.

\bibitem[DO23]{ConsistencyReduction}
Victor Dalmau and Jakub Oprsal.
\newblock Local consistency as a reduction between constraint satisfaction
  problems.
\newblock {\em CoRR}, abs/2301.05084, 2023.

\bibitem[DRS05]{DinurRegevSmyth}
Irit Dinur, Oded Regev, and Clifford~D. Smyth.
\newblock The hardness of 3-uniform hypergraph coloring.
\newblock {\em Comb.}, 25(5):519--535, 2005.

\bibitem[FV98]{FederVardi}
Tom{\'{a}}s Feder and Moshe~Y. Vardi.
\newblock The computational structure of monotone monadic {SNP} and constraint
  satisfaction: {A} study through datalog and group theory.
\newblock {\em {SIAM} J. Comput.}, 28(1):57--104, 1998.

\bibitem[GJK{\etalchar{+}}22]{SymmetriesNotEnough}
Pierre Gillibert, Julius Jonusas, Michael Kompatscher, Antoine Mottet, and
  Michael Pinsker.
\newblock When symmetries are not enough: {A} hierarchy of hard constraint
  satisfaction problems.
\newblock {\em {SIAM} J. Comput.}, 51(2):175--213, 2022.

\bibitem[KB79]{HermiteSmith}
Ravindran Kannan and Achim Bachem.
\newblock Polynomial algorithms for computing the smith and hermite normal
  forms of an integer matrix.
\newblock {\em SIAM Journal on Computing}, 8(4):499--507, 1979.

\bibitem[Kha79]{LP}
Leonid Khachiyan.
\newblock A polynomial algorithm for linear programming.
\newblock {\em Doklady Akademii Nauk SSSR}, 224(5):1093--1096, 1979.

\bibitem[KJJ03]{ReasoningAboutTemporalRelations}
Andrei~A. Krokhin, Peter Jeavons, and Peter Jonsson.
\newblock Reasoning about temporal relations: The tractable subalgebras of
  allen's interval algebra.
\newblock {\em J. {ACM}}, 50(5):591--640, 2003.

\bibitem[KKVW15]{KVVW}
Marcin Kozik, Andrei Krokhin, Matt Valeriote, and Ross Willard.
\newblock Characterizations of several maltsev conditions.
\newblock {\em Algebra universalis}, 73(3):205--224, 2015.

\bibitem[MJ90]{DBLP:conf/aaai/MukerjeeJ90}
Amitabha Mukerjee and Gene Joe.
\newblock A qualitative model for space.
\newblock In Howard~E. Shrobe, Thomas~G. Dietterich, and William~R. Swartout,
  editors, {\em Proceedings of the 8th National Conference on Artificial
  Intelligence. Boston, Massachusetts, USA, July 29 - August 3, 1990, 2
  Volumes}, pages 721--727. {AAAI} Press / The {MIT} Press, 1990.

\bibitem[MM08]{MarotiMcKenzie}
Mikl{\'o}s Mar{\'o}ti and Ralph McKenzie.
\newblock Existence theorems for weakly symmetric operations.
\newblock {\em Algebra universalis}, 59(3):463--489, 2008.

\bibitem[Mot24]{FOPCSP}
Antoine Mottet.
\newblock Promise and infinite-domain constraint satisfaction.
\newblock In Aniello Murano and Alexandra Silva, editors, {\em 32nd {EACSL}
  Annual Conference on Computer Science Logic, {CSL} 2024, February 19-23,
  2024, Naples, Italy}, volume 288 of {\em LIPIcs}, pages 41:1--41:19. Schloss
  Dagstuhl - Leibniz-Zentrum f{\"{u}}r Informatik, 2024.

\bibitem[MP22]{SmoothApproximations}
Antoine Mottet and Michael Pinsker.
\newblock Smooth approximations and {CSP}s over finitely bounded homogeneous
  structures.
\newblock In Christel Baier and Dana Fisman, editors, {\em {LICS} '22: 37th
  Annual {ACM/IEEE} Symposium on Logic in Computer Science, Haifa, Israel,
  August 2 - 5, 2022}, pages 36:1--36:13. {ACM}, 2022.

\bibitem[MP24]{SmoothApp-journal}
Antoine Mottet and Michael Pinsker.
\newblock Smooth approximations: An algebraic approach to csps over finitely
  bounded homogeneous structures.
\newblock {\em J. {ACM}}, 71(5):36:1--36:47, 2024.

\bibitem[Ol{\v s}17]{Olsak}
Miroslav Ol{\v s}{\'a}k.
\newblock The weakest nontrivial idempotent equations.
\newblock {\em Bulletin of the London Mathematical Society}, 49(6):1028--1047,
  2017.

\bibitem[Pos41]{post1941two}
Emil~L. Post.
\newblock The two-valued iterative systems of mathematical logic.
\newblock {\em Annals of Mathematics Studies}, 1941.

\bibitem[Sig10]{Siggers}
Mark~H. Siggers.
\newblock A strong mal'cev condition for locally finite varieties omitting the
  unary type.
\newblock {\em Algebra universalis}, 64(1):15--20, 2010.

\bibitem[Zhu17]{Zhuk}
Dmitriy Zhuk.
\newblock A proof of {CSP} dichotomy conjecture.
\newblock In Chris Umans, editor, {\em 58th {IEEE} Annual Symposium on
  Foundations of Computer Science, {FOCS} 2017, Berkeley, CA, USA, October
  15-17, 2017}, pages 331--342. {IEEE} Computer Society, 2017.

\bibitem[Zhu20]{ZhukJournal}
Dmitriy Zhuk.
\newblock A proof of the {CSP} dichotomy conjecture.
\newblock {\em J. {ACM}}, 67(5):30:1--30:78, 2020.

\end{thebibliography}

\clearpage
\appendix

\section{Appendix}

\subsection{Algorithms for temporal CSPs}

\majsemilattice*
\begin{proof}
	Let $f\in\pol(\rel B,0,\Theta)$ be an operation whose image in $\mathscr C$ is the majority operation.
	Let $\alpha$ be an arbitrary automorphism of $(\mathbb Q;<)$ mapping $0$ to a positive value, and let $g(x,y):=f(x,y,\alpha(y))$.
	Note first that $g\in\pol(\rel B)$.
	Moreover, we have $g(0,0)=f(0,0,\alpha(0))=0$, and for all $x>0$ we get $g(0,x)=f(0,x,\alpha(x))>0$ and $g(x,0)=f(x,0,\alpha(0))>0$, and for all $x,y>0$ we have $g(x,y)>0$.
	Thus, $g$ is an operation that preserves $0$, it preserves $\Theta$ as both $f$ and $\alpha$ do,
	and its image in $\mathscr C$ is the desired semilattice.
\end{proof}

\begin{restatable}{lemma}{easylemma}\label{easylemma}
$\mathscr C$ is a subset of\/ $\pol(\rel B/{\Theta})$.
\end{restatable}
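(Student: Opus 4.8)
The plan is to unwind the definitions. An element of $\mathscr C$ is, by construction, the operation $\bar f$ that some $f\in\pol(\rel B,0,\Theta)$ induces on the two-element quotient $\mathbb Q_{\geq 0}/{\Theta}$, while $\rel B/{\Theta}$ is (with the abuse of notation fixed in~\Cref{sect:defs}) the quotient by $\Theta$ of the substructure $\rel C$ of $\rel B$ induced on $\mathbb Q_{\geq 0}$. So I would fix an $r$-ary relation symbol $R$, write $\pi\colon\mathbb Q_{\geq 0}\to\mathbb Q_{\geq 0}/{\Theta}$ for the canonical projection, and note that $R^{\rel C}=R^{\rel B}\cap\mathbb Q_{\geq 0}^r$ and that $R^{\rel B/{\Theta}}$ is the image of $R^{\rel C}$ under the componentwise application of $\pi$. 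The goal is then to show that $\bar f$ preserves $R^{\rel B/{\Theta}}$.

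First I would record two elementary consequences of $f$ preserving $\Theta$. Since $(a,a)\in\Theta$ for every $a\in\mathbb Q_{\geq 0}$, preservation of $\Theta$ forces $f(\mathbb Q_{\geq 0}^n)\subseteq\mathbb Q_{\geq 0}$; and whenever $\pi(a_i)=\pi(a_i')$ for all $i$ (so that $(a_i,a_i')\in\Theta$), preservation of $\Theta$ gives $(f(a_1,\dots,a_n),f(a_1',\dots,a_n'))\in\Theta$, i.e.\ $\pi(f(a_1,\dots,a_n))=\pi(f(a_1',\dots,a_n'))$. This second point is exactly what makes $\bar f$ well defined, and it yields the identity $\pi(f(a_1,\dots,a_n))=\bar f(\pi(a_1),\dots,\pi(a_n))$ for all $a_1,\dots,a_n\in\mathbb Q_{\geq 0}$.

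With these in hand the argument is short: given tuples $c^1,\dots,c^n\in R^{\rel B/{\Theta}}$, lift each to some $t^j\in R^{\rel C}$ with $\pi(t^j)=c^j$. Applying $f$ componentwise to $t^1,\dots,t^n$ produces a tuple $s$ which lies in $R^{\rel B}$ since $f$ preserves $R^{\rel B}$, and in $\mathbb Q_{\geq 0}^r$ by the first observation; hence $s\in R^{\rel C}$ and so $\pi(s)\in R^{\rel B/{\Theta}}$. By the second observation $\pi(s)$ is precisely the componentwise application of $\bar f$ to $c^1,\dots,c^n$, so that tuple lies in $R^{\rel B/{\Theta}}$, as wanted. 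I do not expect any real obstacle here; the only point requiring a little care is that $\Theta$ is a relation on $\mathbb Q_{\geq 0}$ only, so preserving it simultaneously encodes both ``$f$ sends non-negative tuples to non-negative values'' and ``$f$ respects the two classes of $\Theta$'', and one must describe $\rel B/{\Theta}$ through the induced substructure $\rel C$ rather than through all of $\rel B$.
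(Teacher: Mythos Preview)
Your proof is correct and follows the same lift--apply--project scheme as the paper: choose preimages in $R^{\rel B}\cap\mathbb Q_{\geq 0}^r$, apply $f$, and project by $\pi$. The paper's version additionally composes with an automorphism $\alpha$ of $(\rel B,0)$ to force the result into a finite subset $A$ (it appears to be written for the variant $\mathscr C\subseteq\pol(\rel A/{\Theta})$ used in \Cref{pp-maltsev}); for the statement as actually written, your direct argument---in particular your explicit observation that preservation of $\Theta$ already forces $f(\mathbb Q_{\geq 0}^n)\subseteq\mathbb Q_{\geq 0}$---is exactly what is needed and is cleaner.
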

\begin{proof}
	Let $f\in\pol(\rel B,0,\Theta)$ of arity $k$, let $R$ be a relation symbol of arity $r$, and let $a^1,\dots,a^k\in R^{\rel A/{\Theta}}$.
	By definition of $\rel A/{\Theta}$, there exists $b^1,\dots,b^k\in R^{\rel A}$ such that $b^i_j=0\Leftrightarrow a^i_j=\{0\}$, for all $i\in\{1,\dots,k\}$ and $j\in\{1,\dots,r\}$.
	Since $f$ is a polymorphism of $\rel B$, we have $f(b^1,\dots,b^k)\in R^{\rel B}$.
	By assumption on $n$, we have $r\leq n$ and therefore there exists an automorphism $\alpha$ of $(\rel B,0)$ such that $\alpha f(b^1,\dots,b^k)\in A^r$, and since $\rel A$ is an induced substructure of $\rel B$ we obtain $\alpha f(b^1,\dots,b^k) \in R^{\rel A}$.
	For $j\in\{1,\dots,r\}$, let $c_j = \{0\}$ if the $j$th component of $\alpha f(b^1,\dots,b^k)$ is $0$, and $c_j=\mathbb Q_{>0}$ otherwise.
	Thus, the result of the function induced by $f$ on $\rel A/{\Theta}$ when applied to $a^1,\dots,a^k$ is $c$, and we have $c\in R^{\rel A/{\Theta}}$.
\end{proof}

\projectionstrategy*
\begin{proof}
	We prove that every $g\in G$ is a partial homomorphism.
	Suppose that $(x_{i_1},\dots,x_{i_r})\in\tilde R^{\rel X\setminus F}$, where $\tilde R$ is the symbol corresponding to the relation defined by $\exists_{y\in F} y\, (R(x_1,\dots,x_m))$ in $\rel B$, and where $x_{i_1},\dots,x_{i_r}\in\dom(g)$. 
	Thus, $(x_1,\dots,x_m)\in R^{\rel X}$.
	By definition of $G$ and since $k$ is large enough, there exists a partial homomorphism $h\in H$ with domain $\{x_1,\dots,x_m\}$ whose restriction to $X\setminus F$ is $g$.
	Thus, $(h(x_1),\dots,h(x_n))\in R^{\rel B}$, which implies that  $(h(x_{i_1}),\dots,h(x_{i_r}))\in \tilde R^{\rel B}$ and therefore $g$ is a partial homomorphism.
	The fact that $G$ is closed under restrictions and expansions as in the definition of $k$-strategies is clear from the fact that $H$ is, and the boundedness of $G$ is immediate.
	
	We now turn to the second item.
	Let $g\in G$ have domain containing $\{x_1,\dots,x_m\}$ and suppose that $(x_1,\dots,x_m)\in\tilde R^{\rel X/{F}}$ where $\tilde R$ is the symbol corresponding to the relation defined by $R(x_1,\dots,x_m)\land\bigwedge_{z,z'\in F}z=z'$ in $\rel B$.
	By definition of the contraction, we have $(x_1,\dots,x_m)\in R^{\rel X}$.
	Thus, $(g(x_1),\dots,g(x_m))\in R^{\rel B}$ and by assumption on $H$, we have $g(x)=g(y)$ whenever $x,y\in F$.
	It follows that $(g(x_1),\dots,g(x_m))\in \tilde R^{\rel B}$,
	and therefore $g$ is a partial homomorphism from $\rel X/{F}$ to $\rel B$.
	
	Concerning the third item, suppose that $\rel X$ is singleton arc consistent with respect to $\rel B$, and let $(D_x)_{x\in X}$ be a non-trivial potato system from $\rel X$ to $\rel B$ that witnesses this.
	By the same argument as in the first paragraph, $(D_x)_{x\in X\setminus F}$ is a potato system for the projection of $\rel X$ onto $X\setminus F$.
	To prove singleton arc consistency, let $x\in X\setminus F$ and $a\in D_x$.
	By assumption, there exists a potato system $(D'_y)_{y\in X}$ from $\rel X$ to $\mathbb B$ such that $D'_{y}\subseteq D_y$ for all $y\in X$ and $D'_x=\{a\}$.
	Again, $(D'_y)_{y\in X\setminus F}$ is a potato system for $\rel X\setminus F$, so we are done.
	Clearly, if the original potato system is bounded, so is its restriction to $X\setminus F$.
	
	Finally, any solution to $\AIP(\rel X,\rel A)$ can be restricted to a solution to $\AIP(\rel X\setminus F,\rel A)$ in the same way that a potato system or strategy for $\rel X$ can be restricted to a potato system or a strategy for $\rel X\setminus F$.
\end{proof}

\subsection{Proof of \Cref{prop:pp-constructions-from-homo}}

We give here a proof of~\Cref{prop:pp-constructions-from-homo}. To obtain a pp-construction as in the statement, we note that it suffices to have a map whose domain is the set of operations of arity at most $N$, where $N$ is a bound on the size of the domain and the relations of $\rel C$, and where the map is only required to satisfy $\xi(f^\sigma)=\xi(f)^\sigma$ when both $f^\sigma,f$ have arity at most $N$. For the purpose of the coming proof, we call such a map a \emph{partial minion homomorphism}.
Note that if $\rel A$ and $\rel C$ are finite then such a map is necessarily local.

\constructionshomo*
\begin{proof}
    Let $N$ be larger than the size of $S$ and the size of any relation in $\rel S$.
    Since $\xi$ is local, there exists a finite subset $T'$ of $A$ such that for all $f,g\in\pol(\rel A,\rel B)$ of arity at most $N$, if $f|_{T'}=g|_{T'}$ then $\xi(f)|_S = \xi(g)|_S$.

    Moreover, there exists a finite subset $T$ of $A$ containing $T'$ such that for all $f\in\pol(\rel T,\rel B)$ of arity at most $N$, there exists $f'\in\pol(\rel A,\rel B)$ such that $f|_{T}=f'|_{T}$.
    Indeed, fix an enumeration $a_1,a_2,\dots$ of $A\setminus T$, and consider $T_i=T'\cup\{a_1,\dots,a_i\}$ for all $i\geq 1$.
    Fix an arity $n\leq N$. 
    The relation $R_i$ on $B$ consisting of the $n$-ary polymorphisms $(\rel T')^n\to \rel B$ (seen as tuples indexed by $(T')^n$) that extend to a polymorphism $(\rel T_i)^n\to\rel B$ is invariant under $\aut(\rel B)$ and has arity $|T_i|^n$, so by $\omega$-categoricity there exists $i(n)\geq 1$ such that for all $j\geq i(n)$, we have $R_{i(n)}=R_j$.
    Let $i=\max_{n\leq N} i(n)$ and let $T=T'_i$.
    Then any $f\in\pol(\rel T,\rel B)$ of arity at most $N$ is in $\bigcap_{j\geq 1} R_j$ and therefore extends to an operation $f'\in\pol(\rel A,\rel B)$.

    Thus, we have a partial minion homomorphism $\zeta\colon\pol(\rel T,\rel B)^{(\leq N)}\to\pol(\rel S,\rel D)$ from the set of polymorphisms of $(\rel T,\rel B)$ of arity at most $N$,
    mapping an $f\in\pol(\rel T,\rel B)$ of arity at most $N$ to $\xi(f')|_S$ for an arbitrary extension $f'\in\pol(\rel A,\rel B)$ of $f$.
    By the choice of $T'$, $\zeta(f)$ does not depend on the choice of the extension.
    
    We therefore obtain a pp-construction of $(\rel S,\rel D)$ in $(\rel T,\rel B)$, by~\cite[Theorem 4.12]{BBKO}.
    Since $\rel T\to\rel A\to\rel B$, this gives a pp-construction of $(\rel S,\rel D)$ in $(\rel A,\rel B)$.
\end{proof}

\subsection{Categorical properties of $\mpow{}$}

For an arbitrary set $B$ and $i_1,\dots,i_k\in\{1,\dots,m\}$, recall that we write $\pr{i_1,\dots,i_k}$ for the map $B^m\to B^k$ defined by $(a_1,\dots,a_m)\mapsto (a_{i_1},\dots,a_{i_k})$.
If $\group$ is a permutation group on $B$, we use the same notation for the natural map $B^m/{\group}\to B^k/{\group}$.

\directedsystem*
\begin{proof}
	It suffices to prove the claim for every $m$ and $n=m+1$.
	The first item follows from the fact that $(\mpow{\rel A},\mpow{\rel B}/{\group})$ has a pp-construction in $(\rel A,\rel B)$.
	Explicitly, for $f\in\pol(\rel B)$ of arity $n$, one defines $\xi^*_m(f)$ to be the map
	\[ (a^1,\dots,a^n)\mapsto (f(a^1_1,\dots,a^n_1),\dots,f(a^1_m,\dots,a^n_m))/{\group}\]
	which can be seen to be a polymorphism of $(\mpow{\rel A},\mpow{\rel B}/{\group})$.
	Moreover, $\xi_m^*$ is a local minion homomorphism.
	
	We exhibit a local minion homomorphism $\xi_{m+1}\colon \pol(\mpow[m+1]{\rel A},\mpow[m+1]{\rel B}/{\group})\to\pol(\mpow{\rel A},\mpow{\rel B}/{\group})$.
	Given $f\in\pol(\mpow[m+1]{\rel A},\mpow[m+1]{\rel B}/{\group})$ of arity $n$, let $\xi^{m+1}_m(f)$ be defined by 
	\[(a^1,\dots,a^n) \mapsto \pr{1,\dots,m}f(b^1,\dots,b^n)/{\group},\]
	 where for all $i$,  $b^i$ is the $(m+1)$-tuple obtained by repeating the last entry of $a^i$.
\end{proof}

\functoriality*
\begin{proof}
We check the adjunction.
Let $h\colon \rel X\to\mpow{\rel B}$ be a homomorphism.
Define $\eta(h)=h'\colon\mlow{\rel X}\to\rel B$ by $h'([a,i]):=\pr{i}(h(a))$.
This is well defined, because $(a,p)\leftrightarrow (b,q)$ readily implies that $\pr{p}(h(a))=\pr{q}(h(b))$.
We first check that this is indeed a homomorphism.

Suppose that $([a_1,i(1)],\dots,[a_\ell,i(\ell)])\in R^{\mlow{\rel X}}$.
Then by definition there exists $b\in R_j^{\rel X}$ where $j\colon[\ell]\to[m]$ is such that $(a_\star,i(p))\sim^{\rel X} (b,j(p))$ for all $p\in [\ell]$.
We see $h$ as a homomorphism $\rel X\times[m]\to\mpow{\rel B}\times[m]$ acting on the first component and as the identity on the second component.
Note that the relation $\sim$ from the definition of $\mlow{\rel X}$ is a union of existentially definable relations, so we obtain that $(h(a_p),i(p))\sim^{\mpow{\rel B}} (h(b),j(p))$ for all $p\in[\ell]$.
It follows that $\pr{i(p)}(h(a_p)) = \pr{j(p)}(h(b))$, and since $h(b)\in R^{\mpow{\rel B}}_j$, we obtain $(h'([a_1,i(1)]),\dots,h'([a_\ell,i(\ell)]))\in R^{\rel B}$.

Conversely, given $g\colon\mlow{\rel X}\to\rel B$, define $g'\colon \rel X\to\mpow{\rel B}$ by $h(a)=(g([a,1]),\dots,g([a,m]))$.
We check that this is a homomorphism.
Suppose that $a\in R_i^{\rel X}$ for some $i\colon[\ell]\to[m]$.
Then we obtain $((a,i(1)),\dots,(a,i(\ell)))\in R^{\rel Y}$, so $([a,i(1)],\dots,[a,i(\ell)])\in R^{\mlow{\rel X}}$, and since $g$ is a homomorphism we get $(g[a,i(1)],\dots,g[a,i(\ell)])\in R^{\rel B}$ and therefore $h(a)\in R_i^{\mpow{\rel B}}$.
If now $(a,b)\in eq_{i,j}^{\rel X}$, the construction of $\mlow{\rel X}$ gives that $[a,i(p)]=[b,j(p)]$ for all $p\in\{1,\dots,\ell\}$.
It follows that $g[a,i(p)]=g[b,j(p)]$ holds for all $p\in\{1,\dots,\ell\}$ and thus $\pr{i(1),\dots,i(\ell)}h(a)=\pr{j(1),\dots,j(\ell)}h(b)$.
By the definition of $\mpow{}$, we get $(h(a),h(b))\in eq_{i,j}^{\mpow{\rel B}}$.

The maps $\eta\colon h\mapsto h'$ and $\mu\colon g\mapsto g'$ are inverses of each other, proving the bijection $\struct(\rel X,\mpow{\rel B}) \simeq \struct(\mlow{\rel X},\rel B)$.

It remains to prove that this bijection is natural in $\rel X$ and $\rel B$.
Suppose that we have homomorphisms $\alpha\colon\rel X\to \rel Y$, $\beta\colon\rel B\to\rel D$, and let $h\colon \rel Y\to\mpow{\rel B}$. We have natural functions $\struct(\rel Y,\mpow{\rel B})\to\struct(\rel X,\mpow{\rel D})$ and $\struct[](\mlow{\rel Y},\rel B)\to\struct[](\mlow{\rel X},\rel D)$.
Then $(\eta\circ \alpha)(h) = \eta(h\circ \alpha)$ is the function that maps $[a,i]\in\mlow{\rel X}$ to $\pr{i}(h(\alpha(a)))$.
Similarly, $(\alpha\circ \eta)(h)$ is the function that maps $[a,i]$ to $\eta(h)([\alpha(a),i])=\pr{i}(h(\alpha(a)))$, so we get $\eta\circ\alpha=\alpha\circ\eta$.
\end{proof}

\projlimit*
\begin{proof}
	Let $f\in\minion{}$.
	Define the sequence $f_m:=\zeta_m(f)\in\pol(\mpow{\rel A},\mpow{\rel B}/{\group})$.
	For $n\geq m$, we have $\xi^n_m(f_n) = \xi^n_m(\zeta_n(f)) = \zeta_m(f)=f_m$, so~\Cref{prop:lift-poly-oligo} applies and gives us a $f^*\in\pol(\rel A,\rel B)$ such that $\xi^\infty_m(f^*)=f_m$ for all $m\geq 1$.
	Define $\zeta^*(f)=f^*$.
	Thus, $\xi^\infty_m(\zeta^*(f)) = \xi^\infty_m(f^*)=f_m=\zeta_m(f)$ and therefore $\xi^\infty_m\circ\zeta^*=\zeta_m$.
	
	We now prove that $\zeta^*$ is a $(\trivgrp,\overline{\group})$-minion homomorphism.
	Let $\sigma\colon [p]\to [q]$ and suppose that $f^\sigma= g$ is true in $\minion{}$.
	Then $f_m^\sigma = g_m$ is true in $\pol(\mpow{\rel A},\mpow{\rel B}/{\group})$ for all $m\geq 1$.
	
	Suppose for contradiction that $u\circ (f^*)^\sigma\neq v\circ g^*$ for all $u,v\in\overline{\group}$.
	By definition of $\overline{\group}$ and~\Cref{sca}, one finds finite tuples $a^1,\dots,a^q$ of some length $m$ such that $(f^*)^\sigma(a^1,\dots,a^q)$ and $g^*(a^1,\dots,a_q)$ are not in the same orbit under $\group$.
	Thus, $\xi^\infty_m((f^*)^\sigma)(a^1,\dots,a^q)\neq\xi^\infty_m(g^*)(a^1,\dots,a^q)$,
	and therefore $\zeta_m(f)^\sigma \neq \zeta_m(g)$, contradicting our assumptions.
\end{proof}

\begin{restatable}{lemma}{polysmallerorbits}\label{lem:poly-smaller-orbits}
    Let $1\leq k\leq m$, let $\rel X,\rel B$ be structures.
    Let $h\colon\mpow{\rel X}\to\mpow{\rel B}/{\group}$ be a homomorphism and $i_1,\dots,i_k\in\{1,\dots,m\}$.
    There exists a homomorphism $h_k\colon\mpow[k]{\rel X}\to\mpow[k]{\rel B}/{\group}$ such that $\pr{i_1,\dots,i_k}\circ h=h_k\circ \pr{i_1,\dots,i_k}$.
\end{restatable}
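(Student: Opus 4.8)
The plan is to realize $h_k$ as a conjugate of $h$ by a \emph{padding} map and a \emph{projection} map. Since $1\le k\le m$, I would fix an injection $v\colon[k]\to[m]$ and a map $w\colon[m]\to[k]$ with $w\circ v=\mathrm{id}_{[k]}$; let $\iota_w\colon X^k\to X^m$ send $b$ to $(b_{w(1)},\dots,b_{w(m)})$, and let $\pr{v}$ denote the projection onto coordinates $v(1),\dots,v(k)$, both as a map $X^m\to X^k$ and (since $\group$ acts coordinatewise) as the induced map $B^m/{\group}\to B^k/{\group}$. Then I would set
\[ h_k := \pr{v}\circ h\circ\iota_w \colon X^k\to B^k/{\group}, \]
and verify the two required properties. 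Note that $v$ may be chosen freely (say $v(q)=q$ and $w(s)=\min(s,k)$); the dependence of $h_k$ on $i_1,\dots,i_k$ is hidden, and will only surface in the verification of the commutation identity.

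That $h_k$ is a homomorphism $\mpow[k]{\rel X}\to\mpow[k]{\rel B}/{\group}$ I expect to be routine. The point is that $\iota_w$ transports the relations of $\mpow[k]{\rel X}$ to relations of $\mpow{\rel X}$: from $w\circ v=\mathrm{id}_{[k]}$ one gets $(\iota_w(b))_{(v\circ j)(p)}=b_{j(p)}$ for every $j\colon[\ell]\to[k]$ and $p\in[\ell]$, hence $b\in R_j^{\mpow[k]{\rel X}}$ implies $\iota_w(b)\in R_{v\circ j}^{\mpow{\rel X}}$, and $(b,b')\in eq_{j,j'}^{\mpow[k]{\rel X}}$ implies $(\iota_w(b),\iota_w(b'))\in eq_{v\circ j,v\circ j'}^{\mpow{\rel X}}$. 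Applying the homomorphism $h$ and then taking the $v(1),\dots,v(k)$ coordinates of a representative (which is exactly what $\pr{v}$ does) gives back membership in $R_j^{\mpow[k]{\rel B}/{\group}}$, resp.\ $eq_{j,j'}^{\mpow[k]{\rel B}/{\group}}$, since the $j(p)$-th coordinate of $\pr{v}$ applied to a tuple is its $(v\circ j)(p)$-th coordinate.

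The substantive step, which I expect to be the main obstacle, is the identity $\pr{i_1,\dots,i_k}\circ h=h_k\circ\pr{i_1,\dots,i_k}$. The naive idea of defining $h_k$ by pushing an arbitrary $\pr{i_1,\dots,i_k}$-preimage of $b$ through $h$ only determines $h_k$ on the image of $\pr{i_1,\dots,i_k}$ — a proper subset of $X^k$ once the $i_j$ are not all distinct — and an ad hoc extension need not respect the relations of $\mpow[k]{}$; the conjugation construction avoids this because it respects the equality relations $eq$, which are rigid enough to pin down the correct behaviour on all of $X^k$. To check the identity, write $\vec i=(i_1,\dots,i_k)\colon[k]\to[m]$ and $\rho:=\vec i\circ w\colon[m]\to[m]$. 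Then $\iota_w(\pr{\vec i}(a))=(a_{\rho(1)},\dots,a_{\rho(m)})$, and since $\rho\circ v=\vec i\circ w\circ v=\vec i$, the pair $\bigl(\iota_w(\pr{\vec i}(a)),\,a\bigr)$ lies in $eq_{v,\vec i}^{\mpow{\rel X}}$. As $h$ preserves this relation, there are representatives $c$ of $h(\iota_w(\pr{\vec i}(a)))$ and $c'$ of $h(a)$ with $(c_{v(1)},\dots,c_{v(k)})=(c'_{i_1},\dots,c'_{i_k})$; the left-hand tuple represents $h_k(\pr{\vec i}(a))$ and the right-hand tuple represents $\pr{\vec i}(h(a))$, so the two orbits agree. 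As $k\le m$ was the only assumption used, this finishes the proof.
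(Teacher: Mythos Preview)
Your proof is correct and follows the same pad--apply--project idea as the paper: the paper defines $h_k(a_1,\dots,a_k):=\pr{1,\dots,k}\bigl(h(a_1,\dots,a_k,a_{k+1},\dots,a_m)\bigr)$ for an arbitrary padding, shows this is independent of the padding via the relation $eq_{(1,\dots,k),(1,\dots,k)}$, and then asserts that the general case follows ``by permuting the arguments of $h$''. Your version fixes a specific padding $\iota_w$ and instead uses the relation $eq_{v,\vec i}$ to obtain the commutation identity directly for an arbitrary $\vec i$; this is cleaner in that it handles possibly repeated indices $i_1,\dots,i_k$ uniformly (the paper's ``permute arguments'' remark tacitly assumes they are distinct), and it makes explicit the pleasant fact that one $h_k$ works simultaneously for every choice of $\vec i$. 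You also spell out the homomorphism verification, which the paper leaves implicit.
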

\begin{proof}
For ease of notation, we do the proof here for the case where $i_r=r$ for all $r\in\{1,\dots,k\}$.
The general case follows from this simply by permuting the arguments of $h$.
    Define $h_k(a_1,\dots,a_k)$ by $\pr{1,\dots,k}(h(a_1,\dots,a_k,a_{k+1},\dots,a_m))$ for an arbitrary choice of elements $a_{k+1},\dots,a_m\in X$.
    We show that the definition does not depend on the choice of these elements.
    For arbitrary $a'_{k+1},\dots,a'_m$, the pair
    \[((a_1,\dots,a_k,a_{k+1},\dots,a_m),(a_1,\dots,a_k,a'_{k+1},\dots,a'_m))\]
    is in the interpretation of the symbol $eq_{i,j}$ in $\mpow{\rel X}$ with $i,j\colon\{1,\dots,k\}\to\{1,\dots,m\}$ being both equal to the inclusion map.
    So $(h(a_1,\dots,a_m),h(a_1,\dots,a_k,a'_{k+1},\dots,a'_m))$ is in the interpretation of the symbol $eq_{i,j}$ in $\mpow{\rel B}/{\group}$, so that there are tuples $b\in h(a_1,\dots,a_m),c\in h(a_1,\dots,a_k,a'_{k+1},\dots,a'_m)$ such that $(b,c)\in eq^{\mpow{\rel B}}_{i,j}$, i.e., $(b_1,\dots,b_k)=(c_1,\dots,c_k)$.
    Thus, \[\pr{1,\dots,k}(h(a_1,\dots,a_m))=\pr{1,\dots,k}(h(a_1,\dots,a_k,a'_{k+1},\dots,a'_m))\]
    as desired.
\end{proof}

\gradedadjoint*
\begin{proof}
	Let $(H,\xi)$ be an arrow in $\MSystems{m\cdot k}(\mlow{\rel X},\rel B)$.
	For every subset $K\subseteq X$ of size at most $k$, $\mlow{K}$ has size at most $mk$ and we get that $H_{\mlow{K}}$ is a set of homomorphisms $\mlow{\rel K}\to\rel B$, where we write $\rel K$ for the substructure of $\rel X$ induced by $K$.
	Let $G_K$ be the image of $H_{\mlow{K}}$ under the bijection $\nu\colon\struct[](\mlow{\rel K},\rel B)\to\struct[](\rel K,\mpow{\rel B})$,
	and let $\zeta_K = (\xi_{\mlow{K}})^\nu$.
	Then $(G,\zeta)$ is an arrow in $\MSystems{k}(\rel X,\mpow{\rel B})$.
	
	Conversely, suppose $(G,\zeta)$ is an arrow in $\MSystems{k}(\rel X,\mpow{\rel B})$.
	Let $K\subseteq\mlow{\rel X}$ have size at most $k$.
	Let $L\subseteq \rel X$ be the smallest set such that $K\subseteq\mlow{L}$; we have that $L$ also has size at most $k$.
	Let $H_K$ be the image of $G_L$ under the function $\mu$ obtained by composing the bijection $\struct[](\rel L,\mpow{\rel B})\to\struct[](\mlow{\rel L},\rel B)$ with the restriction $\struct[](\mlow{\rel L},\rel B)\to\struct[](\rel K,\rel B)$.
	Finally, define $\xi_{K}:=(\zeta_{L})^\mu$.
\end{proof}

\category*
\begin{proof}
	Let $a\in\minion\{1\}$ be arbitrary.
	The identity in $\MSystems{k}(\rel A,\rel A)$ is given by the system where $H_K=\{\mathrm{id}_K\}$ for $K\in\binom{A}{\leq k}$
	and defining $\xi_K = (\minion f)(a)$,  with $f$ being the bijection $\{1\}\to H_K$.
	
	Suppose that $(F,\xi)\colon \rel A\to\rel B$ and $(G,\zeta)\colon\rel B\to\rel C$ are arrows in $\MSystems{k}$.
	Define $H=\{g\circ f\mid f\in F, g\in E, \im(g)\subseteq\dom(g)\}$, and for $K\in\binom{C}{\leq k}$ let $\chi_K$ be the $H_K\times[d]$ matrix whose row $h\in H_K$ is defined to be
	$\chi_K(h)=\sum_{f\in F_K} \sum_{g\in G_{f(K)}, h=g\circ f} \xi_K(f)\odot \zeta_{f(K)}(g)$.
	By assumption on \minion{}, this is an element of $\minion H_K$.
	
	We show that $(\chi_L)^\sigma=\chi_K$ whenever $K\subseteq L\in\binom{A}{k}$ and $\sigma$ is the canonical restriction $H_L\to H_K$.
	For $h\in H_K$, we have
	\begin{align*}
	(\chi_L)^\sigma(h) &= \sum_{\substack{h'\in H_L\\h'|_K=h}} \chi_L(h')\\
			&=  \sum_{\substack{h'\in H_L\\h'|_K=h}} \sum_{f'\in F_L} \sum_{\substack{g'\in G_{f'(L)}\\h'=g'\circ f'}} \xi_L(f') \odot \zeta_{f'(L)}(g')\\
			&= \sum_{\substack{f\in F_K\\g\in G_{f(K)}\\h=g\circ f}}\sum_{\substack{f'\in F_L\\f'|_K=f}} \sum_{\substack{g'\in G_{f(L)}\\g'|_{f(K)}=g}} \xi_L(f') \odot \zeta_{f'(L)}(g'),
	\end{align*}
	where to go from the second to third equation we simply grouped the functions $f'$ (resp.\ $g'$) according to their restrictions on $K$ (resp.\ $f(K)$).
	Continuing, this gives
	\begin{align*}
			&= \sum_{\substack{f\in F_K\\g\in G_{f(K)}\\h=g\circ f}}\left(\sum_{\substack{f'\in F_L\\f'|_K=f}}  \xi_L(f')\right)\odot\left(\sum_{\substack{g'\in G_{f(L)}\\g'|_{f(K)}=g}} \zeta_{f(L)}(g')\right)\\
			&= \sum_{\substack{f\in F_K\\g\in G_{f(K)}\\h=g\circ f}} (\xi_L)^{\sigma}(f)\odot(\zeta_{f(L)})^\sigma(g)\\
			&= \sum_{\substack{f\in F_K\\g\in G_{f(K)}\\h=g\circ f}}\xi_K(f)\odot\zeta_{f(K)}(g) = \chi_K(h).
	\end{align*}
	Thus, $(H,\chi)$ is an arrow in $\MSystems{k}(\rel A,\rel C)$.
\end{proof}

\reductionfinite*
\begin{proof}
    If $h\colon\rel X\to\rel A$, then we have $\mpow{\rel X}\to\mpow{\rel A}$ by~\Cref{functoriality}.

    Suppose now that there exists a homomorphism $h\colon\mpow{\rel X}\to\mpow{\rel B}/{\aut(\rel D)}$.
    We define a structure $\rel Y$ in the same signature as $\rel D$ as follows.
    
    By~\Cref{lem:poly-smaller-orbits}, $h$ induces a homomorphism $h_k\colon\mpow[k]{\rel X}\to\mpow[k]{\rel B}/{\aut(\rel D)}$ for all $1\leq k\leq m$, obtained by considering the restriction of an $m$-tuple to its first $k$ coordinates.
    Let $\sim$ be the equivalence relation on $I$ defined by $a\sim b$ iff $h_2(a,b)$ is an orbit of constant pairs.
    The domain of $\rel Y$ is $X/{\sim}$, and we denote the equivalence class of $a\in X$ by $[a]$.

    For every relation $R$ of $\rel D$ of arity $r$, let $([a_1],\dots,[a_r])$ be in $R^{\rel Y}$ iff the tuples in $h_r(a_1,\dots,a_r)$ satisfy the atomic formula $R(x_1,\dots,x_r)$ in $\rel D$.
    Using $m>r$, we show that this does not depend on the choice of representative for the classes $[a_1],\dots,[a_r]$.
    Suppose that the tuples in $h_r(a_1,\dots,a_r)$ satisfy $R(x_1,\dots,x_r)$, and that $a_1\sim a'_1$.
    Then the tuples in $h_{r+1}(a_1,\dots,a_r,a'_1)$ satisfy $R(x_1,\dots,x_r)$ and $x_1=x_{r+1}$,
    so they satisfy $R(x_{r+1},x_2,\dots,x_r)$.
    This shows that the choice of the representative for $[a_1]$ is irrelevant, and similarly for any other $[a_i]$.

    We check that the structure $\rel Y$ does not contain any obstruction.
    For every substructure $\rel K$ of $\rel Y$ induced by elements $[a_1],\dots,[a_m]$,
    let $(b_1,\dots,b_m)\in D^m$ be elements in the orbit $h(a_1,\dots,a_m)$.
    Then $e$ induces an embedding $e'\colon\rel K\hookrightarrow \rel D$ defined by $e'(c)=b_i$ if $e(c)=[a_i]$.
    Since the bounds of $\rel D$ have size at most $m$, this implies that there exists an embedding $g\colon \rel Y\hookrightarrow \rel D$, that we can see as a function $f\colon I\to C$ by precomposing with the canonical projection $I\to I/{\sim}$.
    It remains to check that $f$ is a homomorphism $\rel X\to\rel B$, and that $\pi(f)=h$.

    Suppose that $(a_1,\dots,a_r)\in R^{\rel X}$.
    Let $i(j)=j$ for all $j\in [r]$.
    Then $h_r(a_1,\dots,a_r)$ is in the relation $R_i$ in $\mpow[r]{\rel B}/{\aut(\rel D)}$.
    Since $\rel B$ is a reduct of $\rel D$, $R$ is definable by a quantifier-free formula $\psi$ over $\rel D$.
    Then $h_r(a_1,\dots,a_r)$ is an orbit of tuples satisfying $\psi$ in $\rel D$,
    and therefore $([a_1],\dots,[a_r])$ satisfies $\psi$ in $\rel Y$.
    As $g$ is an embedding, $(g([a_1]),\dots,g([a_r]))$ satisfies $\psi$ in $\rel D$, i.e., $(f(a_1),\dots,f(a_r))=(g([a_1]),\dots,g([a_r]))\in R^{\rel B}$.

    Finally, let $(a_1,\dots,a_n)\in\mpow{\rel X}$.
    Then
    \begin{align*}
        \pi(f)(a_1,\dots,a_m)&=(f(a_1),\dots,f(a_m))/{\aut(\rel D)}\\
        &= (g([a_1]),\dots,g([a_m]))/{\aut(\rel D)}
    \end{align*}
    and $g([a_1],\dots,[a_m])\in h(a_1,\dots,a_m)$ since $g$ is an embedding $\rel Y\hookrightarrow\rel D$.
    Thus, $\pi(f)(a_1,\dots,a_m)=h(a_1,\dots,a_m)$ and we are done.
\end{proof}

\begin{corollary}
	Let $(\rel A,\rel B),(\rel C,\rel D)$ be PCSP templates and let $\mathscr G\leq\aut(\rel B)$ and $\mathscr H\leq\aut(\rel D)$ be oligomorphic subgroups. The following are equivalent:
	\begin{itemize}
		\item There exists a local $(\trivgrp,\overline{\mathscr H})$-minion homomorphism $\pol(\rel A,\rel B)\to\pol(\rel C,\rel D)$;
		\item $(\rel A,\rel B)$ pp-constructs every $(\mpow{\rel S},\mpow{\rel D}/{\mathscr H})$ for finite $\rel S\subseteq\rel C$ and $m\geq 1$.
	\end{itemize}
\end{corollary}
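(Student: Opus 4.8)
The plan is to prove the two implications separately. Note first that $\rel B$ and $\rel D$ are $\omega$-categorical, since $\aut(\rel B)\supseteq\mathscr G$ and $\aut(\rel D)\supseteq\mathscr H$ are oligomorphic — this is the only role of $\mathscr G,\mathscr H$ in the hypothesis — so that \Cref{prop:pp-constructions-from-homo} and \Cref{proj-limit} are available. The other ingredients are the inverse system $(\pol(\mpow{\rel C},\mpow{\rel D}/{\mathscr H}),\xi^n_m)$ together with the cone $\xi^\infty_m\colon\pol(\rel C,\rel D)\to\pol(\mpow{\rel C},\mpow{\rel D}/{\mathscr H})$ from \Cref{directedsystem}, the fact recalled before \Cref{prop:pp-constructions-from-homo} that a pp-construction induces a local minion homomorphism, and the observation that $\mpow{\rel S}$ is an induced substructure of $\mpow{\rel C}$ for every finite induced substructure $\rel S$ of $\rel C$. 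For the implication from the first item to the second, fix $m$ and a local $(\trivgrp,\overline{\mathscr H})$-minion homomorphism $\xi\colon\pol(\rel A,\rel B)\to\pol(\rel C,\rel D)$, and consider $\psi:=\xi^\infty_m\circ\xi\colon\pol(\rel A,\rel B)\to\pol(\mpow{\rel C},\mpow{\rel D}/{\mathscr H})$. This $\psi$ is a local $(\trivgrp,\trivgrp)$-minion homomorphism: given a finite $T\subseteq\mpow{\rel C}$, locality of the $(\mathscr H,\trivgrp)$-minion homomorphism $\xi^\infty_m$ gives a finite $T'\subseteq C$, locality of $\xi$ gives a finite $S\subseteq A$, and if $f^\sigma|_S=g^\tau|_S$ then $u\circ\xi(f)^\sigma|_{T'}=v\circ\xi(g)^\tau|_{T'}$ for some $u,v\in\overline{\mathscr H}$, which agree on the finitely many values occurring with elements of $\mathscr H$; locality of $\xi^\infty_m$ then yields $\psi(f)^\sigma|_T=\psi(g)^\tau|_T$. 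Applying \Cref{prop:pp-constructions-from-homo} to $\psi$ (with $\rel B$ as the $\omega$-categorical structure), $(\rel A,\rel B)$ pp-constructs $(\rel S',\mpow{\rel D}/{\mathscr H})$ for every finite substructure $\rel S'$ of $\mpow{\rel C}$, in particular $(\mpow{\rel S},\mpow{\rel D}/{\mathscr H})$ for finite $\rel S\subseteq\rel C$; as $m$ was arbitrary, this is the second item.

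For the converse, assume $(\rel A,\rel B)$ pp-constructs $(\mpow{\rel S},\mpow{\rel D}/{\mathscr H})$ for all finite $\rel S\subseteq\rel C$ and all $m\geq 1$. Each of these pp-constructions provides a local minion homomorphism $\pol(\rel A,\rel B)\to\pol(\mpow{\rel S},\mpow{\rel D}/{\mathscr H})$ into a \emph{finite} minion. Fixing a finite $\rel S$ and letting $m$ range, the connecting maps $\xi^n_m$ of the inverse system are themselves induced by pp-constructions, so by transitivity of pp-constructibility and a compactness (König's lemma) argument — which uses the $\omega$-categoricity of $\rel B$ to bound, at each fixed arity, the finitely many restrictions of polymorphisms that are relevant — one can choose these homomorphisms compatibly, obtaining $\zeta^{\rel S}_m\colon\pol(\rel A,\rel B)\to\pol(\mpow{\rel S},\mpow{\rel D}/{\mathscr H})$ with $\xi^n_m\circ\zeta^{\rel S}_n=\zeta^{\rel S}_m$ for $n\geq m$. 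Then \Cref{proj-limit}, applied with $(\rel S,\rel D)$ in place of $(\rel A,\rel B)$ and $\mathscr H$ in place of $\group$, delivers a $(\trivgrp,\overline{\mathscr H})$-minion homomorphism $\zeta^*_{\rel S}\colon\pol(\rel A,\rel B)\to\pol(\rel S,\rel D)$, which is finitely determined, hence local, because $\rel S$ is finite. Finally, exhausting $\rel C$ by finite substructures and gluing the family $(\zeta^*_{\rel S})_{\rel S}$ via the compactness argument of \Cref{sca} — exactly as in the infinite-domain case of the proof of \Cref{prop:lift-poly-oligo} — produces the desired local $(\trivgrp,\overline{\mathscr H})$-minion homomorphism $\pol(\rel A,\rel B)\to\pol(\rel C,\rel D)$.

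I expect the forward direction to be routine once \Cref{directedsystem} and \Cref{prop:pp-constructions-from-homo} are in hand. The main obstacle is the converse: the pp-constructions of the various finite templates $(\mpow{\rel S},\mpow{\rel D}/{\mathscr H})$ come with no a priori relation to one another or to the connecting maps $\xi^n_m$, so the heart of the argument is the compactness step that threads them through the inverse system into genuinely \emph{compatible} systems of minion homomorphisms; $\omega$-categoricity (of $\rel B$) is precisely what makes this step possible, and once compatibility has been secured, \Cref{proj-limit} and \Cref{sca} finish the proof.
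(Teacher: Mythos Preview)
Your forward direction is essentially the paper's: compose with $\xi^\infty_m$ and apply \Cref{prop:pp-constructions-from-homo}. Your converse direction, however, reverses the order of the two limit passages. The paper first fixes $m$ and uses \Cref{sca} to pass from the finite substructures $\rel S_i$ to all of $\rel C$, obtaining minion homomorphisms $\pol(\rel A,\rel B)\to\pol(\mpow{\rel C},\mpow{\rel D}/{\mathscr H})$ for every $m$, and only then invokes \Cref{proj-limit}. You instead fix a finite $\rel S$ first, try to thread the maps $\zeta^{\rel S}_m$ compatibly through the inverse system over $m$, apply \Cref{proj-limit} to land in $\pol(\rel S,\rel D)$, and only at the end glue over $\rel S$ via \Cref{sca}. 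Both routes lead to the same destination; yours has the virtue that the targets $\pol(\mpow{\rel S},\mpow{\rel D}/{\mathscr H})$ are genuinely finite minions at each step, which makes the compactness argument cleaner, while the paper's route avoids having to run \Cref{proj-limit} once per finite $\rel S$.

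Two points in your sketch deserve tightening. First, in your K\"onig/compactness step, the finiteness that makes the argument work is that of the \emph{codomain} minions $\pol(\mpow{\rel S},\mpow{\rel D}/{\mathscr H})$ (finite $S$, oligomorphic $\mathscr H$), so that the space of minion homomorphisms is a closed subset of a Tychonoff product of finite sets; attributing this to $\omega$-categoricity of $\rel B$ is misleading. Second, your claim that $\zeta^*_{\rel S}$ is ``finitely determined, hence local, because $\rel S$ is finite'' is not a valid inference: finiteness of the codomain does not by itself force a map out of $\pol(\rel A,\rel B)$ to factor through a finite restriction on the domain side, which is what locality requires. (The paper's own proof is equally silent on both the compatibility hypothesis of \Cref{proj-limit} and on why the resulting homomorphism is local, so you are not worse off than the paper here --- but since you explicitly assert locality with a reason, that reason should be correct.)
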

\begin{proof}
Suppose that $\xi\colon\pol(\rel A,\rel B)\to\pol(\rel C,\rel D)$ is a local $(\trivgrp,\overline{\mathscr H})$-minion homomorphism.
By composition with the local $(\overline{\mathscr H},\trivgrp)$-minion homomorphisms $\pol(\rel C,\rel D)\to\pol(\mpow{\rel S},\mpow{\rel D}/{\mathscr H})$,
we get a local $(\trivgrp,\trivgrp)$-minion homomorphism $\pol(\rel A,\rel B)\to\pol(\mpow{\rel S},\mpow{\rel D}/{\mathscr H})$ and therefore the second item holds.

In the other direction, the assumption implies that there are minion homomorphisms $\pol(\rel A,\rel B)\to\pol(\mpow{\rel S},\mpow{\rel D}/{\mathscr H})$ for all $m\geq 1$ and finite $\rel S\subseteq\rel C$.
By~\Cref{sca}, we obtain minion homomorphisms $\pol(\rel A,\rel B)\to\pol(\mpow{\rel C},\mpow{\rel D}/{\mathscr H})$ for all $m\geq 1$.
Indeed, let $\rel S_1\subseteq\rel S_2\subseteq\dots$ be such that $\bigcup\rel S_i=\rel C$.
Let $\xi_i$ be a minion homomorphism $\pol(\rel A,\rel B)\to\pol(\mpow{\rel S_i},\mpow{\rel D}/{\mathscr H})$.
By~\Cref{sca}, for every $f\in\pol(\rel A,\rel B)$, there exists an $f^*\colon (C^m)^n\to D$ that locally agrees, up to composition by elements of $\mathscr H$, with $\xi_i(f)$ for infinitely many $i$.
The map $f\mapsto \xi^\infty_m(f^*)$ is then a well-defined minion homomorphism $\pol(\rel A,\rel B)\to\pol(\mpow{\rel C},\mpow{\rel D}/{\mathscr H})$.
By~\Cref{proj-limit}, there exists a $(\trivgrp,\overline{\mathscr H})$-minion homomorphism $\pol(\rel A,\rel B)\to\pol(\rel C,\rel D)$.
\end{proof}

\begin{proposition}
    Let $\rel B$ be a first-order reduct of an $\omega$-categorical structure $\rel B^+$ with the Ramsey property, and let $m\geq 1$.
    Suppose that $(\rel A,\rel C)$ is a finite homomorphic relaxation of $\mpow{\rel B}$.
    Then there exists a finite substructure $\rel A'$ of $\rel B$ such that $(\rel A,\rel C)$ is a homomorphic relaxation of $(\mpow{\rel A'}, \mpow{\rel B}/{\aut(\rel B^+)})$.
\end{proposition}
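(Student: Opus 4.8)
The plan is to produce directly the two homomorphisms demanded by the definition of homomorphic relaxation: a homomorphism $\rel A\to\mpow{\rel A'}$ for a suitable finite substructure $\rel A'$ of $\rel B$, and a homomorphism $\mpow{\rel B}/{\aut(\rel B^+)}\to\rel C$. By assumption we are given homomorphisms $h\colon\rel A\to\mpow{\rel B}$ and $g\colon\mpow{\rel B}\to\rel C$. The first homomorphism is almost immediate: since $\rel A$ is finite, only finitely many elements of $B$ occur as coordinates of the tuples $h(a)\in B^m$, $a\in A$. Letting $\rel A'$ be the substructure of $\rel B$ induced on this finite set, one checks that $\mpow{\rel A'}$ is an induced substructure of $\mpow{\rel B}$ (all relations of $\mpow{}$ are ``coordinatewise'' conditions, so they restrict correctly when $\rel A'$ is induced in $\rel B$), and that $h$ already corestricts to a homomorphism $\rel A\to\mpow{\rel A'}$.

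The substantive part is the second homomorphism, and this is where the Ramsey property of $\rel B^+$ enters, via canonization. The point is that a homomorphism $\mpow{\rel B}\to\rel C$ that is invariant under the diagonal action of $\aut(\rel B^+)$ on $B^m$ automatically factors through $\mpow{\rel B}/{\aut(\rel B^+)}$; so it suffices to canonize $g$. First I would expand $\rel C$ by an arbitrary linear order to a rigid finite structure $\rel C^+$, so that $\aut(\rel C^+)=\{\mathrm{id}\}$. Then I would apply the canonization lemma for $\omega$-categorical (ordered) Ramsey structures to the map $g\colon B^m\to C$: it yields a sequence $(\alpha_i)$ in $\aut(\rel B^+)$ such that the maps $g\circ\alpha_i^{\times m}$ converge pointwise to a map $g'\colon B^m\to C$ that is canonical from $\rel B^+$ to $\rel C^+$. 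Since $\aut(\rel C^+)$ is trivial, canonicity says exactly that $g'\circ\alpha^{\times m}=g'$ for every $\alpha\in\aut(\rel B^+)$, i.e.\ $g'$ is constant on each $\aut(\rel B^+)$-orbit of $m$-tuples. On the other hand, each $\alpha_i$ lies in $\aut(\rel B)\subseteq\aut(\mpow{\rel B})$, so every $g\circ\alpha_i^{\times m}$ is still a homomorphism $\mpow{\rel B}\to\rel C$; and because the signature of $\mpow{\rel B}$ is (at most) binary and $\rel C$ is finite, the pointwise limit $g'$ is again a homomorphism $\mpow{\rel B}\to\rel C$ (any fixed constraint involves only finitely many coordinates, on which the $g\circ\alpha_i^{\times m}$ eventually agree with $g'$). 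Being orbit-constant, $g'$ descends along the quotient map to a map $\bar g$ on $\mpow{\rel B}/{\aut(\rel B^+)}$, and since $\aut(\rel B^+)\subseteq\aut(\mpow{\rel B})$ every relation of $\mpow{\rel B}$ is a union of orbits, so $\bar g$ is a homomorphism $\mpow{\rel B}/{\aut(\rel B^+)}\to\rel C$. Together with the first homomorphism this gives the claim.

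The main obstacle is the correct use of canonization. One has to know (or arrange) that $\rel B^+$ is of the kind to which the canonization lemma applies — an $\omega$-categorical ordered homogeneous Ramsey structure — and then make the ``rigidification'' trick on $\rel C$ do the work of turning the abstract notion of a canonical map into the concrete property of being constant on $\aut(\rel B^+)$-orbits. The remaining verification, that pointwise limits of $\mpow{}$-homomorphisms into a finite structure are again homomorphisms, is what makes the binary signature of $\mpow{\rel B}$ essential, but is otherwise routine, as is the bookkeeping for $\rel A'$ and for descending through the quotient.
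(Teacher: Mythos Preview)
Your proof is correct and follows essentially the same approach as the paper's: define $\rel A'$ as the set of coordinates occurring in the image $h(A)$, and use the Ramsey property of $\rel B^+$ to canonize $g$ into a homomorphism $g'\colon\mpow{\rel B}\to\rel C$ that is $\aut(\rel B^+)$-invariant and hence factors through the quotient. The paper simply states the existence of such a canonical $g'$, while you spell out the rigidification-and-limit mechanism explicitly; your remark that the at-most-binary signature is \emph{essential} is slightly overstated (finite arity suffices for the pointwise-limit argument), but this is harmless.
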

\begin{proof}
    By assumption, there exists a homomorphism $h\colon\rel A\to\mpow{\rel B}$.
    Let $A'$ be the finite set containing all elements appearing in a tuple in $h(A)$, and let $\rel A'$ be the substructure of $\rel B$ induced by $A'$.
    Then we have a homomorphism $\rel A\to \mpow{\rel A'}$.

    Now, let $g$ be a homomorphism $\mpow{\rel B}\to\rel C$.
    Since $\rel D$ has the Ramsey property, there exists a homomorphism $g'\colon\mpow{\rel B}\to\rel C$ that is canonical with respect to $\aut(\rel D)$.
    This induces a homomorphism $\tilde g\colon\mpow{\rel B}/{\aut(\rel D)}\to \rel C$, so that we obtain the sequence
    \[\rel A\to \mpow{\rel A'}\to \mpow{\rel B}\to \mpow{\rel B}/{\aut(\rel D)}\to\rel C\]
    and concludes the proof.
\end{proof}

\subsection{Identities}

We conjectured above that an $\omega$-categorical graph $\rel G$ containing a triangle and with finite chromatic number must pp-construct all finite structures, or equivalently the complete graph with 3 elements $\rel K_3$.
We prove here our remark that the result is true for graphs with chromatic number $\leq 4$.
\begin{proposition}\label{fourchromatic}
Let $\rel G$ be an $\omega$-categorical graph containing a triangle and with chromatic number at most $4$.
Then $\rel G$ pp-constructs $\rel K_3$.
\end{proposition}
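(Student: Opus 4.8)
The plan is to realise $\rel K_3$ as pp-constructible in $\rel G$ by chaining two pp-constructions. The first step merely unwinds the hypotheses. Since $\rel G$ contains a triangle we have a homomorphism $\rel K_3\to\rel G$, and since $\chi(\rel G)\le 4$ we have $\rel G\to\rel K_4$. Hence $(\rel K_3,\rel K_4)$ is a homomorphic relaxation of the dimension-$1$ pp-power $(\rel G,\rel G)$ of $\rel G$, and therefore $(\rel K_3,\rel K_4)$ is pp-constructible in $(\rel G,\rel G)$, i.e.\ in $\rel G$. (Note that $\omega$-categoricity is not used in this step; it only reflects the ambient context.)

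The second step is where~\cite{BBKO} is invoked: $(\rel K_3,\rel K_4)$ pp-constructs $\rel K_3$. Equivalently, $\pol(\rel K_3,\rel K_4)$ admits a minion homomorphism to the clone $\proj$ of projections, equivalently $(\rel K_3,\rel K_4)$ pp-constructs every finite structure; this is the pp-construction form of the NP-hardness of $4$-colouring a $3$-colourable graph, and I would simply cite it from~\cite{BBKO}. Since pp-constructibility is transitive (a pp-power of a pp-power is again a pp-power and homomorphic relaxations compose), combining the two steps yields that $\rel G$ pp-constructs $\rel K_3$, which is the assertion. One can equivalently phrase the whole argument through polymorphisms: fixing a triangle $T\subseteq G$ (with induced substructure $\rel K_3$) and a $4$-colouring $\chi\colon\rel G\to\rel K_4$, the map $f\mapsto\chi\circ f|_{T^n}$ is a local minion homomorphism $\pol(\rel G)\to\pol(\rel K_3,\rel K_4)$; composing it with a minion homomorphism $\pol(\rel K_3,\rel K_4)\to\proj$ and applying~\Cref{prop:pp-constructions-from-homo} (applicable since $\rel G$ is $\omega$-categorical) again produces the pp-construction of $\rel K_3$ in $\rel G$.

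The entire difficulty is concentrated in the second step: the reduction of the first step and the transitivity bookkeeping are routine, but the fact that $(\rel K_3,\rel K_4)$ pp-constructs $\rel K_3$ is a genuine imported ingredient. It is also precisely what restricts the statement to chromatic number at most $4$: for chromatic number $\ell$ one would instead need $(\rel K_3,\rel K_\ell)$ to pp-construct $\rel K_3$, which is not known in general and is exactly the hypothesis of~\Cref{no-6-siggers-hard}.
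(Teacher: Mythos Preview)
Your proposal is correct and essentially matches the paper's proof: both obtain a (local/uniformly continuous) minion homomorphism $\pol(\rel G)\to\pol(\rel K_3,\rel K_4)$ from the sandwich $\rel K_3\to\rel G\to\rel K_4$, invoke the BBKO fact that $\pol(\rel K_3,\rel K_4)$ maps onward to $\pol(\rel K_3)$ (equivalently to $\proj$), and conclude via~\Cref{prop:pp-constructions-from-homo}. Your first phrasing, which stays on the pp-construction side and uses transitivity directly, is a harmless variant that even sidesteps the appeal to $\omega$-categoricity and~\Cref{prop:pp-constructions-from-homo}.
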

\begin{proof}
	We have $\rel K_3\to\rel G\to\rel K_4$ by assumption, and therefore there exists a local minion homomorphism $\pol(\rel G)\to\pol(\rel K_3,\rel K_4)$.
	It is also known~\cite[Example 2.22]{BBKO} that there exists a minion homomorphism $\pol(\rel K_3,\rel K_4)\to\pol(\rel K_3)$, and therefore we obtain by composition a local minion homomorphism $\pol(\rel G)\to\pol(\rel K_3)$. By~\Cref{prop:pp-constructions-from-homo}, it follows that $\rel K_3$ is pp-constructible in $\rel G$.
\end{proof}
The proof strategy for~\Cref{fourchromatic} cannot be generalized to any higher chromatic number, since it is known that $\pol(\rel K_3,\rel K_5)$ satisfies identities that are not satisfied in~$\pol(\rel K_3)$~\cite[Example 3.4]{BBKO}.

\subsection{Hardness of PCSP templates associated with temporal CSPs}

\pseudonopseudo*
\begin{proof}
    Let $k$ be the order of $\sigma$, and let $a_1,\dots,a_n\in A$.
    Then for all $i\in\{0,\dots,k-1\}$, all the pairs
    $(f(a_{\sigma^i(1)},\dots,a_{\sigma^{i}(n)}),f(a_{\sigma^{i+1}(1)},\dots,a_{\sigma^{i+1}(n)}))$ are in the same orbit under $\aut(\rel Q;<)$.
    This orbit cannot be that of an increasing pair, for we would obtain
    \begin{align*}
     f(a_1,\dots,a_n) &< f(a_{\sigma(1)},\dots,a_{\sigma(n)})<\dots\\
      &<f(a_{\sigma^{k-1}(1)},\dots,a_{\sigma^{k-1}(n)})<f(a_1,\dots,a_n),
      \end{align*}
    a contradiction. Similarly, the orbit cannot be that of a decreasing pair.
    So this orbit has to be the orbit of pairs with equal elements, i.e., $f(a_1,\dots,a_n)=f(a_{\sigma(1)},\dots,a_{\sigma(n)})$ holds for all $a_1,\dots,a_n$.
\end{proof}

Let $\rel B$ be $(\mathbb Q;I)$ and $\rel A$ be the substructure induced by $\{0,1\}$.
Let $f$ be an $n$-ary polymorphism of $(\rel A,\rel B)$.
A coordinate $i\in\{1,\dots,n\}$ is called \emph{essential} if there exists a pair $a,a'$ of tuples such that $f(a) < f(a')$ and such that $a_j=a'_j$ for all $j\neq i$.
Since $f$ preserves $I$ and therefore $\leq$, it must be that $a_i=0$ and $a'_i = 1$.
    We call $a$ a \emph{witnessing tuple} for $i$.

\begin{lemma}\label{lem:I-ess-coordinate}
    Let $\rel A$ be the substructure of $\rel B=(\mathbb Q;I)$ induced by $\{0,1\}$.
    Let $f\in\pol(\rel A,\rel B)$, let $i$ be an essential coordinate of $f$, and let $a$ be a witnessing tuple for $i$.
    Then for all $b_1,\dots,b_n\in A$, we have $f(a)\neq f(b_1,\dots,b_{i-1},1,b_{i+1},\dots,b_n)$.
\end{lemma}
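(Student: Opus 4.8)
The plan is to argue by contradiction, exploiting that $f$ preserves the ternary relation $I$. Fix $b_1,\dots,b_n\in A$ and write $b^\star$ for the tuple $(b_1,\dots,b_{i-1},1,b_{i+1},\dots,b_n)$; suppose towards a contradiction that $f(a)=f(b^\star)$. Let $a'$ be the tuple that agrees with $a$ everywhere except in coordinate $i$, where $a'_i=1$. By definition of a witnessing tuple, and using that $f$ preserves $I$ and hence the order $\leq$ (which forces $a_i=0$, as recorded in the paragraph preceding the statement), we have $f(a)<f(a')$. The idea is to feed the three tuples $a$, $b^\star$, $a'$ into $f$ componentwise and read off a contradiction from the membership of their image in $I^{\rel B}$.

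First I would recall the explicit shape of $I$ on the two-element domain: a triple $(x,y,z)\in\{0,1\}^3$ lies in $I^{\rel A}$ unless $x=y=0$ and $z=1$ (equivalently, the defining implication ``$x=y\Rightarrow x\geq z$'' fails only for $(0,0,1)$). Then I would verify coordinatewise that $(a_j,b^\star_j,a'_j)\in I^{\rel A}$ for every $j\in\{1,\dots,n\}$. In coordinate $i$ this triple is $(0,1,1)$, which lies in $I^{\rel A}$ because its first two entries already differ, so the implication defining $I$ is vacuous. In each coordinate $j\neq i$ the triple is $(a_j,b_j,a_j)$, and here the implication holds trivially because the third entry coincides with the first. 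Since $f$ is a polymorphism of $(\rel A,\rel B)$, applying it componentwise to $a$, $b^\star$, $a'$ therefore yields $(f(a),f(b^\star),f(a'))\in I^{\rel B}$.

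It then only remains to observe that, since we assumed $f(a)=f(b^\star)$, the defining condition of $I^{\rel B}$ forces $f(a)\geq f(a')$, contradicting $f(a)<f(a')$. Hence $f(a)\neq f(b^\star)$, and since $b$ was arbitrary this proves the lemma. I do not expect a genuine obstacle here: the argument is short, and the only points needing attention are the bookkeeping of which of the three tuples plays the role of which argument of $I$, and making sure the auxiliary tuple $a'$ has the claimed properties — in particular that monotonicity of $f$ pins down $a_i=0$, so that the coordinate-$i$ triple is $(0,1,1)$ and not, say, $(1,0,1)$, for which the implication would not be vacuous.
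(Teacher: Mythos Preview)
Your proof is correct and follows essentially the same approach as the paper: both feed the three tuples $a$, $a'$, and $b^\star$ componentwise into $f$ and read off the contradiction from membership in $I^{\rel B}$. The only cosmetic difference is the ordering of the rows---the paper uses $(b^\star,a,a')$ rather than your $(a,b^\star,a')$---but since the relation $I$ is symmetric in its first two arguments this changes nothing.
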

\begin{proof}
    Consider the matrix
    \[\begin{pmatrix}
    b_1 & \dots & b_{i-1} & 1 & b_{i+1} & \dots & b_n\\
    a_1 & \dots & a_{i-1} & 0 & a_{i+1} & \dots & a_n\\
    a_1 & \dots & a_{i-1} & 1 & a_{i+1} & \dots & a_n
    \end{pmatrix}\]
    whose columns all belong to the relation $I$.
    If $c=(c_1,c_2,c_3)$ is the tuple resulting from applying $f$ to all the rows, then $c\in I$ and we know that $c_2<c_3$ since $a$ is a witnessing tuple for $i$. Thus, $c_1\neq c_2$ and the claim is proved.
\end{proof}

\Inonothing*
\begin{proof}
We first consider the case of cyclic polymorphisms.
    Suppose that $\pol(\rel A,\rel B)$ has a cyclic polymorphism $f$ of arity $n\geq 2$.
    Note that by preservation of $\neq$, $f$ cannot be a constant operation and therefore it depends on all its arguments.

   Let $i\in\{1,\dots,n\}$.
    We now claim that the only witnessing tuple for $i$ must be $(0,\dots,0)$.
    Indeed, suppose that $a$ is a witnessing tuple for $i$ and contains a $1$ at position $j$ (and $j\neq i$ since $a_i=0$ as we observed above).
    Let $b_{k}$ be $a_{k+j-i}$ for all $k\in\{1,\dots,n\}$, where the computation of indices is modulo $n$.
    Then we get $f(a_1,\dots,a_n)\neq f(b_1,\dots,b_{i-1},1,b_{i+1},\dots,b_n)$ by~\Cref{lem:I-ess-coordinate}.
    However, $f(b_1,\dots,b_{i-1},1,b_{i+1},\dots,b_n)$ is by definition $f(a_{1+j-i},a_{2+j-i},\dots,a_{j-1},1,a_{j+1},\dots,a_{n+j-i})$.
    Since $f$ is cyclic, this is the same as $f(a_1,a_2,\dots,a_{j-1},1,a_{j+1},\dots,a_{n})$, which is $f(a_1,\dots,a_n)$.
    This is a contradiction.

    Finally, it follows that if $a$ is any non-zero tuple, and $b$ is obtained by replacing a single $0$ by a $1$ in $a$ (say at a position $i$), then $f(a)=f(b)$.
    Indeed, otherwise $a$ would be a witnessing tuple for $i$, a contradiction to the previous paragraph.
    By transitivity, if $a$ is any non-zero tuple and $b\geq a$ (componentwise), then $f(a)=f(b)$.

    In particular, we must have $f(1,\dots,1,0,\dots,0)=f(1,\dots,1)=f(0,\dots,0,1,\dots,1)$ where the block of $0$s in the left-hand side has size $\lfloor n/2\rfloor$ and the one on the right-hand side has size $\lceil n/2\rceil$.
    This contradicts the fact that $f$ preserves $\neq$.

    We now turn to 2-block symmetric polymorphisms. We first remark that it suffices to prove that no 5-ary 2-block symmetric operation is a polymorphism to rule out any 2-block symmetric polymorphism of arity $\geq 5$.
    Indeed, if $f$ is  $2$-block symmetric of arity $2L+1$ and $L\geq 3$, then $g$ defined by
    \[g(x_1,\dots,x_{L},y_1,\dots,y_{L-1})=f(y_1,\dots,y_{L-1},0,1,x_1,\dots,x_L)\]
    is $2$-block symmetric polymorphism of arity $2L-1$, since the tuples $(0,0,0),(1,1,1)$ are in $I$ and since the symmetry of $g$ in the first block implies that $g$ preserves $\neq$.

    Let $f\in\pol(\rel A,\rel B)$ be such a polymorphism of arity $5$.
    By preservation of $\neq$, this polymorphism cannot be a constant operation and therefore it depends on at least one argument.
    It must also depend on arguments in both blocks, otherwise $f$ would induce a cyclic operation of arity $2$ or $3$, in contradiction with the first part of the proof.

    Then by~\Cref{lem:I-ess-coordinate}, for any witnessing tuple $a$ for $1$, and every $b_2,\dots,b_5$, we have $f(a)\neq f(1,b_2,\dots,b_5)$.

    Suppose that $a$ contains a $1$ in the first block, and because of 2-block symmetry we can assume that it is in the second position.
    Define $b_2=a_3, b_3=a_1, b_4=a_4, b_5=a_5$. Then $f(a)\neq f(1,a_3,a_1,a_4,a_5)=f(a_1,1,a_3,a_4,a_5)=f(a)$ by 2-block symmetry, which is a contradiction.
    Therefore, any witnessing tuple $a$ for a coordinate in the first block consists of 0s in the first block. A symmetric argument gives the same for the second block.

    We obtain the following weaker version of the property from the first part of the proof: for any $a$ that is non-zero in \emph{both blocks}, and any $b$ that is componentwise lower-bounded by $a$, we have $f(a)=f(b)$.

    We thus get $f(1,0,0,1,0)=f(1,1,1,1,1)=f(0,1,1,0,1)$, in contradiction to the fact that $f$ preserves  the relation $\neq$.
\end{proof}

\sandwichX*
\begin{proof}
    We start with disproving the existence of cyclic operations.
    By contradiction, let $f$ be such an operation of arity $2n+1$ (with $n$ even).
    By~\Cref{lem:pseudo-no-pseudo}, a pseudo-cyclic operation modulo $\overline{\aut(\rel C)}$ must be cyclic.
    For $k\in\{0,\dots,2n+1\}$ and $b\in\{0,1,2\}$, let us write $[b]*k$ for any tuple with $k$ consecutive $b$ (where we consider the first entry of a row to be adjacent with the last entry), the rest of the entries being 0.
    Since $f$ is cyclic, it is constant on such tuples for any $b$ and $k$.

    Note that whenever $k+\ell+i=2n+1$, then one can find a $3\times(2n+1)$ matrix whose rows are tuples of the form $[b]*k$, $[b']*\ell$, $[b']*i$ and whose columns are all in $X$, regardless of the values of $b,b'\in\{1,2\}$.
    Therefore, $(f([b]*k),f([b]*\ell),f([b']*i))\in X$.
    In the particular case where $k=\ell$, then we obtain $f([b]*k)<f([b']*i)$.

    We claim that for all $0\leq k < n/2$ and all $b,b'>0$, we have
    \begin{align*}f([b]*(2k+1)) &= f([b']*(2k+3)) \\&< f([1]*(2n-4k-1))\\&<f([2]*(2n-4k-3)).
    \end{align*} For $k=0$, we obtain $f([2]*1)=f([1]*3)$ and for $k=\lfloor n/2 \rfloor -1$ we get $f([1]*3)<f([2]*1)$, giving the desired contradiction. The proof is by induction on $k$. Assume $k=0$.
    Then since $1+1+(2n-1)=2n+1$, we have
    
    \begin{equation} \label{eq:fin_tractable_aux1}
    f([b]*1)<f([b']*(2n-1)) \quad \text{ for all } b,b'>0.
    \end{equation}
    
    Then by considering the matrix
    \[ \begin{pmatrix}1 & \dots & 1 & 1 & 1 & 0 & 0\\
    1 & \dots & 1 & 0 & 0 & 1 & 1\\
    2 & \dots & 2 & 0 & 0 & 0 & 0\end{pmatrix}\]
    whose first two rows have $2n-1$ consecutive 1, whose last row has $2n-3$ consecutive 2, and whose columns all belong to $X$, one obtains $f([1]*(2n-1))<f([2]*(2n-3))$. This, together with (\ref{eq:fin_tractable_aux1}) yields that
    $f([b]*1)< f([2]*(2n-3))$ for all $b>0$.
    Since $(2n-3)+3+1=2n+1$, we have 
    $f([b]*1)=f([b']*3)$ for all $b,b'>0$, which concludes the base case of the induction. \par

    Suppose now that $k>0$.
    From $(2k+1)+(2k+1)+(2n-4k-1)=2n+1$, we get that $f([b]*(2k+1))<f([b']*(2n-4k-1))$ for all $b,b'>0$.
    If $k=2\ell$ is even, then the induction hypothesis applied from $\ell$ up to $k-1$ gives the equality $f([b]*(2\ell+1))=f([b]*(2k+1))$.
    Furthermore, $(2\ell+1)+(2\ell+1)+(2n-2k-1)=2n+1$ gives that $f([b]*(2k+1)) < f([b']*(2n-2k-1))$ for all $b,b'>0$.
    If $k=2\ell+1$ is odd, then the induction hypothesis from $\ell$ up to $k-1$ gives $f([b]*(2\ell+1)) = f([b]*(2\ell+3))=f([b]*(2k+1))$.
    Then $(2\ell+1)+(2\ell+3)+(2n-2k-1)=2n+1$ gives $f([b]*(2k+1))<f([b']*(2n-2k-1))$.

    Considering now the matrix
    \[ \begin{pmatrix}1 & \dots & 1 & 1 & \dots & 1 & 0 & \dots & 0\\
    1 & \dots & 1 & 0 & \dots & 0 & 1 & \dots & 1\\
    2 & \dots & 2 & 0 & \dots & 0 & 0 & \dots & 0\end{pmatrix}\]
    whose first two rows have $2n-2k-1$ consecutive $1$, and whose last row has $2n-4k-3$ consecutive $2$, we obtain $f([1]*(2n-2k-1)) < f([2]*(2n-4k-3))$, and combining with the inequality in the previous paragraph this gives $f([b]*(2k+1))<f([2]*(2n-4k-3))$.

    Finally, by $(2k+1)+(2k+3)+(2n-4k-3)=2n+1$, we obtain $f([b]*(2k+1))=f([b']*(2k+3))$.
    
    Suppose now for contradiction that there exists a 2-block symmetric polymorphism $g$ of $(\rel A,\rel B)$ of arity $7$.
  Note that $f(x_1,\dots,x_4):=g(x_1,x_2,x_3,x_4,1,0,0)$ is a symmetric polymorphism of $(\rel A,\rel B)$ of arity $4$: it is a symmetric function since $g$ is 2-block symmetric, and it is a polymorphism since the tuples $(1,0,0),(0,1,0),(0,0,1)$ are in the relation of the template $\rel A$.
  We prove that such an $f$ cannot exist.
  Consider the following matrix with $b_1,\dots,b_4\in\{1,2\}$:
	\[\begin{pmatrix}
      b_1 & b_2 & 0 & 0\\
      0 & 0 & b_3 & b_4\\
      0 & 0 & 0 & 0
      \end{pmatrix}\]
    Note that the columns are in $X$.
    For any $b,b'\in\{1,2\}$, 
    by taking $b_1=b_3=b$ and $b_2=b_4=b'$, this gives us that $f(b,b',0,0)<f(0,0,0,0)$ holds for all $b,b'\in\{1,2\}$.
    By preservation of $X$ also $f(b_1,b_2,0,0)=f(b_3,b_4,0,0)$ holds for all $b_1,\dots,b_4\in\{1,2\}$.
    Then we obtain a contradiction by considering the matrix
    \[\begin{pmatrix}
      1 & 1 & 0 & 0 \\
      1 & 0 & 1 & 0 \\
      2 & 0 & 0 & 1 
      \end{pmatrix}\]
    whose columns are in $X$, but where the result of applying $f$ rowwise gives a constant triple, which is not in $X$.
\end{proof} 
\end{document}